\newtheorem{prop}{Proposition}
\newtheorem{coro}{Corollary}
\theoremstyle{remark}
\newtheorem{exm}{Example}
\theoremstyle{definition}
\newtheorem{defin}{Definition}
\def\states{\mathfrak{S}}
\def\dens{\mathfrak{D}}
\def\chan{\mathfrak{C}}
\def\Ha{\mathcal{H}}
\def\conv{\mathit{conv}}
\def\aff{\mathit{aff}}
\def\ri{\mathit{ri}}
\def\Tr{\text{Tr}}
\def\I{\mathds{1}}
\def\tmin{\dot{\otimes}}
\def\tmax{\hat{\otimes}}
\def\treal{\tilde{\otimes}}
\def\<{\langle}
\def\>{\rangle}
\def\dH{\dim(\mathcal{H})}
\begin{document}
%
%
\title{Conditions for the compatibility of channels in general probabilistic theory and their connection to steering and Bell nonlocality}

\author{Martin Plávala}
\email{martin.plavala@mat.savba.sk}
\affiliation{Mathematical Institute, Slovak Academy of Sciences, \v Stef\' anikova 49, Bratislava, Slovakia}

\begin{abstract}
We derive general conditions for the compatibility of channels in general probabilistic theory. We introduce formalism that allows us to easily formulate steering by channels and Bell nonlocality of channels as generalizations of the well-known concepts of steering by measurements and Bell nonlocality of measurements. The generalization does not follow the standard line of thinking stemming from the Einstein-Podolsky-Rosen paradox, but introduces steering and Bell nonlocality as entanglement-assisted incompatibility tests. We show that all of the proposed definitions are, in the special case of measurements, the same as the standard definitions, but not all of the known results for measurements generalize to channels. For example, we show that for quantum channels, steering is not a necessary condition for Bell nonlocality. We further investigate the introduced conditions and concepts in the special case of quantum theory and we provide many examples to demonstrate these concepts and their implications.
\end{abstract}

\maketitle

\section{Introduction}
Incompatibility of measurements is the well-known quantum phenomenon that gives rise to steering and Bell nonlocality. Historically, the idea of measurement incompatibility dates back to Bohr's principle of complementarity. Steering was first described by Schr\"{o}dinger \cite{Schrodinger-steering} and Bell nonlocality was first introduced by Bell \cite{Bell-ineq}, both as a reply to the paradox of Einstein, Podolsky and Rosen \cite{EinsteinPodolskyRosen-paradox}. It is known that incompatibility of measurements is necessary and in some cases sufficient for both steering and Bell nonlocality, but the operational connection between incompatibility, steering and Bell nonlocality was so far not described in general terms that would also fit channels, not only measurements.

There was extensive research into properties of quantum incompatibility of measurements \cite{HeinosaariReitznerStano-compatibility, Hsu-compatibility}, quantum incompatibility of measurements and its noise robustness, or degree of compatibility \cite{HeinosaariKiukasReitzner-compatNoise, HeinosaariSchultzToigoZiman-maxInc}, connection of quantum incompatibility of measurements and steering \cite{BavarescoQuintinoGueriniMacielCavalcantiCuhna-compatSteering, CavalcantiSkrzypczyk-steering, UolaMoroderGuhne-steering, UolaBudroniGuhnePellonpa-steeringCompatibility, QuintinoVertresiBrunner-steering}, connection of quantum incompatibility of measurements and Bell nonlocality \cite{WolfPerezgarciaFernandez-measIncomp, BanikGaziGhoshKar-maxCHSH, HirschQuintinoBrunner-noBell, ErikaVertesi-noBell} and connection between steering and Bell nonlocality \cite{GirdharCavalcanti-steeringBell, CavalcantiSkrypczyk-compatibilitySteeringBell}, for a recent review see \cite{HeinosaariMiyaderaZiman-compatibility}. In recent years, the problems of incompatibility of measurements on channels \cite{SedlakReitznerChiribellaZiman-compatibility}, compatibility of channels \cite{HeinosaariMiyadera-compOfChan}, the connection of channel steering to measurement incompatibility \cite{BanikDasMajumdar-channelSteering} and incompatibility in general probabilistic theory \cite{FilippovHeinosaariLeppajarvi-GPTcompatibility, Banik-steering, StevensBusch-maxCHSH, Jencova-compatibility} were all studied.

The aim of this paper is to heavily generalize the recent results of \cite{Jencova-compatibility}, where compatibility, steering and Bell nonlocality of measurements were formulated using convex analysis and the geometry of tensor products. In this paper, we will generalize the ideas and results of \cite{Jencova-compatibility} for the case of two channels in general probabilistic theory. The generalizations are not straightforward and we will have to introduce several new operational ideas and definitions, e.g. we introduce the operational interpretation of direct products of state spaces and we define steering and Bell nonlocality as very simple entanglement-assisted incompatibility test, that boil down to the problem whether there exists a multipartite state with given marginal states.

During all of our calculations we will restrict ourselves to finite-dimensional general probabilistic theory and to only the case of two channels. We will restrict to only two channel just for simplicity, as one may easily formulate many of our results for more than two channels using the same operational ideas as we will present.

The paper is organized as follows: in Sec. \ref{sec:motivation} we describe our motivation for using general probabilistic theory. We provide several references to known applications and their connections to each other. In Sec. \ref{sec:GPT} we introduce general probabilistic theory. Note that in subsection \ref{subsec:GPT-direct} we introduce the operational interpretations of direct products in general probabilistic theory. In Sec. \ref{sec:compat} we define compatibility of channels and we derive a condition for compatibility of channels. In Sec. \ref{sec:meascompat} we show that our condition for compatibility of channels yields the condition for compatibility of measurements that was presented in \cite{Jencova-compatibility}. In Sec. \ref{sec:quantcompat} we derive specific conditions for the compatibility of quantum channels. In Sec. \ref{sec:prelude} we propose an idea for a test of incompatibility of channels, that will not work at first, but will eventually lead to both steering and Bell nonlocality. In Sec. \ref{sec:steering} we define steering by channels as one-side entanglement assisted incompatibility test and we derive some basic results. In Sec. \ref{sec:meassteering} we show that for the special case of measurements our definition of steering leads to the  standard definition of steering \cite{WisemanJonesDoherty-nonlocal} in the formalism of \cite{Jencova-compatibility}. In Sec. \ref{sec:quantsteering} we derive the specific conditions for steering by quantum channels, we show that every pair of incompatible channels may be used for steering of maximally entangled state and that there are entangled states that are not steerable by any pair of channels, among other results. In Sec. \ref{sec:nonlocal} we define Bell nonlocality of channels as a two-sided entanglement assisted incompatibility test and we derive some basic results, then in Sec. \ref{sec:measnonlocal} we show that, when applied to measurement, the general definition of Bell nonlocality yields the standard definition of Bell nonlocality \cite{WisemanJonesDoherty-nonlocal} in the formalism of \cite{Jencova-compatibility} and we also show that for measurements steering is a necessary condition for Bell nonlocality. In Sec. \ref{sec:quantnonlocal} we derive conditions for the Bell nonlocality of quantum channels, we formulate a generalized version of the CHSH inequality, we show that for such inequality Tsirelson bound \cite{Cirelson-bound} both holds and is reached, we show an example of violation of the generalized version of CHSH inequality and we build on the example from Sec. \ref{sec:quantsteering} of an entangled state not steerable by any pair of channels to show that, even though the state is not steerable by any pair of channels, it leads to Bell nonlocality, which shows that steering is not a necessary condition for Bell nonlocality for quantum channels. In Sec. \ref{sec:conclusions} we conclude the paper by presenting the many open questions and possible areas of research opened by our paper.

\section{Motivations for using general probabilistic theory} \label{sec:motivation}
There are few motivations to using general probabilistic theory. The first motivation is mathematical as general probabilistic theory is a unified framework capable of describing both classical and quantum theory, as well as other theories. In the current manuscript the mathematical motivation is (according to the personal opinion of the author) even stronger as some of the formulations of the presented ideas and some of the proofs of the presented theorems turn out to be clearer in the framework of general probabilistic theory.

The second motivation comes from foundations of quantum theory as general probabilistic theory provides insight into the structure of entanglement and incompatibility.

The third and most promising motivation comes from information theory. There were developed several models \cite{Barrett-infProc, Spekkens-toyTheory, PopescuRohrlich-PRbox} that have very interesting information-theoretic properties and that can be described by general probabilistic theory, albeit sometimes it needs to be extended even more \cite{JanottaLal-noRestriction}. Apart from the well-known results on the properties of Popescu-Rohrlich boxes \cite{vanDam-complexity, QuekShor-communication}, it was showed that there are theories in which one can search a $N$-item database in $O(\sqrt[3]{N})$ queries \cite{Aaronson-localVariables} and that there is a general probabilistic theory that can be simulated by a probabilistic classical computer that can perform Deutsch-Jozsa and Simon's algorithm \cite{JohanssonLarsson-classicalSimulation}.

The aforementioned results show that studying general probabilistic theory is interesting even from practical viewpoint and that it could have potential applications in information processing.

\section{Introduction to general probabilistic theory} \label{sec:GPT}
General probabilistic theory is a unified framework to describe the kinematics of different systems in a mathematically unified fashion. The may idea of general probabilistic theory is an operational approach to setting the axioms and then carrying forward using convex analysis. Useful bookns on convex analysis are \cite{Rockafellar-convex, BoydVandenberghe-convex}. The beautiful aspect of general probabilistic theory is that it is only little bit more general than dealing with the different systems on their own, but we do not have to basically rewrite the same calculations over and over again for different theories.

During our calculations we will use two recurring examples, one will be finite-dimensional classical theory and other will be finite-dimensional quantum theory. The finite-dimensional classical theory is closely tied to the known results about incompatibility, steering and Bell nonlocality of measurements and we will mainly use it to verify that the definitions we will propose are, in the special case of measurements, the same as the known definitions. The quantum theory is our main concern as this is the theory we are mostly interested in. Some results, that we will only prove for quantum theory, may be generalized for general probabilistic theory, but we will limit the generality of our calculations to make them more understandable to readers that are not so far familiar with general probabilistic theory.

Given that we will work with many different spaces, their duals, their tensor products and many isomorphic sets, all isomorphisms will be omitted unless explicitly stated otherwise.

\subsection{The state space and the effect algebra of general probabilistic theory}
There are two central notions in general probabilistic theory: the state space that describes all possible states of the system and the effect algebra that describes the measurements on the system. We will begin our construction from the state space and then define the effect algebra, but we will show how one can go the other way and start from an effect algebra and obtain state space afterwards. We will restrict ourselves to finite-dimensional spaces and always use the Euclidean topology.

Let $V$ denote a real, finite-dimensional vector space and let $X \subset V$, then by $\conv(X)$ we will denote the convex hull of $X$, by $\aff(X)$ we will denote the affine hull of $X$. We will proceed with the definition of relative interior of a set $X \subset V$.
\begin{defin}
Let $X \subset V$, then the relative interior of $X$, denoted $\ri(X)$ is the interior of $X$ when it is considered as a subset of $\aff(X)$.
\end{defin}
For a more throughout discussion of relative interior see \cite[p. 44]{Rockafellar-convex}.

Let $K$ be a compact convex subset of $V$, then $K$ is a state space. The points $x \in K$ represent the states of some system and their convex combination is interpreted operationally, that is for $x, y \in K$, $\lambda \in [0, 1] \subset \mathbb{R}$ the state $\lambda x + (1-\lambda) y$ corresponds to having prepared $x$ with probability $\lambda$ and $y$ with probability $1-\lambda$.

To define measurements we have to be able to assign probabilities to states, that is we have to have a map $f: K \to [0, 1]$ such that, to follow the operational interpretation of convex combination, we have assign the convex combination of probabilities to the convex combination of respective states. In other words for $x, y \in K$, $\lambda \in [0, 1]$ we have to have
\begin{equation*}
f(\lambda x + (1-\lambda) y) = \lambda f(x) + (1-\lambda) f(y),
\end{equation*}
which means that $f$ is an affine function. Such functions are called effects because they correspond to assigning probabilities of measurement outcomes to states. We will proceed with a more formal definition of effects and of effect algebra.

Let $A(K)$ denote the set of affine functions $K \to \mathbb{R}$. $A(K)$ is itself a real linear space, moreover it is ordered as follows: let $f, g \in A(K)$, then $f \geq g$ if $f(x) \geq g(x)$ for every $x \in K$. There are two special functions $0$ and $1$ in $A(K)$, such that $0(x) = 0$ and $1(x) = 1$ for all $x \in K$.

The set $A(K)^+ = \{ f \in A(K): f \geq 0 \}$ is the convex, closed cone of positive functions. The cone $A(K)^+$ is generating, that is for every $f \in A(K)$ we have $f_{+}, f_{-} \in A(K)^+$ such that $f = f_{+} - f_{-}$, and it is pointed, that is if $f \geq 0$ and $-f \geq 0$, then $f = 0$.

Although we will provide a proper definition of measurement in subsection \ref{subsec:GPT-channels}, we will now introduce the concept of yes/no measurement, or two-outcome measurement, that will motivate the definition of the effect algebra. Our notion of measurement might seem different to the standard understanding and one may argue that what we will refer to as measurements are should be called entanglement-breaking maps, but this way of defining measurement is standard in general probabilistic theory, hence we will use it. A measurement is a procedure that assigns probabilities to possible outcomes based on the state that is measured. If we have only two outcomes and we know that the probability of the first outcome is $p \in [0 ,1]$, then, by normalization, the probability of the second outcome must be $1-p$. This shows that a two-outcome measurement needs to assign only probability to one outcome and the other probability follows.

Since assigning probabilities to states is a function $f: K \to [0, 1]$ and due to our operational interpretation of convex combination we want such function to be affine. Traditionally the functions that assign probabilities to states are called effects and the set of all effect is called effect algebra.
\begin{defin}
The set $E(K) = \{ f \in A(K): 0 \leq f \leq 1 \}$ is called the effect algebra.
\end{defin}
In general, one may define effect algebra in more general fashion, using the partially defined operation of addition and a unary operation $\perp$, that would in our case correspond to $f^\perp = 1 - f$, see \cite{FoulisBennett-effAlg} for a more throughout treatment.

Let $f \in E(K)$ then the two outcome measurement $m_f$ corresponding to the effect $f$ is the procedure that for $x \in K$ assigns the probability $f(x)$ to the first outcome and the probability $1-f(x)$ to the second outcome. Note that we did not mention any labels of the outcomes. Usually the outcomes are labeled yes and no, or $0$ and $1$, or $-1$ and $1$, but from operational perspective this does not matter.

We provide two standard examples of special cases of our definitions.
\begin{exm}[Classical theory]
In classical theory, the state space $K$ is a simplex, that is the convex hull of a set of affinely independent points $x_1, \ldots, x_n$. The special property of the simplex is that every point $x \in K$ can be uniquely expressed as convex combination of the points $x_1, \ldots, x_n$, due to their affine independence.
\end{exm}

\begin{exm}[Quantum theory]
Let $\Ha$ denote a finite-dimensional complex Hilbert space, let $B_h(\Ha)$ denote the real linear space of self-adjoint operators on $\Ha$, for $A \in B_h(\Ha)$ let $\Tr(A)$ denote the trace of the operator $A$ and let $A \geq 0$ denote that $A$ is positive semi-definite. We say that $A \leq B$ if $0 \leq B - A$. Let $B_h(\Ha)^+ = \{A \in B_h(\Ha): A \geq 0 \}$ denote the cone of positive semi-definite operators.

In quantum theory the state space is given as
\begin{equation*}
\dens_\Ha = \{ \rho \in B_h(\Ha): \rho \geq 0, \Tr(\rho) = 1 \}
\end{equation*}
which is the set of density operators on $\Ha$. The effect algebra $E(\dens_\Ha)$ is given as
\begin{equation*}
E(\dens_\Ha) = \{ M \in B_h(\Ha): 0 \leq M \leq \I \}.
\end{equation*}
The value of the effect $M \in E(\dens_\Ha)$ on the state $\rho \in \dens_\Ha$ is given as
\begin{equation*}
M(\rho) = \Tr( \rho M ).
\end{equation*}
\end{exm}

\subsection{The structure of general probabilistic theory}
This subsection will be rather technical, but we will introduce several mathematical results that we will use later on.

Let $x \in K$ and consider the map $\overline{x}: A(K) \to \mathbb{R}$, that to $f \in A(K)$ assigns the value $f(x)$. This is clearly a linear functional on $A(K)$. Moreover for $x, y \in K$, $\lambda \in [0, 1]$ we have
\begin{equation*}
\overline{\lambda x + (1-\lambda) y} = \lambda \overline{x} + (1-\lambda) \overline{y}
\end{equation*}
as the functions in $A(K)$ are affine by definition.  We conclude that the state space $K$ must be affinely isomorphic to some subset of the dual of $A(K)$. Since the aforementioned isomorphism is going to be extremely useful in later calculations we will describe it in more detail. Let $A(K)^*$ denote the dual of $A(K)$, that is the space of all linear functionals on $A(K)$. For $\psi \in A(K)^*$ and $f \in A(K)$ we will denote the value the functional $\psi$ reaches on $f$ as $\< \psi, f \>$. The dual cone to $A(K)^+$ is the cone $A(K)^{*+} = \{ \psi \in A(K)^*: \< \psi, f \> \geq 0, \forall f \in A(K)^+ \}$ that gives rise to the ordering on $A(K)^*$ given as follows: let $\psi, \varphi \in A(K)^*$, then $\psi \geq \varphi$ if and only if $(\psi - \varphi) \in A(K)^{*+}$, i.e. if $\psi - \varphi \geq 0$.

It is straightforward that the state space $K$ is isomorphic to a subset of the cone $A(K)^{*+}$, moreover it is straightforward to realize that the functionals isomorphic to $K$ must map the function $1 \in A(K)$ to the value $1$.

\begin{defin}
Let $\states_K = \{ \psi \in A(K)^{*+}: \< \psi, 1 \> = 1 \}$. We call $\states_K$ the state space of the effect algebra $E(K)$.
\end{defin}

It might be confusing at this point why we call $\states_K$ a state space, but this will be cleared by the following.

\begin{prop}
$\states_K$ is affinely isomorphic to $K$.
\end{prop}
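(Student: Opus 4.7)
The plan is to show that the evaluation map $\phi: K \to \states_K$ defined by $\phi(x) = \overline{x}$, where $\overline{x}(f) = f(x)$, is an affine bijection. Affinity is essentially built into the definition and was already noted in the paragraph preceding the proposition: the identity $\overline{\lambda x + (1-\lambda)y} = \lambda \overline{x} + (1-\lambda)\overline{y}$ follows from the affineness of every $f \in A(K)$. That $\phi(x)$ actually lands in $\states_K$ is a one-line check: $\<\overline{x}, f\> = f(x) \geq 0$ whenever $f \in A(K)^+$, and $\<\overline{x}, 1\> = 1$.

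For injectivity I would invoke the fact that $A(K)$ separates points of $K$. Since $K \subset V$ is a compact convex subset of a finite-dimensional real vector space, the restrictions to $K$ of the coordinate functionals on $V$ are already affine and separate any pair of distinct points. Hence $\overline{x} = \overline{y}$ forces $f(x) = f(y)$ for every $f \in A(K)$, and in particular $x = y$.

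The heart of the argument is surjectivity, and here I would proceed by contradiction. Suppose $\psi \in \states_K$ does not lie in $\phi(K)$. The image $\phi(K)$ is the continuous image of the compact set $K$ inside the finite-dimensional space $A(K)^*$, hence it is a compact convex set. By the hyperplane separation theorem there is a continuous linear functional on $A(K)^*$ strictly separating $\psi$ from $\phi(K)$. Because $A(K)$ is finite-dimensional, every continuous linear functional on $A(K)^*$ has the form $\eta \mapsto \<\eta, g\>$ for a unique $g \in A(K)$, so I obtain $g \in A(K)$ with $\<\psi, g\> > \max_{x \in K} g(x)$, where the maximum exists by compactness of $K$. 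Setting $\alpha = \max_{x \in K} g(x)$ and $h = \alpha \cdot 1 - g \in A(K)$, one has $h(x) \geq 0$ for all $x \in K$, i.e.\ $h \in A(K)^+$; but then $\<\psi, h\> = \alpha\<\psi, 1\> - \<\psi, g\> = \alpha - \<\psi, g\> < 0$, contradicting $\psi \in A(K)^{*+}$.

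The main obstacle I anticipate is conceptual rather than computational: one must be careful to exploit finite-dimensionality when identifying the continuous dual of $A(K)^*$ with $A(K)$, since it is precisely this identification that lets the separating functional be converted into a positivity violation for $\psi$. In infinite dimensions this step would require Riesz representation together with Hahn--Banach extension of positive functionals, but in the present finite-dimensional setting everything collapses to a single application of hyperplane separation.
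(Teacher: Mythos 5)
Your proof is correct and follows essentially the same route as the paper: the forward inclusion is the easy evaluation-map check, and the reverse inclusion (surjectivity) is obtained by hyperplane separation, which the paper simply delegates to the Hahn--Banach separation theorem with a citation to Asimow--Ellis rather than spelling out the argument. Your explicit construction of $h = \alpha \cdot 1 - g \in A(K)^+$ with $\< \psi, h \> < 0$ is exactly the standard way that cited argument goes, so there is nothing substantively different here.
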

\begin{proof}
It is clear that the map $x \to \overline{x}$ maps $K$ to a convex subset of $\states_K$. It is easy to show the inclusion of $\states_K$ in the image of $K$ using Hahn-Banach separation theorem, see \cite[Chapter 1, Theorem 4.3]{AsimowEllis} for a proof.
\end{proof}

We will omit the isomorphism between $K$ and $\states_K$, so for any $x, y \in K$, $\alpha \in \mathbb{R}$ we will write $\alpha x + y$ instead of $\alpha \overline{x} + \overline{y} \in A(K)^*$. Still, one must be careful when omitting this isomorphism, beacuse if $0 \in V$ denotes the zero vector and $0 \in K$, then $\overline{0} \in A(K)^*$ is not the zero functional as by construction we have $\< \overline{0}, 1 \> = 1$. We will do our best to avoid such possible problems by choosing appropriate notation.

There are two more result we will heavily rely on:
\begin{prop}
$\states_K$ is a base of $A(K)^{*+}$, that is for every $\psi \in A(K)^{*+}$, $\psi \neq 0$ there is a unique $x \in K$ and $\lambda \in \mathbb{R}$, $\lambda \geq 0$ such that $\psi = \lambda x$.
\end{prop}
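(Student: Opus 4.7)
The plan is to show that the constant function $1 \in A(K)$ is an order unit of $A(K)$ and deduce from this that $\langle \psi, 1 \rangle > 0$ for every nonzero $\psi \in A(K)^{*+}$; once that is in hand, normalization by $\langle \psi, 1 \rangle$ reduces the claim to the previous proposition identifying $\states_K$ with $K$.

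First I would verify that $1$ is an order unit, i.e.\ that for each $f \in A(K)$ there is $\alpha > 0$ with $-\alpha \cdot 1 \leq f \leq \alpha \cdot 1$. Since $K$ is compact and $f$ is affine, hence continuous, the image $f(K) \subset \mathbb{R}$ is bounded, so any $\alpha \geq \sup_{x \in K} |f(x)|$ works. Next, for a nonzero $\psi \in A(K)^{*+}$ I would show $\langle \psi, 1 \rangle > 0$: using the generating property of $A(K)^+$ and positivity of $\psi$, if $\langle \psi, 1 \rangle = 0$ then for any $f \in A(K)$ and any $\alpha$ as above we have $\pm \langle \psi, f \rangle \leq \alpha \langle \psi, 1 \rangle = 0$, forcing $\langle \psi, f \rangle = 0$ on all of $A(K)$ and hence $\psi = 0$, a contradiction.

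Having established $\lambda := \langle \psi, 1 \rangle > 0$, the functional $\lambda^{-1} \psi$ lies in $A(K)^{*+}$ and satisfies $\langle \lambda^{-1} \psi, 1 \rangle = 1$, so it belongs to $\states_K$. By the preceding proposition there is a unique $x \in K$ with $\lambda^{-1} \psi = \overline{x}$, giving the decomposition $\psi = \lambda x$. Uniqueness is then immediate: if $\psi = \lambda x = \mu y$ with $\lambda, \mu \geq 0$ and $x, y \in K$, applying both sides to the function $1$ yields $\lambda = \mu$, and then the affine isomorphism $K \cong \states_K$ forces $x = y$.

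The only nontrivial step is the order-unit argument that $\langle \psi, 1 \rangle > 0$ for nonzero $\psi \geq 0$; everything else is a bookkeeping application of the identification $K \cong \states_K$ already proved. The one subtlety to be careful about is the notational warning issued just before the statement: the vector $0 \in K$ (if present) is mapped to the functional $\overline{0}$ with $\langle \overline{0}, 1 \rangle = 1$, not to the zero functional, so ``$\psi \neq 0$'' in the statement must be read as ``$\psi$ is not the zero functional in $A(K)^*$''; with that reading the argument above is complete.
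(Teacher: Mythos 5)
Your proposal is correct and follows essentially the same route as the paper: the paper's proof also reduces everything to showing $\< \psi, 1 \> \neq 0$ for nonzero $\psi \geq 0$ (justified there by the remark that $1 \in \ri(A(K)^+)$, which is exactly the order-unit property you prove from compactness) and then normalizes by $\< \psi, 1 \>$. Your explicit treatment of uniqueness and of the order-unit step merely spells out what the paper leaves implicit.
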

\begin{proof}
Let $\psi \in A(K)^{*+}$, $\psi \neq 0$, then $\< \psi, 1 \> \neq 0$ as if $\< \psi, 1 \> = 0$ and $\psi \geq 0$, then $\psi  = 0$, because $1 \in \ri (A(K)^+)$. Let $\psi' = \frac{1}{\< \psi, 1 \>} \psi$. It is straightforward that $\psi' \in \states_K$.
\end{proof}

\begin{prop}
$A(K)^{*+}$ is a generating cone in $A(K)^*$, that is for every $\psi \in A(K)$ there are $\varphi_+, \varphi_{-} \in A(K)^{*+}$ such that $\psi = \varphi_+ - \varphi_-$.
\end{prop}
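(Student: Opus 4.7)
The plan is to reduce the statement to the fact that the functionals $\overline{x}$ for $x\in K$ already span $A(K)^*$ linearly, and then split any expansion into its positive and negative parts. First I would note that $A(K)$ is finite-dimensional: since $K\subset V$ with $V$ finite-dimensional, any affine function on $K$ extends uniquely to an affine function on $\aff(K)$, and affine functions on a finite-dimensional affine space form a finite-dimensional vector space. Consequently $A(K)^*$ is finite-dimensional as well, which means a subspace coincides with $A(K)^*$ as soon as its annihilator in $A(K)$ is trivial.

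Next I would apply precisely this criterion to the linear span $L=\mathrm{span}\{\overline{x}: x\in K\}\subset A(K)^*$. If $f\in A(K)$ satisfies $\<\overline{x},f\>=0$ for every $x\in K$, then $f(x)=0$ on all of $K$, and since elements of $A(K)$ are defined as functions on $K$ this forces $f=0$. Hence the annihilator of $L$ is $\{0\}$ and therefore $L=A(K)^*$. This is the only place finite-dimensionality is used, and it is the main conceptual point; in infinite dimensions one would instead need a density or Hahn--Banach argument to conclude the same.

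Finally, given $\psi\in A(K)^*$, the previous step yields a finite expression $\psi=\sum_i \alpha_i\,\overline{x_i}$ with $x_i\in K$ and $\alpha_i\in\mathbb{R}$. Setting
\begin{equation*}
\varphi_+=\sum_{\alpha_i\geq 0}\alpha_i\,\overline{x_i},\qquad
\varphi_-=\sum_{\alpha_i<0}(-\alpha_i)\,\overline{x_i},
\end{equation*}
each summand $\overline{x_i}$ lies in $\states_K\subset A(K)^{*+}$ by the preceding proposition, so both $\varphi_+$ and $\varphi_-$ are nonnegative combinations of elements of $A(K)^{*+}$ and thus belong to this cone (which is convex and closed under nonnegative scaling). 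Since by construction $\psi=\varphi_+-\varphi_-$, this completes the proof. No obstacle beyond the finite-dimensional spanning step arises; the remainder is a bookkeeping decomposition analogous to writing a real number as the difference of its positive and negative parts.
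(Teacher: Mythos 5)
Your proof is correct. It does, however, take a different route from the paper: the paper disposes of the statement in one line by invoking the standard convex-analysis duality fact that the dual cone of a pointed cone is generating (citing Boyd--Vandenberghe, and relying on the earlier observation that $A(K)^+$ is pointed), whereas you argue directly that the evaluation functionals $\overline{x}$, $x \in K$, span $A(K)^*$ — via the trivial-annihilator criterion in finite dimensions — and then split an arbitrary expansion $\psi = \sum_i \alpha_i \overline{x_i}$ into its positive- and negative-coefficient parts. The two arguments are close relatives: your key step, that $f(x)=0$ for all $x \in K$ forces $f=0$, is precisely the pointedness of $A(K)^+$ in disguise, so you are in effect re-proving the special case of the duality theorem that the paper cites. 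What your version buys is self-containedness and an explicit decomposition: it shows the slightly stronger fact that $A(K)^{*+}$ is generated as a cone by $\states_K$ alone (consistent with $\states_K$ being a base), at the cost of a few more lines; the paper's version buys brevity and places the statement within a standard textbook framework. One cosmetic remark: the proposition's wording ``for every $\psi \in A(K)$'' should read $\psi \in A(K)^*$, and you correctly worked with the intended statement.
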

\begin{proof}
The result follows from the fact that $A(K)^+$ is a pointed cone, see \cite[Section 2.6.1]{BoydVandenberghe-convex}.
\end{proof}

\subsection{Tensor products of state spaces and effect algebras} \label{subsec:GPT-tensor}
Tensor products are a way to describe joint systems of several other systems. There are several approaches to introducing a tensor product in general probabilistic theory. There is a category theory based approach \cite{KissingerUiljen-category} that is a viable way to introduce the tensor products, but we will use a simpler, operational approach. Note that the state space of the joint system will be a compact convex subset of a real, finite-dimensional vector space as it itself must be a state space of some general probabilistic theory. Also keep in mind that describing a tensor product of state spaces $K_A$, $K_B$ is equivalent to describing the tensor product of the cones $A(K_A)^{*+}$, $A(K_B)^{*+}$. This is going to be useful as some things are easier to express in terms of the positive cones.

Let $V$, $W$ be real finite-dimensional vector spaces and let $v \in V$, $w \in W$. $v \otimes w$ will refer to the element of the algebraic tensor product $V \otimes W$, see e.g. \cite{Ryan-tensProd}. We will first describe the minimal and maximal tensor products of state spaces that set bounds on the real state space of the joint system. Note that when describing the joint state space of two state spaces or states of two systems, we will refer to them as bipartite state space or bipartite states.

Let $K_A$, $K_B$ denote two state spaces of Alice and Bob respectively. For every $x_A \in K_A$, $x_B \in K_B$ there must be a state of the joint system describing the situation that Alice's system is in the state $x_A$ and Bob's system is in the state $x_B$. We will denote such state $x_A \otimes x_B$ and we will call it a product state. Since the state space must be convex, the state space of the joint system must contain at least the convex hull of the product states. This leads to the definition of minimal tensor product.
\begin{defin}
The minimal tensor product of state spaces $K_A$ and $K_B$, denoted $K_A \tmin K_B$ is the compact convex set
\begin{equation*}
K_A \tmin K_B = \conv( \{ x_A \otimes x_B: x_A \in K_A, x_B \in K_B \} ).
\end{equation*}
\end{defin}
The bipartite states $y \in K_A \tmin K_B$ are also called separable states. For the positive cones we get
\begin{align*}
A(K_A \tmin K_B)^{*+} = \conv( \{ \psi_A \otimes \psi_B: &\psi_A \in A(K_A)^{*+}, \\
&\psi_B \in A(K_B)^{*+} \} ).
\end{align*}

\begin{exm}
In quantum theory, the minimal tensor product $\dens_\Ha \tmin \dens_\Ha$ is the set of all separable states, that is of all states of the form $\sum_{i=1}^n \lambda_i \rho_i \otimes \sigma_i$ for $n \in \mathbb{N}$ and $\rho_i \in \dens_\Ha$, $\sigma_i \in \dens_\Ha$, $0 \leq \lambda_i$ for $i \in \{1, \ldots, n\}$, $\sum_{i=1}^n \lambda_i = 1$.
\end{exm}

In a similar fashion, let $f_A \in E(K_A)$, $f_B \in E(K_B)$, then we can define a function $f_A \otimes f_B$ as the unique affine function such that for $x_A \in K_A$, $x_B \in K_B$ we have
\begin{equation*}
(f_A \otimes f_B) ( x_A \otimes x_B) = f_A(x_A) f_B(x_B).
\end{equation*}
This function is used in the most simple measurement on the joint system, such that Alice applies the two-outcome measurement $m_{f_A}$ and Bob applies the two outcome measurement $m_{f_B}$, so $f_A \otimes f_B$ must be an effect on the joint state space. This leads to the definition of the maximal tensor product.
\begin{defin}
The maximal tensor product of the state spaces $K_A$ and $K_B$, denoted $K_A \tmax K_B$, is defined as
\begin{align*}
K_A \tmax K_B = \{ &\psi \in A(K_A)^* \otimes A(K_B)^*: \forall f_A \in A(K_A)^+, \\
&\forall f_B \in A(K_B)^+, \< \psi, f_A \otimes f_B \> \geq 0 \} 
\end{align*}
\end{defin}
States in $K_A \tmax K_B \setminus K_A \tmin K_B$ are called entangled states. Equivalent definition, in terms of the positive cones would be
\begin{equation*}
A(K_A \tmax K_B)^{*+} = ( A(K_A)^{+} \tmin A(K_B)^{+} )^{*+}
\end{equation*}
where
\begin{align*}
A(K_A)^{+} \tmin A(K_B)^{+} = \conv( \{ f_A \otimes f_B: &f_A \in A(K_A)^+, \\
&f_B \in A(K_B)^+ \} ).
\end{align*}
As we see, the definition of tensor product of cones of positive functionals goes hand in hand with the definition of tensor product of cones of positive functions.

\begin{exm}
In quantum theory, the maximal tensor product of the cones $B_h(\Ha)^+ \tmax B_h(\Ha)^+$ is the cone of entanglement witnesses \cite[Section 6.3.1]{HeinosaariZiman-MLQT}, i.e. $W \in B_h(\Ha)^+ \tmax B_h(\Ha)^+$ if for every $\rho \in \dens_\Ha$, $\sigma \in \dens_\Ha$ we have $\Tr(W \rho \otimes \sigma) \geq 0$. Note that this does not imply the positivity of $W$.
\end{exm}

From the constructions it is clear that the state space of the joint system has to be a subset of the maximal tensor product and it has to contain the minimal tensor product. But there is no other specification of the state space of the joint system in general, it has to be provided by the theory we are working with.
\begin{defin}
We will call the joint state space of the systems described by the state spaces $K_A$ and $K_B$ the real tensor product of $K_A$ and $K_B$ and we will denote it $K_A \treal K_B$. We always have
\begin{equation*}
K_A \tmin K_B \subseteq K_A \treal K_B \subseteq K_A \tmax K_B.
\end{equation*}
\end{defin}

\begin{exm}
In quantum theory, the real tensor product of the state spaces is defined as the set of density matrices on the tensor product of the Hilbert spaces, that is
\begin{equation*}
\dens_\Ha \treal \dens_\Ha = \dens_{\Ha \otimes \Ha}.
\end{equation*}
\end{exm}

It is tricky to work with the tensor products in general probabilistic theory as the real tensor product is not always specified, or it may not be clear what it should be. We will always assume that every tensor product we need to be defined is defined. Moreover when working with a tensor product of more than two state spaces, say $K_A$, $K_B$, $K_C$ we will always assume that
\begin{equation*}
(K_A \treal K_B) \treal K_C = K_A \treal (K_B \treal K_C)
\end{equation*}
and we will simply write $K_A \treal K_B \treal K_C$. In the applications of general probabilistic theory to quantum and classical theory it will always be clear how to construct the needed tensor products and we consider this sufficient for us since we are mainly interested in the applications of our results.

We will state and prove a result about classical state spaces that we will use several times later on.
\begin{prop}
Let $S$ be a simplex with the extremal points $x_1, \ldots, x_n$, i.e. $S = \conv( \{ x_1, \ldots, x_n \} )$ and let $K$ be any state space, then we have
\begin{equation*}
S \tmin K = S \tmax K.
\end{equation*}
\end{prop}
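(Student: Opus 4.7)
The plan is to prove the nontrivial inclusion $S \tmax K \subseteq S \tmin K$, since $S \tmin K \subseteq S \tmax K$ always holds by construction (every product state lies in the maximal tensor product and the latter is convex).

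The key structural feature of a simplex that I would exploit is the existence of the ``indicator effects'' $e_1,\ldots,e_n \in A(S)$ defined by $e_i(x_j)=\delta_{ij}$. Because every point of $S$ has a unique convex decomposition in terms of $x_1,\ldots,x_n$, these $e_i$'s are well-defined affine functions, they belong to $A(S)^+$, they satisfy $\sum_i e_i = 1$, and $\{e_1,\ldots,e_n\}$ is a basis of $A(S)$ that is dual (under the pairing $\langle\overline{x_j},e_i\rangle=e_i(x_j)$) to the basis $\{\overline{x_1},\ldots,\overline{x_n}\}$ of $A(S)^*$.

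Given this dual-basis setup, I would decompose any $\psi \in A(S)^*\otimes A(K)^*$ uniquely as
\begin{equation*}
\psi = \sum_{i=1}^n x_i \otimes \varphi_i,
\end{equation*}
with $\varphi_i \in A(K)^*$ determined by $\varphi_i = (e_i \otimes \mathrm{id})(\psi)$. Now assume $\psi \in S \tmax K$. For any $g \in A(K)^+$, pairing $\psi$ with $e_i \otimes g$ and using $\langle\overline{x_j},e_i\rangle=\delta_{ij}$ gives
\begin{equation*}
\langle \psi, e_i \otimes g\rangle = \langle \varphi_i, g\rangle \geq 0,
\end{equation*}
since $e_i \in A(S)^+$ and $g \in A(K)^+$. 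Hence $\varphi_i \in A(K)^{*+}$ for each $i$. By the base property of $\states_K$ proved earlier in the paper, each nonzero $\varphi_i$ equals $\lambda_i y_i$ for a unique $\lambda_i \geq 0$ and $y_i \in K$, so $\psi = \sum_i \lambda_i\, x_i \otimes y_i$. Applying the normalization $\langle \psi, 1\otimes 1\rangle = \sum_i \lambda_i = 1$ (using $\sum_i e_i = 1$ on the $S$ side) identifies $\psi$ as a convex combination of product states, i.e.\ $\psi \in S \tmin K$.

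The only potential obstacle is verifying that the indicator effects $e_i$ really do lie in $A(S)^+$ and form the claimed dual basis, but this is immediate from the affine independence of the extremal points of a simplex and their unique convex decomposition property; everything else is a routine manipulation of the pairing between $A(S)^*\otimes A(K)^*$ and $A(S)\otimes A(K)$.
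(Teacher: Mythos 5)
Your proposal is correct and follows essentially the same route as the paper: decompose $\psi \in S \tmax K$ over the basis $x_1,\ldots,x_n$ of $A(S)^*$, pair with the dual-basis effects $e_i$ (the paper's $b_i \in E(S)$) tensored with positive functions on $K$ to conclude $\varphi_i \in A(K)^{*+}$, hence $\psi \in S \tmin K$. Your additional normalization step via the base property of $\states_K$ is a harmless (indeed slightly more explicit) completion of what the paper dismisses as "follows by definition."
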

\begin{proof}
Let $S$ be a simplex and let $x_i \in A(S)^{*+}$, $i \in \{ 1, \ldots, n \}$, be the extreme points of $S$. The points $x_1, \ldots, x_n$ form a basis of $A(S)^*$. Let $\psi \in S \tmax K$ then we have
\begin{equation*}
\psi = \sum_{i=1}^n x_i \otimes \varphi_i,
\end{equation*}
for some $\varphi_i \in A(K)^*$. Our aim is to prove that $\varphi_i \in A(K)^{*+}$ then $\psi \in S \tmin K$ follows by definition.

Let $b_1, \ldots, b_n$ denote the basis of $A(S)$ dual to the basis $x_1, \ldots, x_n$ of $A(S)^*$, i.e. we have $b_i(x_j) = \delta_{ij}$, where $i, j \in \{ 1, \ldots, n \}$ and $\delta_{ij}$ is the Kronecker delta. We have $b_i \in E(S)$ because $S$ is a simplex. For any $f \in E(K)$ we have
\begin{equation*}
0 \leq (\psi, b_i \otimes f) = (\varphi_i, f)
\end{equation*}
for all $i \in \{1, \ldots, n\}$, which implies $\varphi_i \in A(K)^{*+}$.
\end{proof}
Note that tensor product of the simplexes $S_1$, $S_2$ is also a simplex, so we have
\begin{equation*}
K \tmax S_1 \tmax S_2 = K \tmin S_1 \tmin S_2.
\end{equation*}

\subsection{Direct product of state spaces and effect algebras} \label{subsec:GPT-direct}
For certain reasons we will need to use direct products together with tensor product. The idea of why they will be used is going to be clear in the end, but now we will present several of their properties that will be required later. As in the Subsec. \ref{subsec:GPT-tensor} we will work mostly with the cones of the positive functionals.

Let $K_A, K_B$ be two state spaces. Given $A(K_A)^{*+}$ and $A(K_B)^{*+}$ there are two ways to define the direct product of these cones. The first is to use the cone $A(K_A)^{*+} \times A(K_B)^{*+}$. The second is to realize that we can construct $K_A \times K_B$ that will be a compact and convex set, i.e. a state space that gives rise to the cone $A(K_{B_1} \times K_{B_2})^{*+}$.

It may seem that these cones are fairly similar, but they are not and they have different physical interpretations. Let $\psi \in A(K_A \times K_B)^{*+}$, then there are unique $\lambda \in \mathbb{R}$, $x_A \in K_A$, $x_B \in K_B$ such that $\psi = \lambda (x_A, x_B)$. Now let $\varphi \in A(K_A)^{*+} \times A(K_B)^{*+}$, then there are $y_A \in K_A$, $y_B \in K_B$, $\alpha_A, \alpha_B \in \mathbb{R}$, $\alpha_A, \alpha_B \geq 0$ such that $\varphi = (\alpha_A y_A, \alpha_B y_B)$. In other words the normalization may be different in every component of the product. This can be rewritten as
\begin{align*}
\varphi &= (\alpha_A y_A, \alpha_B y_B) \\
&= (\alpha_A + \alpha_B) \left(\dfrac{\alpha_A}{\alpha_A + \alpha_B} y_A, \dfrac{\alpha_B}{\alpha_A + \alpha_B} y_B \right) \\
&= (\alpha_A + \alpha_B) \left( \dfrac{\alpha_A}{\alpha_A + \alpha_B} (y_A, 0) + \dfrac{\alpha_B}{\alpha_A + \alpha_B} ( 0, y_B ) \right)
\end{align*}
that shows that every element of $A(K_A)^{*+} \times A(K_B)^{*+}$ can be uniquely expressed as a multiple of a convex combination of elements of the form $(y_A, 0)$ and $(0, y_B)$. The operational interpretation of such states is that we do not even know which system we are working with, but we know that with some probability $p$ we have the first system and with probability $1-p$ we have the second system.

The operational interpretation of $A(K_A \times K_B)^{*+}$ is a bit harder to grasp. We may understand $\psi \in A(K_A \times K_B)^{*+}$ as a (multiple of) conditional state. That is, we will interpret the object $(x_A, x_B)$ as a state that corresponds to making a choice in the past between the systems $K_A$ and $K_B$ and keeping track of both of the outcomes at once. The cone $A(K_A \times K_B)^{*+}$ will play a central role in our results on incompatibility, steering and Bell nonlocality, because in the problem of incompatibility we wish to implement two channels at the same time and in steering and Bell nonlocality we are choosing between two incompatible channels.

At last we will need to describe the set $A(K_A \times K_B)$ and its structure with respect to the sets $A(K_A)$ and $A(K_B)$. We will show that $A(K_A \times K_B)$ corresponds to a certain subset of $A(K_A) \times A(K_B)$ by using the following two ideas: since all of the vector spaces are finite dimensional we have that $A(K_A) \times A(K_B)$ is the dual to $A(K_A)^* \times A(K_B)^*$ and $A(K_A \times K_B)^*$ can be identified with a subset of $A(K_A)^* \times A(K_B)^*$. Note that this identification holds only between the vector spaces and not between the corresponding state spaces.

\begin{prop}
We have
\begin{equation*}
A(K_{B_1} \times K_{B_2})^{*+} \subset A(K_{B_1})^{*+} \times A(K_{B_2})^{*+}.
\end{equation*}
\end{prop}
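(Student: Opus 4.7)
The plan is to produce the linear embedding $\iota \colon A(K_{B_1} \times K_{B_2})^* \hookrightarrow A(K_{B_1})^* \times A(K_{B_2})^*$ that the paragraph preceding the proposition already alludes to, and then check that $\iota$ carries the positive cone on the left into the direct product of positive cones on the right. Concretely, the two projections $\pi_i \colon K_{B_1} \times K_{B_2} \to K_{B_i}$ are affine, so pullback $f_i \mapsto f_i \circ \pi_i$ gives a linear map $A(K_{B_i}) \to A(K_{B_1} \times K_{B_2})$, and I set $\iota(\psi) = (\psi|_1, \psi|_2)$ via $\langle \psi|_i, f_i \rangle = \langle \psi, f_i \circ \pi_i \rangle$.

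To know $\iota$ is an embedding (and not merely a linear map) one needs the structural lemma that every $f \in A(K_{B_1} \times K_{B_2})$ splits as $g \circ \pi_1 + h \circ \pi_2$ for some $g \in A(K_{B_1})$ and $h \in A(K_{B_2})$. This is a short affinity argument: comparing the two decompositions of the midpoint of the diagonals $(x_1,x_2),(y_1,y_2)$ and $(x_1,y_2),(y_1,x_2)$ yields the parallelogram identity $f(x_1,x_2) + f(y_1,y_2) = f(x_1,y_2) + f(y_1,x_2)$, from which one reads off an explicit decomposition using any fixed base points $x_1^0, x_2^0$. Once this is in hand, injectivity of $\iota$ is automatic, since $\psi$ is determined by its values on such decomposed functions.

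For the positivity part, the cleanest route is via the base property already proved for $A(K_{B_1} \times K_{B_2})^{*+}$: any positive $\psi$ has the form $\psi = \lambda (x_{B_1}, x_{B_2})$ with $\lambda \geq 0$ and $(x_{B_1}, x_{B_2}) \in K_{B_1} \times K_{B_2}$, and unpacking $\iota$ yields $\iota(\psi) = (\lambda x_{B_1}, \lambda x_{B_2})$, which lies in $A(K_{B_1})^{*+} \times A(K_{B_2})^{*+}$ by construction. A more invariant alternative observes that $f_i \in A(K_{B_i})^+$ implies $f_i \circ \pi_i \geq 0$ on the product, so $\psi \geq 0$ forces each $\psi|_i \geq 0$ directly from the definition of the dual cone.

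The main obstacle is the decomposition lemma for affine functions on the product; once that is secured, the positive-cone inclusion itself reduces to a one-line check. I do not expect strictness of the inclusion to require separate attention here, as it is already witnessed by the normalization mismatch highlighted in the preceding subsection (elements of the form $(y_A, 0)$ lie in the right-hand cone but not in the image of $\iota$, because they fail the constraint $\langle \psi|_1, 1\rangle = \langle \psi|_2, 1\rangle$ forced by $1 = 1 \circ \pi_1 = 1 \circ \pi_2$).
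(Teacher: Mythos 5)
Your proof is correct, and its core is the same as the paper's: identify $A(K_{B_1}\times K_{B_2})^{*}$ with a subspace of $A(K_{B_1})^{*}\times A(K_{B_2})^{*}$ and then use the base property, writing a positive functional as $\lambda(x_{B_1},x_{B_2})$ with $\lambda\geq 0$, so that its image $(\lambda x_{B_1},\lambda x_{B_2})$ manifestly lies in $A(K_{B_1})^{*+}\times A(K_{B_2})^{*+}$ and satisfies the equal-normalization constraint $\langle\psi,(1,-1)\rangle=0$ that the paper uses to cut out the image. What you add is worthwhile: the paper simply asserts in the preceding paragraph that $A(K_{B_1}\times K_{B_2})^{*}$ embeds into the product of duals, whereas you make the embedding explicit via pullback along the projections and justify it with the decomposition lemma $f = g\circ\pi_1 + h\circ\pi_2$ (your parallelogram-identity argument is sound, with $g(x_1)=f(x_1,x_2^0)$ and $h(x_2)=f(x_1^0,x_2)-f(x_1^0,x_2^0)$), which is exactly the fact the paper's identification rests on. Your alternative positivity argument --- $f_i\in A(K_{B_i})^+$ implies $f_i\circ\pi_i\geq 0$, so positivity of $\psi$ forces $\psi|_i\in A(K_{B_i})^{*+}$ directly from the definition of the dual cone --- is in fact slightly cleaner than the paper's, since it avoids the base property entirely (and sidesteps the trivial $\psi=0$ case that the base decomposition excludes). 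The strictness remark is not needed for the statement but is consistent with the paper's discussion of the normalization mismatch.
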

\begin{proof}
The idea of the proof is that if we have $\varphi \in A(K_A)^{*+} \times A(K_B)^{*+}$ such that $\varphi = (\alpha_A y_A, \alpha_B y_B)$ then $\varphi \in A(K_A \times K_B)^{*+}$ if and only if $\alpha_A = \alpha_B$. Therefore we can identify $A(K_A \times K_B)^{*+}$ with the set $\{ \psi \in A(K_A)^{*+} \times A(K_B)^{*+}: \<\psi, (1, -1) \> = 0 \}$. It is easy to verify this constraint on the positive cones and since it is linear it must hold everywhere else.
\end{proof}

The above proof shows that the function $(1, -1) \in A(K_A) \times A(K_B)$ is equal to zero when restricted to $A(K_A \times K_B)^*$, or in other words $(1, 0) = (0, 1)$ when restricted to $A(K_A \times K_B)^*$. We introduce a relation of equivalence on $A(K_A) \times A(K_B)$ as follows: for $f, g \in A(K_A) \times A(K_B)$ we say that $f$ and $g$ are equivalent and we write $f \sim g$ if $f - g = k (1, -1)$ for some $k \in \mathbb{R}$. Equivalently, $f \sim g$ if for every $\psi \in A(K_A \times K_B)^*$ we have $\< \psi, f \> = \<\psi, g \>$. $A(K_A \times K_B)$ corresponds to the set of equivalence classes of $A(K_A) \times A(K_B)$ with respect to the relation of equivalence $\sim$.

To demonstrate this, consider the constant function $1 \in E(K_A \times K_B)$ and let $x \in K_A$, $y \in K_B$, then we have
\begin{align*}
\< (x,y), (1,0) \> &= \< x, 1\> = 1 = \< (x,y), 1 \>, \\
\< (x,y), (0,1) \> &= \< y, 1\> = 1 = \< (x,y), 1 \>. \\
\end{align*}
This is not a coincidence, because $(1, 0) - (0, 1) = (1, -1)$ so we have $(1, 0) \sim (0, 1)$.

\subsection{Channels and measurements in general probabilistic theory} \label{subsec:GPT-channels}
It is not easy to define channels in general probabilistic theory as we would like all of the channels to be completely positive. We will use the following definition:
\begin{defin}
Let $K_A$, $K_B$ be state spaces, then channel $\Phi$ is a linear map
\begin{equation*}
\Phi: A(K_A)^* \to A(K_{B})^*
\end{equation*}
that is positive, i.e. for every $\psi \in A(K_A)^{*+}$ we have $\Phi(\psi) \in A(K_B)^{*+}$ and that for $\psi \in K_A$ we have $\Phi(\psi) \in K_B$.
\end{defin}

One may also require a channel to be completely positive, that is if $K_C$ is some state space such that we can define $K_C \treal K_A$, then we can consider the map $id \otimes \Phi : K_C \treal K_A \to K_C \tmax K_B$ and require it to be positive. In the applications of general probabilistic theory to classical and quantum theories, we always know how to create joint systems of given two systems so in the examples we will always require complete positivity of channels, but one still has to bear in mind that in the general case, complete positivity is not a well-defined concept.

One can identify the channel $\Phi: A(K_A)^* \to A(K_{B})^*$ with an element of $A(K_A) \otimes A(K_B)^*$ as follows: let $x \in K_A$ and $f \in A(K_B)$, then the expression $\< \Phi(x), f \>$ gives rise to a linear functional on $A(K_A)^* \otimes A(K_B)$. This means that we have $\Phi \in A(K_A) \otimes A(K_B)^*$, where we omit the isomorphism between the channel and the functional. If we also consider the positivity of the channel on the elements of the form $x \otimes f \in K_A \tmin E(K_B)$ we get
\begin{align*}
\Phi \in A(K_A)^+ \tmax A(K_B)^{*+}.
\end{align*}
This is a well known construction that may be also used to define the tensor product of linear spaces \cite[Chapter 1.3]{Ryan-tensProd}.

There is one more construction with channels that will be important in our formulation of compatibility of channels: compositions with effect. Let $\Phi: K_A \to K_B$ be a channel and let $f \in E(K_B)$, then they give rise to an effect $(f \circ \Phi) \in E(K_A)$ defined for $x_A \in K_A$ as
\begin{equation*}
\< x_A, (f \circ \Phi) \> = \< \Phi(x_A), f \>.
\end{equation*}
By the same idea we can define a map $f \otimes id: A(K_B)^* \otimes A(K_C)^* \to A(K_C)^*$ such that for $x_B \in K_B$ and $x_C \in K_C$ we have $(f \otimes id)(x_B \otimes x_C) = f(x_B)x_C$ and we extend the map by linearity. Also given a channel $\Phi: K_A \to K_B \treal K_C$ we can compose the map $f \otimes id$ with the channel $\Phi$ to obtain $(f \otimes id) \circ \Phi': A(K_A)^* \to A(K_C)^*$ such that the corresponding functional on $A(K_A) \otimes A(K_C)^*$ is for $x_A \in K_A$ and $g \in A(K_C)$ given as
\begin{equation*}
\< (f \otimes id) \circ \Phi, x_A \otimes g \> = \< \Phi(x_A), f \otimes g \>.
\end{equation*}
Specifically we will be interested in the expressions $(1 \otimes id) \circ \Phi$ and $(id \otimes 1) \circ \Phi$. If $\Phi$ is a channel then $(1 \otimes id) \circ \Phi$ and $(id \otimes 1) \circ \Phi$ are channels as well and they are called marginal channels of $\Phi$.

A special type of channel is a measurement.
\begin{defin}
A channel $m: K_A \to K_B$ is called a measurement if $K_B$ is a simplex.
\end{defin}
The interpretation is simple: the vertices of the simplex correspond to the possible measurement outcomes and the resulting state is a probability distribution over the measurement outcomes, i.e. an assignment of probabilities to the possible outcomes. Since we require all state spaces to be finite-dimensional this implies that we consider only finite-outcome measurements. Let $K_B$ be a simplex with vertices $\omega_1, \ldots, \omega_n$, then we can identify a measurement $m$ with an element of $A(K_A)^+ \tmin A(K_B)^{*+}$ of the form
\begin{equation*}
m = \sum_{i=1}^n f_i \otimes \delta_{\omega_i}
\end{equation*}
where for $i \in \{1, \ldots, n\}$ we have $f_i \in E(K_A)$, $\sum_{i=1}^n f_i = 1$ and $\delta_{\omega_i} \in \states (K_B)$ are the functionals corresponding to the extreme points of $K_B$ (where we have not omitted the isomorphism this time). This expression has an operational interpretation that for $x \in K_A$ the measurement $m$ assigns the probability $f_i(x)$ to the outcome $\omega_i$.

\begin{exm}
Quantum channels are completely positive, trace preserving maps $\Phi: \dens_\Ha \to \dens_\Ha$. The complete positivity means that for any $\rho \geq 0$ we have $(id \otimes \Phi)(\rho) \geq 0$. We denote the set of channels $\Phi: \dens_\Ha \to \dens_\Ha$ as $\chan_{\Ha \to \Ha}$.

Let $|1\>, \ldots, |n\>$ be an orthonormal base of $\Ha$. To every quantum channel we may assign its unique Choi matrix $C(\Phi)$ defined as
\begin{equation*}
C(\Phi) = (\Phi \otimes id) \left( \sum_{i, j=1}^n |ii\>\<jj| \right) ,
\end{equation*}
where we use the shorthand $|i i \> = |i\> \otimes |i\>$. Note that $C(\Phi) \geq 0$ and $\Tr_1 (C(\Phi)) = \I$, where $\Tr_1$ denotes the partial trace. Also every matric $C \in B_h(\Ha \otimes \Ha)$ such that $C \geq 0$ and $\Tr_1(C) = \I$ is a Choi matrix of some channel, see \cite[Section 4.4.3]{HeinosaariZiman-MLQT}.

The Choi matrix $C(\Phi)$ is isomorphic to a state $\frac{1}{\dH} C(\Phi)$, which corresponds to the channel $\Phi \otimes id$ acting on the maximally entangled state $|\psi^+\>\<\psi^+|$, where
\begin{equation*}
|\psi^+\> = \dfrac{1}{\sqrt{\dH}} \sum_{i=1}^n |i i \>.
\end{equation*}
\end{exm}

\section{Compatibility of channels} \label{sec:compat}
\begin{defin}
Let $K_A$, $K_{B_1}$, $K_{B_2}$ be state spaces and let $\Phi_1$, $\Phi_2$ be channels
\begin{align*}
&\Phi_1: K_A \to K_{B_1}, \\
&\Phi_2: K_A \to K_{B_2}.
\end{align*}
We say that $\Phi_1$, $\Phi_2$ are compatible if and only if there exists a channel
\begin{equation*}
\Phi : K_A \to K_{B_1} \treal K_{B_2}
\end{equation*}
such that $\Phi_1$ and $\Phi_2$ are the marginal channels of $\Phi$, i.e. we have
\begin{align}
\Phi_1 = (id \otimes 1) \circ \Phi, \label{eq:compat-def-1} \\
\Phi_2 = (1 \otimes id) \circ \Phi. \label{eq:compat-def-2}
\end{align}
The channel $\Phi$ is also called the joint channel of the channels $\Phi_1$, $\Phi_2$.
\end{defin}

The operational meaning of compatibility of channels is that if the channels $\Phi_1$, $\Phi_2$ are compatible, then we can apply them both to the input state at once and selecting which one we actually want the output from later. If the channels are incompatible we have to choose from which one we want the output before applying anything. For a more in-depth explanation see \cite{HeinosaariMiyaderaZiman-compatibility}. The important thing is that there is a choice from which channel we want to get the output so we can expect to see $A(K_{B_1} \times K_{B_2})^{*+}$ come up in the calculations.

Consider the channel $\Phi : K_A \to K_{B_1} \treal K_{B_2}$. One can realize that the maps $(id \otimes 1): \Phi \mapsto (id \otimes 1) \circ \Phi$ and $(1 \otimes id): \Phi \mapsto (1 \otimes id) \circ \Phi$ are linear maps of channels. Moreover the Eq. \eqref{eq:compat-def-1}, \eqref{eq:compat-def-2} both have $\Phi$ on the right hand side in the same position. We are going to exploit this to obtain simpler condition for compatibility of the channels $\Phi_1$, $\Phi_2$. To do so we have to introduce a new map $J$.

Let us define a map $J: A(K_A) \otimes A(K_{B_1})^* \otimes A(K_{B_2})^* \to A(K_A) \otimes A(K_{B_1} \times K_{B_2})^*$ given for $\Xi \in A(K_A) \otimes A(K_{B_1})^* \otimes A(K_{B_2})^*$ as
\begin{equation*}
J(\Xi) = \left( (id \otimes 1) \circ \Xi, (1 \otimes id) \circ \Xi \right).
\end{equation*}
For $\Xi = f \otimes \psi \otimes \varphi$ we have
\begin{equation*}
J(\Xi) = f \otimes ( \< \varphi, 1 \> \psi, \< \psi, 1 \> \varphi ).
\end{equation*}

\begin{prop} \label{prop:compat-Jlinear}
$J$ is a linear mapping.
\end{prop}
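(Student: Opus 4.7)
The plan is to reduce the linearity of $J$ to the linearity of the two component maps $\Xi \mapsto (id\otimes 1) \circ \Xi$ and $\Xi \mapsto (1\otimes id) \circ \Xi$, together with the observation that the pairing $(\cdot,\cdot)$ sending a pair of elements in $A(K_A)\otimes A(K_{B_1})^*$ and $A(K_A)\otimes A(K_{B_2})^*$ to an element of $A(K_A)\otimes (A(K_{B_1})^* \times A(K_{B_2})^*)$ is itself linear.

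First I would recall that $(id \otimes 1): A(K_{B_1})^* \otimes A(K_{B_2})^* \to A(K_{B_1})^*$ is the unique linear extension of the map $\psi \otimes \varphi \mapsto \langle \varphi, 1\rangle\, \psi$ on elementary tensors, which is the canonical construction of a linear map out of a tensor product obtained by contracting with the fixed functional $1 \in A(K_{B_2})$. Composing with $\Xi \in A(K_A) \otimes A(K_{B_1})^* \otimes A(K_{B_2})^*$ amounts to applying this contraction to the last two tensor factors while acting as the identity on $A(K_A)$; this is linear in $\Xi$ by construction and by bilinearity of the tensor product. The same argument, swapping the roles of the two factors, gives linearity of $\Xi \mapsto (1 \otimes id) \circ \Xi$. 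Consequently, both coordinates of $J(\Xi)$ depend linearly on $\Xi$, and so does the pair.

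Before concluding, I would check that the image of $J$ genuinely lands in $A(K_A) \otimes A(K_{B_1}\times K_{B_2})^*$ rather than merely in $A(K_A) \otimes (A(K_{B_1})^* \times A(K_{B_2})^*)$. By the characterization established in subsection \ref{subsec:GPT-direct}, the former corresponds precisely to those pairs $(\alpha_1 y_1, \alpha_2 y_2)$ with matching normalization $\alpha_1 = \alpha_2$, equivalently satisfying $\langle\cdot, (1,-1)\rangle = 0$. On an elementary tensor $\Xi = f\otimes\psi\otimes\varphi$, the first coordinate $\langle\varphi,1\rangle\psi$ has total normalization $\langle\psi,1\rangle\langle\varphi,1\rangle$, and the second coordinate $\langle\psi,1\rangle\varphi$ has the same value, so the constraint holds; it then extends to all of the domain by linearity of this constraint.

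The proof will therefore be essentially immediate once these observations are assembled, and I do not expect a genuine obstacle. The only delicate point is avoiding confusion between the two different product-like objects $A(K_{B_1})^* \times A(K_{B_2})^*$ and $A(K_{B_1}\times K_{B_2})^*$, discussed in Subsec.~\ref{subsec:GPT-direct}; keeping that identification consistent is what makes the statement meaningful, but linearity itself is a direct consequence of the linearity of composition with the fixed effect $1$ on each factor.
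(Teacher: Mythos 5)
Your proposal is correct and follows essentially the same route as the paper, which verifies linearity by a direct computation relying on the linearity of the two coordinate maps $\Xi \mapsto (id \otimes 1)\circ\Xi$ and $\Xi \mapsto (1 \otimes id)\circ\Xi$. Your additional check that the image satisfies the normalization constraint $\langle \cdot, (1,-1)\rangle = 0$, so that $J$ indeed lands in $A(K_A)\otimes A(K_{B_1}\times K_{B_2})^*$, is a sound extra verification that the paper leaves implicit.
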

\begin{proof}
Let $\Xi_1, \Xi_2 \in A(K_A) \otimes A(K_{B_1} \otimes K_{B_2})^*$ and $\lambda \in \mathbb{R}$, then we have
\begin{align*}
J(\lambda \Xi_1 + \Xi_2) &= \left( \lambda (id \otimes 1) \circ \Xi_1 + (id \otimes 1) \circ \Xi_2, 0 \right) \\
&+ \left( 0, \lambda (1 \otimes id) \circ \Xi_1 + (1 \otimes id) \circ \Xi_2 \right) \\
&= \lambda \left( (id \otimes 1) \circ \Xi_1, (1 \otimes id) \circ \Xi_1 \right) \\
&+ \left( (id \otimes 1) \circ \Xi_2, (1 \otimes id) \circ \Xi_2 \right) \\
&= \lambda J(\Xi_1) + J(\Xi_2).
\end{align*}
\end{proof}

Assume that the channels $\Phi_1$, $\Phi_2$ are compatible and that $\Phi$ is their joint channel then we must have
\begin{equation*}
J(\Phi) = (\Phi_1, \Phi_2)
\end{equation*}
which is just a more compact form of the Eq. \eqref{eq:compat-def-1}, \eqref{eq:compat-def-2}.
\begin{prop} \label{prop:compat-condition}
The channels $\Phi_1$, $\Phi_2$ are compatible if and only if there is $\Phi \in A(K_A)^+ \tmax A(K_{B_1} \treal K_{B_2})^{*+}$ such that
\begin{equation}
J({\Phi}) = (\Phi_1, \Phi_2). \label{eq:compat-Jcond}
\end{equation}
\end{prop}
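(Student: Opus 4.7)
The plan is to split the statement into two implications and connect them through the identification of channels $\Phi: K_A \to K_B$ with elements of $A(K_A)^+ \tmax A(K_B)^{*+}$ recalled in subsection \ref{subsec:GPT-channels}.

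The forward direction should be essentially definitional. Assuming $\Phi_1, \Phi_2$ are compatible, I would take a joint channel $\Phi: K_A \to K_{B_1} \treal K_{B_2}$ and note that, via the Choi-like identification, $\Phi$ lies in the positive cone $A(K_A)^+ \tmax A(K_{B_1} \treal K_{B_2})^{*+}$. Then the two marginal conditions \eqref{eq:compat-def-1}, \eqref{eq:compat-def-2} are exactly what the definition of $J$ repackages into the single equation $J(\Phi) = (\Phi_1, \Phi_2)$.

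For the converse I would start from some $\Phi \in A(K_A)^+ \tmax A(K_{B_1} \treal K_{B_2})^{*+}$ with $J(\Phi) = (\Phi_1, \Phi_2)$ and verify that $\Phi$ is genuinely a channel from $K_A$ to $K_{B_1} \treal K_{B_2}$, i.e. both positive and normalization preserving. Positivity is immediate from membership in the positive cone. For normalization I would use that the unit effect on $K_{B_1} \treal K_{B_2}$ coincides with $1 \otimes 1$, so for every $x \in K_A$,
\begin{equation*}
\< \Phi(x), 1 \> = \< \Phi(x), 1 \otimes 1 \> = \< ((id \otimes 1) \circ \Phi)(x), 1 \> = \< \Phi_1(x), 1 \> = 1,
\end{equation*}
where the last equality uses that $\Phi_1$ is itself a channel. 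Hence $\Phi$ is a channel and, by the marginal identities built into $J(\Phi) = (\Phi_1, \Phi_2)$, serves as the joint channel witnessing compatibility.

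I do not foresee a serious obstacle. The map $J$ has been designed precisely so that the two marginal conditions collapse into a single linear equation, and Proposition \ref{prop:compat-Jlinear} makes this reformulation clean. The only mild care needed is the implicit back-and-forth between a channel as a linear map $A(K_A)^* \to A(K_{B_1} \treal K_{B_2})^*$ and its representative in $A(K_A)^+ \tmax A(K_{B_1} \treal K_{B_2})^{*+}$, but this identification is exactly the one set up in subsection \ref{subsec:GPT-channels}, so the argument amounts to a rearrangement of definitions together with the linearity of $J$.
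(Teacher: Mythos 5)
Your proof is correct and follows essentially the same near-definitional route as the paper, whose own proof simply observes that Eq.~\eqref{eq:compat-Jcond} is a repackaging of the marginal conditions \eqref{eq:compat-def-1}--\eqref{eq:compat-def-2} under the identification of channels with elements of $A(K_A)^+ \tmax A(K_{B_1} \treal K_{B_2})^{*+}$. The only difference is that you explicitly check, in the converse direction, that a positive element of the cone satisfying $J(\Phi) = (\Phi_1,\Phi_2)$ preserves normalization (via $\< \Phi(x), 1\otimes 1\> = \<\Phi_1(x),1\> = 1$) and is therefore genuinely a channel --- a detail the paper's one-line proof leaves implicit, so your write-up is if anything slightly more complete.
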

\begin{proof}
If the channels $\Phi_1$, $\Phi_2$ are compatible then Eq. \eqref{eq:compat-Jcond} must hold for their joint channel $\Phi$. If Eq. \eqref{eq:compat-Jcond} holds for some $\Phi \in A(K_A)^+ \tmax A(K_{B_1} \treal K_{B_2})^{*+}$, then the channels $\Phi_1$, $\Phi_2$ are compatible and $\Phi$ is their joint channel.
\end{proof}

The operational interpretation is that $(\Phi_1, \Phi_2)$ represents a conditional channel in the same way as the states from $A(K_{B_1} \times K_{B_2})^{*+}$ represent conditional states that keep track of some choice made in the past. If the channels are compatible, then we actually do not have to make the choice of either using $\Phi_1$ or $\Phi_2$, but we can use their joint channel, that has the property that its marginals reproduce the outcomes of the two channels $\Phi_1$, $\Phi_2$. We will investigate several of the properties of the map $J$.
\begin{prop} \label{prop:compat-JpreimageExists}
For every $(\xi_1, \xi_2) \in A(K_A) \otimes A(K_{B_1} \times K_{B_2})^*$ there is a $\Xi \in A(K_A) \otimes A(K_{B_1})^* \otimes A(K_{B_2})^*$ such that
\begin{equation*}
J(\Xi) = (\xi_1, \xi_2).
\end{equation*}
Moreover if we have
\begin{equation*}
(1, 1) \circ (\xi_1, \xi_2) = 1
\end{equation*}
then
\begin{equation*}
(1 \otimes 1) \circ \Xi = 1.
\end{equation*}
\end{prop}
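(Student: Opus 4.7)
The plan is to construct the desired lift $\Xi$ explicitly using fiduciary states. First, I would unpack what membership of $(\xi_1,\xi_2)$ in $A(K_A) \otimes A(K_{B_1} \times K_{B_2})^*$ really imposes: by the preceding proposition, this space is identified with the subspace of $A(K_A) \otimes (A(K_{B_1})^* \times A(K_{B_2})^*)$ on which $id \otimes (1,-1)$ vanishes, which is the linear compatibility constraint
\[
(id \otimes 1)(\xi_1) \;=\; (id \otimes 1)(\xi_2) \;=:\; \eta \;\in\; A(K_A).
\]
This single piece of structure drives the rest of the argument.

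Next, fix any normalized states $\omega_1 \in K_{B_1}$, $\omega_2 \in K_{B_2}$ (they exist because the state spaces are nonempty compact convex sets) and set
\[
\Xi \;:=\; \xi_1 \otimes \omega_2 \;+\; \omega_1 \otimes \xi_2 \;-\; \eta \otimes \omega_1 \otimes \omega_2,
\]
reordering tensor factors so that each summand lies in $A(K_A) \otimes A(K_{B_1})^* \otimes A(K_{B_2})^*$. The motivating idea is the standard inclusion--exclusion construction of a joint object from two consistent marginals: the first term recovers $\xi_1$ after the $K_{B_2}$ output is discarded, the second recovers $\xi_2$ after the $K_{B_1}$ output is discarded, and the third subtracts the double-counted constant piece $\eta \otimes \omega_1 \otimes \omega_2$.

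Verification that $J(\Xi) = (\xi_1, \xi_2)$ is then a direct computation using only $\<\omega_i, 1\> = 1$: the contraction $(id \otimes 1) \circ \Xi$, which kills the $A(K_{B_2})^*$ factor, evaluates to $\xi_1 + \eta \otimes \omega_1 - \eta \otimes \omega_1 = \xi_1$, and symmetrically $(1 \otimes id) \circ \Xi = \xi_2$. For the second half of the statement, I would interpret the hypothesis $(1,1) \circ (\xi_1,\xi_2) = 1$ as forcing $\eta = 1 \in A(K_A)$ (each component of the pair marginalizes to the unit effect); then applying $1 \otimes 1$ to the three summands of $\Xi$ yields $\eta \cdot 1 = \eta$, $1 \cdot \eta = \eta$, and $\eta \cdot 1 \cdot 1 = \eta$, so $(1 \otimes 1) \circ \Xi = \eta + \eta - \eta = 1$.

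The only real pitfall is careful bookkeeping of the tensor factors and consistent use of the constraint defining $\eta$; no deeper obstacle should appear, since the map $J$ is linear between finite-dimensional spaces by Proposition~\ref{prop:compat-Jlinear} and the ambient state spaces contain normalized states to play the role of $\omega_1, \omega_2$.
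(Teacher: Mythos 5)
Your proof is correct, but it takes a genuinely different route from the paper's. The paper expands $\xi_1 = \sum_i f_i \otimes \psi_i$, $\xi_2 = \sum_i f_i \otimes \varphi_i$ in a basis $f_1,\ldots,f_n$ of $A(K_A)$, uses the membership constraint to get common normalizations $\<\psi_i,1\> = \<\varphi_i,1\> = k_i$, and then lifts by the ``product'' formula $\Xi = \sum_i k_i^{-1} f_i \otimes \psi_i \otimes \varphi_i$; you instead glue the two marginals along fixed reference states $\omega_1,\omega_2$ by inclusion--exclusion, $\Xi = \xi_1 \otimes \omega_2 + \omega_1 \otimes \xi_2 - \eta \otimes \omega_1 \otimes \omega_2$, using the same constraint only in the form $(id\otimes 1)\xi_1 = (id\otimes 1)\xi_2 = \eta$, which is exactly what makes the cross terms cancel. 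Your construction is coordinate-free and avoids the inverses $k_i^{-1}$ altogether (the paper's formula as written is delicate when some $k_i = 0$), and both proofs handle the normalization claim the same way, reading the hypothesis as $\eta = 1$. What the paper's product-form lift buys in exchange is positivity: when all $f_i, \psi_i, \varphi_i$ are positive, $\sum_i k_i^{-1} f_i \otimes \psi_i \otimes \varphi_i$ stays in the minimal tensor product of the positive cones, which is precisely how the same construction is reused in Prop.~\ref{prop:compat-JminimalProd}; your lift, having a subtracted term, would not serve that later purpose, though it fully suffices for the present statement.
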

\begin{proof}
Let $f_1, \ldots, f_n$ be a basis of $A(K_A)$, then we have
\begin{align*}
\xi_1 &= \sum_{i=1}^n f_i \otimes \psi_i \\
\xi_2 &= \sum_{i=1}^n f_i \otimes \varphi_i
\end{align*}
for some $\psi_i \in A(K_{B_1})^*$ and $\varphi_i \in A(K_{B_2})^*$. Since we must have
\begin{equation*}
(1,0) \circ (\xi_1, \xi_2) = (0,1) \circ (\xi_1, \xi_2)
\end{equation*}
we obtain
\begin{align*}
\sum_{i=1}^n \<\psi_i, 1\> f_i = \sum_{i=1}^n \<\varphi_i, 1\> f_i
\end{align*}
which implies
\begin{equation*}
\<\psi_i, 1 \> = \<\varphi_i, 1 \> = k_i
\end{equation*}
for all $i \in \{1, \ldots, n \}$ as $f_1, \ldots, f_n$ is linearly independent. Let
\begin{equation*}
\Xi = \sum_{i=1}^n k_i^{-1} f_i \otimes \psi_i \otimes \varphi_i
\end{equation*}
then we have
\begin{align*}
J(\Xi) &= \sum_{i=1}^n k_i^{-1} f_i \otimes ( \< \varphi_i, 1_{B_2} \> \psi_i, \< \psi_i, 1_{B_1} \> \varphi_i ) \\
&= \sum_{i=1}^n f_i \otimes (\psi_i, \varphi_i ).
\end{align*}
If we have $1 \circ (\xi_1, \xi_2) = 1$ then
\begin{equation*}
\sum_{i=1}^n k_i f_i = 1
\end{equation*}
and we get
\begin{align*}
(1 \otimes 1) \circ \Xi &= (1 \otimes 1) \circ ( \sum_{i=1}^n k_i^{-1} f_i \otimes \psi_i \otimes \varphi_i ) \\
&= \sum_{i=1}^n k_i^{-1} \< \psi_i, 1 \> \< \varphi_i, 1 \> f_i = 1.
\end{align*}
\end{proof}

\begin{prop} \label{prop:compat-JminimalProd}
We have
\begin{align*}
J(A(K_A)^+ \tmin A(K_{B_1})^{*+} \tmin A(K_{B_2})^{*+}) = \\
= A(K_A)^+ \tmin A(K_1 \times K_{B_2})^{*+}.
\end{align*}
\end{prop}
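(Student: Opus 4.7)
The plan is to prove the two inclusions separately, using the explicit formula for $J$ on elementary tensors together with linearity (Proposition \ref{prop:compat-Jlinear}) and the characterization of the cone $A(K_{B_1} \times K_{B_2})^{*+}$ as multiples of pairs $(x_1,x_2)$ with $x_1 \in K_{B_1}$, $x_2 \in K_{B_2}$ (which follows from the earlier base-of-cone proposition applied to the state space $K_{B_1} \times K_{B_2}$).

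For the inclusion $\subseteq$, I would start from a generator $\Xi = f \otimes \psi \otimes \varphi$ of the minimal tensor product on the left-hand side, with $f \in A(K_A)^+$, $\psi \in A(K_{B_1})^{*+}$, $\varphi \in A(K_{B_2})^{*+}$. The formula stated just before Proposition \ref{prop:compat-Jlinear} gives
\begin{equation*}
J(\Xi) = f \otimes (\< \varphi, 1\> \psi,\, \< \psi, 1\> \varphi).
\end{equation*}
Both coordinates lie in the respective positive cones, and the pair satisfies $\< (\< \varphi,1\>\psi,\< \psi,1\>\varphi), (1,-1)\> = \<\psi,1\>\<\varphi,1\> - \<\psi,1\>\<\varphi,1\> = 0$, so by the proposition identifying $A(K_{B_1}\times K_{B_2})^{*+}$ with the kernel of $(1,-1)$ inside $A(K_{B_1})^{*+}\times A(K_{B_2})^{*+}$, the second tensor factor belongs to $A(K_{B_1}\times K_{B_2})^{*+}$. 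Hence $J(\Xi) \in A(K_A)^+ \tmin A(K_{B_1}\times K_{B_2})^{*+}$, and the inclusion extends to arbitrary conical combinations by Proposition \ref{prop:compat-Jlinear}.

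For the reverse inclusion $\supseteq$, I would take a generator $f \otimes \eta$ of the right-hand side with $f \in A(K_A)^+$ and $\eta \in A(K_{B_1}\times K_{B_2})^{*+}$. By the base-of-cone proposition applied to $K_{B_1}\times K_{B_2}$, we can write $\eta = \lambda(x_1,x_2)$ with $\lambda \geq 0$, $x_1 \in K_{B_1}$, $x_2 \in K_{B_2}$ (the trivial case $\eta = 0$ being handled by $\Xi = 0$). Now set $\Xi = \lambda f \otimes x_1 \otimes x_2 \in A(K_A)^+ \tmin A(K_{B_1})^{*+} \tmin A(K_{B_2})^{*+}$; using $\<x_1,1\> = \<x_2,1\> = 1$, the defining formula for $J$ gives
\begin{equation*}
J(\Xi) = \lambda f \otimes (x_1, x_2) = f \otimes \eta,
\end{equation*}
exhibiting the desired preimage. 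Conical combinations on the right-hand side then have preimages given by the same combination of the $\Xi$'s, again by linearity of $J$.

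The argument is essentially a direct computation, so I do not expect a genuine obstacle; the only point that needs care is the passage from a general element $\eta \in A(K_{B_1}\times K_{B_2})^{*+}$ to the form $\lambda(x_1,x_2)$, which is exactly what the earlier base-of-cone proposition supplies. Once the two generators match up under $J$ and its explicit formula, linearity closes the argument.
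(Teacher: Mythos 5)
Your proposal is correct and follows essentially the same route as the paper: compute $J$ on elementary tensors, use the characterization of $A(K_{B_1}\times K_{B_2})^{*+}$ as positive pairs with equal normalization (the base-of-cone fact) to identify images and build rescaled preimages $k^{-1} f\otimes\psi\otimes\varphi$, and close with linearity of $J$. The only cosmetic difference is that you argue generator by generator while the paper decomposes the whole element into a sum with common $f_i$ (reusing the construction from its Proposition~\ref{prop:compat-JpreimageExists}), and your explicit treatment of the $\eta=0$ case is a small point the paper glosses over.
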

\begin{proof}
Let $(\xi_1, \xi_2) \in A(K_A)^+ \tmin A(K_{B_1} \times K_{B_2})^{*+}$ then as in the proof of Prop. \ref{prop:compat-JpreimageExists} we have
\begin{align*}
\xi_1 &= \sum_{i=1}^n f_i \otimes \psi_i \\
\xi_2 &= \sum_{i=1}^n f_i \otimes \varphi_i
\end{align*}
but now we have $f_i \geq 0$, $\psi_i \geq 0$ and $\varphi_i \geq 0$ for $i \in \{ 1, \ldots, n \}$. It follows by the same construction as in the proof of Prop. \ref{prop:compat-JpreimageExists} that  we can construct $\Xi = \sum_{i=1}^n k_i^{-1} f_i \otimes \psi_i \otimes \varphi_i$ and we get $\Xi \in A(K_A)^+ \tmin A(K_{B_1})^{*+} \tmin A(K_{B_2})^{*+}$.

Let $\Xi \in A(K_A)^+ \tmin A(K_{B_1})^{*+} \tmin A(K_{B_2})^{*+}$, then we have $\Xi = \sum_{i=1}^n f_i \otimes \psi_i \otimes \varphi_i$ such that $f_i \geq 0$, $\psi_i \geq 0$, $\varphi_i \geq 0$ for all $i \in \{1, \ldots, n \}$, moreover without lack of generality we can assume $\< \psi_i, 1_{B_1} \> = \< \varphi_i, 1_{B_2} \> = 1$. We have
\begin{align*}
J(\Xi) = \sum_{i=1}^n f_i \otimes (\psi_i, \varphi_i) \in A(K_A)^+ \tmin A(K_{B_1} \times K_{B_2})^{*+}
\end{align*}
which concludes the proof.
\end{proof}

It would be very useful to know what is the image of the cone $A(K_A)^+ \tmax A(K_{B_1} \treal K_{B_2})^{*+}$ when mapped by $J$. We will denote the resulting cone $Q = J(A(K_A)^+ \tmax A(K_{B_1} \treal K_{B_2})^{*+})$. The cone is important due to the following:
\begin{coro} \label{coro:compat-iff}
The channels $\Phi_1$, $\Phi_2$ are compatible if and only if
\begin{equation*}
(\Phi_1, \Phi_2) \in Q = J(A(K_A)^+ \tmax A(K_{B_1} \treal K_{B_2})^{*+}).
\end{equation*}
\end{coro}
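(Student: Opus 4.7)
The plan is to observe that the corollary is an almost purely definitional consequence of Proposition \ref{prop:compat-condition} combined with the definition of $Q$. By the definition of the image of a cone under a map, the statement $(\Phi_1, \Phi_2) \in Q$ is, tautologically, equivalent to the existence of some $\Phi \in A(K_A)^+ \tmax A(K_{B_1} \treal K_{B_2})^{*+}$ with $J(\Phi) = (\Phi_1, \Phi_2)$. Proposition \ref{prop:compat-condition} identifies precisely this existence statement with compatibility of $\Phi_1$ and $\Phi_2$, so the two conditions can be chained together.

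Concretely, I would write the proof in two short directions. For the forward direction, assuming $\Phi_1$ and $\Phi_2$ are compatible, Proposition \ref{prop:compat-condition} supplies a joint channel $\Phi \in A(K_A)^+ \tmax A(K_{B_1} \treal K_{B_2})^{*+}$ with $J(\Phi) = (\Phi_1, \Phi_2)$; hence $(\Phi_1, \Phi_2)$ lies in the image $Q$ by the very definition of $Q$. For the converse, assuming $(\Phi_1, \Phi_2) \in Q$, the definition of $Q$ yields some $\Phi$ in the cone $A(K_A)^+ \tmax A(K_{B_1} \treal K_{B_2})^{*+}$ with $J(\Phi) = (\Phi_1, \Phi_2)$, and Proposition \ref{prop:compat-condition} then gives compatibility.

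The only subtlety worth flagging is that membership in the cone $A(K_A)^+ \tmax A(K_{B_1} \treal K_{B_2})^{*+}$ does not automatically make $\Phi$ a channel in the normalized sense; one needs $(1 \otimes 1) \circ \Phi = 1$. However this is already folded into Proposition \ref{prop:compat-condition}, and can also be read off from Proposition \ref{prop:compat-JpreimageExists}: since $\Phi_1$ and $\Phi_2$ are genuine channels, the pair $(\Phi_1, \Phi_2)$ satisfies $(1,1) \circ (\Phi_1, \Phi_2) = 1$, and any $\Xi$ mapped to this pair by $J$ automatically satisfies $(1 \otimes 1) \circ \Xi = 1$. So there is no hidden obstacle; the corollary really is just a compact repackaging of Proposition \ref{prop:compat-condition} that will be useful later because the cone $Q$ can be studied as a single geometric object.
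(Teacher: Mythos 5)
Your proof is correct and takes essentially the same route as the paper, which simply cites Proposition \ref{prop:compat-condition}: membership in the image $Q$ is by definition the existence of a preimage $\Phi$ in the cone, and that existence statement is exactly what Proposition \ref{prop:compat-condition} equates with compatibility. Your extra remark on normalization via Proposition \ref{prop:compat-JpreimageExists} is a reasonable clarification but not a departure from the paper's argument.
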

\begin{proof}
Follows from Prop. \ref{prop:compat-condition}.
\end{proof}

\begin{prop}
$A(K_A)^+ \tmin A(K_{B_1} \times K_{B_2})^{*+} \subset Q$.
\end{prop}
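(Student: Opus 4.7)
The plan is to leverage Proposition \ref{prop:compat-JminimalProd} together with the standard inclusions between minimal and maximal tensor products, and then apply $J$ to everything.

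First, recall that by Proposition \ref{prop:compat-JminimalProd} we have
\begin{equation*}
A(K_A)^+ \tmin A(K_{B_1} \times K_{B_2})^{*+} = J\bigl(A(K_A)^+ \tmin A(K_{B_1})^{*+} \tmin A(K_{B_2})^{*+}\bigr).
\end{equation*}
So it suffices to show the inclusion
\begin{equation*}
A(K_A)^+ \tmin A(K_{B_1})^{*+} \tmin A(K_{B_2})^{*+} \subseteq A(K_A)^+ \tmax A(K_{B_1} \treal K_{B_2})^{*+}
\end{equation*}
and then apply $J$ to both sides, using the fact that $J$ is linear (Prop. \ref{prop:compat-Jlinear}) and hence preserves set inclusions.

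To prove the inclusion, I would take a generic element $\Xi = \sum_i f_i \otimes \psi_i \otimes \varphi_i$ with $f_i \in A(K_A)^+$, $\psi_i \in A(K_{B_1})^{*+}$, $\varphi_i \in A(K_{B_2})^{*+}$. For each $i$, the product $\psi_i \otimes \varphi_i$ is a separable functional and thus lies in $A(K_{B_1})^{*+} \tmin A(K_{B_2})^{*+} \subseteq A(K_{B_1} \treal K_{B_2})^{*+}$, since the real tensor product always contains the minimal one. Therefore
\begin{equation*}
\Xi \in A(K_A)^+ \tmin A(K_{B_1} \treal K_{B_2})^{*+} \subseteq A(K_A)^+ \tmax A(K_{B_1} \treal K_{B_2})^{*+},
\end{equation*}
where the last step uses the general fact that the minimal tensor of two cones is contained in the maximal one.

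Combining these two ingredients yields
\begin{equation*}
A(K_A)^+ \tmin A(K_{B_1} \times K_{B_2})^{*+} = J(\cdots) \subseteq J\bigl(A(K_A)^+ \tmax A(K_{B_1} \treal K_{B_2})^{*+}\bigr) = Q,
\end{equation*}
which is the desired statement. There is no real obstacle here: the argument is a bookkeeping exercise in chaining standard inclusions of tensor-product cones, with Proposition \ref{prop:compat-JminimalProd} providing the key identification that turns a statement about the image under $J$ of separable elements in the three-fold cone into one about separable elements of the conditional cone $A(K_{B_1} \times K_{B_2})^{*+}$.
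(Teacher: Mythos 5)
Your proof is correct and follows essentially the same route as the paper: both rely on the inclusion $A(K_A)^+ \tmin A(K_{B_1})^{*+} \tmin A(K_{B_2})^{*+} \subset A(K_A)^+ \tmax A(K_{B_1} \treal K_{B_2})^{*+}$, apply $J$, and then identify the image of the three-fold minimal cone via Prop.~\ref{prop:compat-JminimalProd}. The only cosmetic difference is that you spell out the chain minimal $\subseteq$ real $\subseteq$ maximal explicitly, which the paper leaves implicit.
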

\begin{proof}
Since
\begin{align*}
A(K_A)^+ \tmin A(K_{B_1} \tmin K_{B_2})^{*+} \subset A(K_A)^+ \tmax A(K_{B_1} \treal K_{B_2})^{*+}
\end{align*}
we must have
\begin{equation*}
J(A(K_A)^+ \tmin A(K_{B_1} \tmin K_{B_2})^{*+}) \subset Q.
\end{equation*}
The result follows from Prop. \ref{prop:compat-JminimalProd}.
\end{proof}

\begin{prop}
$Q \subset A(K_A)^+ \tmax A(K_{B_1} \times K_{B_2})^{*+}$.
\end{prop}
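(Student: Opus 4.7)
The plan is to show that if $\Xi \in A(K_A)^+ \tmax A(K_{B_1} \treal K_{B_2})^{*+}$ then $J(\Xi)$ lies in $A(K_A)^+ \tmax A(K_{B_1} \times K_{B_2})^{*+}$. By the duality defining the maximal tensor product of cones, this reduces to verifying that
\begin{equation*}
\langle J(\Xi), x \otimes g \rangle \geq 0
\end{equation*}
for every $x \in A(K_A)^{*+}$ and every positive function $g \in A(K_{B_1} \times K_{B_2})^+$.

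The first step is to pick a convenient representative of $g$. Using the identification from subsection \ref{subsec:GPT-direct}, write $g$ as the equivalence class of some $(g_1, g_2) \in A(K_{B_1}) \times A(K_{B_2})$. Positivity of $g$ as an affine function on $K_{B_1} \times K_{B_2}$ forces $\min_{K_{B_1}} g_1 + \min_{K_{B_2}} g_2 \geq 0$. Choosing $k = -\min g_1$ and replacing $(g_1, g_2)$ by the equivalent pair $(g_1 + k, g_2 - k)$, the first entry becomes $g_1 - \min g_1 \geq 0$, while the second has infimum $\min g_2 + \min g_1 \geq 0$. Hence without loss of generality we may assume $g_1 \in A(K_{B_1})^+$ and $g_2 \in A(K_{B_2})^+$.

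With such a choice, unwinding the definition of $J$ and the marginal maps gives
\begin{equation*}
\langle J(\Xi), x \otimes g \rangle = \langle \Xi, x \otimes (g_1 \otimes 1 + 1 \otimes g_2) \rangle.
\end{equation*}
Since $g_1 \otimes 1$ and $1 \otimes g_2$ are elementary products of positive functions, their sum belongs to $A(K_{B_1})^+ \tmin A(K_{B_2})^+$, which by definition of the maximal tensor product is contained in $A(K_{B_1} \tmax K_{B_2})^+$, and in turn in $A(K_{B_1} \treal K_{B_2})^+$ because $K_{B_1} \treal K_{B_2} \subseteq K_{B_1} \tmax K_{B_2}$. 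The assumption $\Xi \in A(K_A)^+ \tmax A(K_{B_1} \treal K_{B_2})^{*+}$ paired with $x \otimes (g_1 \otimes 1 + 1 \otimes g_2) \in A(K_A)^{*+} \tmin A(K_{B_1} \treal K_{B_2})^+$ then delivers the required nonneg pairing. The only subtle step in this argument is the shift to positive representatives of $g$; the remainder is a routine unwinding of the definitions.
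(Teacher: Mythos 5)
Your argument is correct and follows essentially the same route as the paper: show $J$ maps $A(K_A)^+ \tmax A(K_{B_1} \treal K_{B_2})^{*+}$ into $A(K_A)^+ \tmax A(K_{B_1} \times K_{B_2})^{*+}$ by testing against elements $x \otimes g$ with $x \in A(K_A)^{*+}$ and $g \in A(K_{B_1} \times K_{B_2})^+$, reducing the pairing to $\< \Xi(x), g_1 \otimes 1 + 1 \otimes g_2 \> \geq 0$. The one place where you go beyond the paper's write-up is the shift to a representative with $g_1 \geq 0$ and $g_2 \geq 0$ (equivalently, noting $g_1 \otimes 1 + 1 \otimes g_2$ lies in $A(K_{B_1})^+ \tmin A(K_{B_2})^+$): the paper's final inequality is stated termwise, which tacitly assumes such positive representatives, so your explicit handling of this point is a welcome clarification rather than a deviation.
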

\begin{proof}
Since we have
\begin{align*}
A(K_A)^+ \tmax A(K_{B_1} \treal K_{B_2})^{*+} \subset A(K_A)^+ \tmax A(K_{B_1} \tmax K_{B_2})^{*+}
\end{align*}
we must have
\begin{equation*}
Q \subset J(A(K_A)^+ \tmax A(K_{B_1} \tmax K_{B_2})^{*+}).
\end{equation*}
Let $\Xi \in A(K_A)^+ \tmax A(K_{B_1} \tmax K_{B_2})^{*+}$, then for $\psi \in A(K_A)^{*+}$ and $(f_1, f_2) \in A(K_{B_1} \times K_{B_2})^+$ we get
\begin{align*}
\< J(\Xi), x \otimes f \> &= \< \left( (id \otimes 1) \circ \Xi, (1 \otimes id) \circ \Xi \right), x \otimes f \> \\
&= \< \Xi(x), f_1 \otimes 1 \> + \< \Xi(x), 1 \otimes f_2 \> \geq 0,
\end{align*}
that shows we have $J(A(K_A)^+ \tmax A(K_{B_1} \tmax K_{B_2})^{*+}) \subset A(K_A)^+ \tmax A(K_{B_1} \times K_{B_2})^{*+}$ which concludes the proof.
\end{proof}

We can also construct $Q$ as the cone we get when we factorize the cone $A(K_A)^+ \tmax  A(K_{B_1} \treal K_{B_2})^{*+} $ with respect to the relation of equivalence given as follows: $\Xi_1 \approx \Xi_2$ if and only if $J(\Xi_1) = J(\Xi_2)$, or equivalently if and only if $\Xi_1 = \Xi_2 + \Xi$, such that $J(\Xi) = 0$.

Note that since $J$ is a linear map, as we showed in Prop. \ref{prop:compat-Jlinear}, it is clear that $Q$ is a convex cone. For two given channels $\Phi_1: K_A \to K_{B_1}$, $\Phi_2: K_A \to K_{B_1}$ one may write a primal linear program that would check the condition for compatibility given by Cor. \ref{coro:compat-iff}. We will write such linear program for quantum channels later.

\section{Compatibility of measurements} \label{sec:meascompat}
We will apply the results of Sec. \ref{sec:compat} to the problem of compatibility of measurements. We will obtain the same results that were recently presented in \cite{Jencova-compatibility}, that are generalization a of \cite{NamiokaPhelps-tensorProd}.

Let $K_A$ be a state space and let $S_1$, $S_2$ be simplexes and let $m_1: K_A \to S_1$, $m_2: K_A \to S_2$  be measurements. According to Prop. \ref{prop:compat-condition} the measurements $m_1$, $m_2$ are compatible if and only if
\begin{equation*}
(m_1, m_2) \in J(A(K_A)^+ \tmax A(S_1 \treal S_2)^{*+}).
\end{equation*}
Since both $S_1$ and $S_2$ are simplexes, then we have $S_1 \treal S_2 = S_1 \tmin S_2$ and the condition for compatibility reduces according to Prop. \ref{prop:compat-JminimalProd} to
\begin{equation*}
(m_1, m_2) \in A(K_A)^+ \tmin A(S_1 \times S_2)^{*+}.
\end{equation*}

Due to the simpler structure of simplexes one may get even more specific results about measurements, see \cite{Jencova-compatibility}.

For demonstration of the derived conditions we will reconstruct the result of \cite{NamiokaPhelps-tensorProd} about compatibility of two-outcome measurements. According to our definition, a measurement is two-outcome if the simplex it has as a target space has two vertexes, i.e. it is a line segment. Let $K$ be a state space, $f, g \in E(K)$ and $m_f: K \to S$, $m_g: K \to S$ be two-outcome measurements given as
\begin{align*}
m_f &= f \otimes \delta_{\omega_1} + (1-f) \otimes \delta_{\omega_2}, \\
m_g &= g \otimes \delta_{\omega_1} + (1-g) \otimes \delta_{\omega_2}.
\end{align*}
The state space given by $A(S \times S)^{*+}$ is a square given as $\conv( (\delta_{\omega_1}, \delta_{\omega_1}), (\delta_{\omega_1}, \delta_{\omega_2}), (\delta_{\omega_2}, \delta_{\omega_1}), (\delta_{\omega_2}, \delta_{\omega_2}) )$, that is just affinely isomorphic to $S \times S$. We have
\begin{align*}
(m_1, m_2) &= f \otimes (\delta_{\omega_1}, 0 ) + (1-f) \otimes (\delta_{\omega_2}, 0 ) \\
&+ g \otimes (0, \delta_{\omega_1}) + (1-g) \otimes (0, \delta_{\omega_2}) \\
&= f \otimes (\delta_{\omega_1}, \delta_{\omega_2}) + (1-f) \otimes (\delta_{\omega_2}, \delta_{\omega_2}) \\
&+ g \otimes (0, \delta_{\omega_1} - \delta_{\omega_2}),
\end{align*}
where in the second step we have used the basis $(\delta_{\omega_1}, \delta_{\omega_2})$, $(\delta_{\omega_2}, \delta_{\omega_2})$, $(0, \delta_{\omega_1} - \delta_{\omega_2})$ of $A(S\times S)^*$ to express $(m_1, m_2)$ in a more reasonable form. To have $(m_1, m_2) \in A(K)^+ \tmin A(S \times S)^{*+}$ we must have
\begin{align*}
(m_1, m_2) &= h_{11} \otimes (\delta_{\omega_1}, \delta_{\omega_1}) + h_{12} \otimes (\delta_{\omega_1}, \delta_{\omega_2}) \\
&+ h_{21} \otimes (\delta_{\omega_2}, \delta_{\omega_1}) + h_{22} \otimes (\delta_{\omega_2}, \delta_{\omega_2}) \\
&= ( h_{11} + h_{12} ) \otimes (\delta_{\omega_1}, \delta_{\omega_2}) \\
&+ ( h_{21} + h_{22} ) \otimes (\delta_{\omega_2}, \delta_{\omega_2}) \\
&+ ( h_{11} + h_{21} ) \otimes (0, \delta_{\omega_1} - \delta_{\omega_2}),
\end{align*}
for some $h_{11}, h_{12}, h_{21}, h_{22} \in E(K)$. This implies the standard conditions for the compatibility of two-outcome measurements $m_f$, $m_g$:
\begin{align*}
f &= h_{11} + h_{12}, \\
1-f &= h_{21} + h_{22}, \\
g &= h_{11} + h_{21},
\end{align*}
see e.g. \cite{Plavala-simplex}.

\section{Compatibility of quantum channels} \label{sec:quantcompat}
In this section we will derive results more specific to the compatibility of quantum channels. Let $\Phi_1: \dens_\Ha \to \dens_\Ha$, $\Phi_2: \dens_\Ha \to \dens_\Ha$ be quantum channels, then according to Prop. \ref{prop:compat-condition} they are compatible if and only if there is a channel $\Phi: \dens_\Ha \to \dens_{\Ha \otimes \Ha}$ such that for all $\rho \in \dens_\Ha$ we have
\begin{equation}
(\Phi_1 (\rho), \Phi_2(\rho) ) = ( \Tr_2( \Phi (\rho) ), \Tr_1 ( \Phi ( \rho ) ) ). \label{eq:quantcompat-baseCond}
\end{equation}
This is equivalent to the definition of compatibility of quantum channels already stated in \cite{HeinosaariMiyadera-compOfChan}. It is straightforward that Eq. \eqref{eq:quantcompat-baseCond} implies that
\begin{equation*}
( C(\Phi_1), C(\Phi_2) ) = ( \Tr_2( C(\Phi) ), \Tr_1 ( C(\Phi) ) ),
\end{equation*}
we will show that they are equivalent. This will help us to get rid of the state $\rho$ in Eq. \eqref{eq:quantcompat-baseCond}.

\begin{prop} \label{prop:quantcompat-Choi}
The channels $\Phi_1: \dens_\Ha \to \dens_\Ha$, $\Phi_2: \dens_\Ha \to \dens_\Ha$ are compatible if and only if there exists a channel $\Phi: \dens_\Ha \to \dens_{\Ha \otimes \Ha}$ such that
\begin{equation*}
( C(\Phi_1), C(\Phi_2) ) = ( \Tr_2( C(\Phi) ), \Tr_1 ( C(\Phi) ) ).
\end{equation*}
\end{prop}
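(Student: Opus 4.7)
The plan is to exploit the fact that the Choi correspondence $\Phi \mapsto C(\Phi)$ is a linear bijection between the set of channels and the set of matrices $C \in B_h(\Ha \otimes \Ha \otimes \Ha)$ with $C \geq 0$ and $\Tr_1(C) = \I$ (where the ``$\Tr_1$'' here refers to the ancilla register). The reverse direction of the proposition is the substantive one: we must show that an equality of marginal Choi matrices actually pulls back to an equality of channels acting on every state $\rho \in \dens_\Ha$, and not just on the maximally entangled state.

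The key lemma I would establish first is that partial tracing on the output register of $\Phi$ commutes with taking the Choi matrix. Concretely, for any channel $\Phi : \dens_\Ha \to \dens_{\Ha \otimes \Ha}$ we have
\begin{equation*}
\Tr_2(C(\Phi)) = C((\Tr_2 \otimes id_\Ha) \circ \Phi) = C(\Phi_1'), \qquad \Tr_1(C(\Phi)) = C(\Phi_2'),
\end{equation*}
where $\Phi_1' = (\Tr_2 \otimes id) \circ \Phi$ and $\Phi_2' = (\Tr_1 \otimes id) \circ \Phi$ are precisely the marginal channels $(id \otimes 1) \circ \Phi$ and $(1 \otimes id) \circ \Phi$ in the language of Sec.~\ref{sec:compat}. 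This is immediate from the definition $C(\Phi) = (\Phi \otimes id_\Ha)(\sum_{i,j}|ii\>\<jj|)$ together with the observation that partial tracing on the output of $\Phi$ commutes with $\Phi \otimes id_\Ha$, since $id_\Ha$ acts on a separate tensor factor from the one being traced out.

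Given this lemma, both directions become routine. For the forward direction, if $\Phi$ is a joint channel of $\Phi_1$ and $\Phi_2$ then $\Phi_1 = \Phi_1'$ and $\Phi_2 = \Phi_2'$ by definition of compatibility, so applying the lemma yields the claimed Choi-matrix identity. For the reverse direction, suppose $\Phi$ is such that $C(\Phi_1) = \Tr_2(C(\Phi)) = C(\Phi_1')$ and $C(\Phi_2) = \Tr_1(C(\Phi)) = C(\Phi_2')$; by injectivity of the Choi map we conclude $\Phi_1 = \Phi_1'$ and $\Phi_2 = \Phi_2'$, which is exactly Eq.~\eqref{eq:quantcompat-baseCond}, so $\Phi_1, \Phi_2$ are compatible with joint channel $\Phi$.

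There is no real obstacle here; the only thing to be careful about is bookkeeping of the three Hilbert-space factors in $C(\Phi) \in B_h(\Ha \otimes \Ha \otimes \Ha)$, so that the partial traces $\Tr_1$ and $\Tr_2$ in the statement are unambiguously understood to act on the two output registers of $\Phi$ (rather than on the ancilla register used to build the Choi matrix). Once this is fixed, the proof reduces to the observation that the Choi correspondence intertwines post-composition with completely positive maps on the output, applied to the specific case of partial trace.
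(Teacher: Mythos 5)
Your proposal is correct and is essentially the paper's own argument in different clothing: the paper proves the nontrivial direction via the reconstruction formula $\Phi(\rho)=\Tr_E\bigl(C(\Phi)\,\I\otimes\I\otimes\rho^T\bigr)$, which packages exactly your two ingredients (the intertwining identity $\Tr_2(C(\Phi))=C(\Tr_2\circ\Phi)$ and injectivity of the Choi map) into a single computation. One cosmetic slip that does not affect your argument, since you only use injectivity and the intertwining lemma: the normalization characterizing Choi matrices is that tracing out the \emph{output} register(s) leaves $\I$ on the ancilla, not a partial trace over the ancilla register itself.
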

\begin{proof}
Let $\rho \in \dens_\Ha$, then we have
\begin{align*}
\Tr_2( \Phi (\rho) ) &= \Tr_{2, E} ( C(\Phi) \I \otimes \I \otimes \rho^T ) \\
&= \Tr_{E} ( \Tr_2 ( C(\Phi) ) \I \otimes \rho^T ) \\
&= \Tr_E ( C(\Phi_1) \I \otimes \rho^T ) = \Phi_1 (\rho).
\end{align*}
The same follows for $\Phi_2$.
\end{proof}

As we already showed in Sec. \ref{sec:compat}, the cone $Q = J( A(\dens_\Ha)^+ \tmax A( \dens_{\Ha \otimes \Ha})^{*+})$ is of interest for the compatibility of channels. In the case of quantum channels we will use Prop. \ref{prop:quantcompat-Choi} to formulate similar cone in terms of Choi matrices of the channels and we will write a semi-definite program for the compatibility of quantum channels based on this approach.

Denote $P = \{ (\Tr_2 (C), \Tr_1 (C)) : C \in \chan_{\Ha \to \Ha \otimes \Ha} \}$, then according to Prop. \ref{prop:quantcompat-Choi} the channels $\Phi_1: \dens_\Ha \to \dens_\Ha$, $\Phi_2: \dens_\Ha \to \dens_\Ha$ are compatible if and only if
\begin{equation*}
( C(\Phi_1), C(\Phi_2) ) \in P.
\end{equation*}
Note that, by our definition, $P$ is not a cone, but it generates some cone just by adding all of the operators of the form $\lambda C$, where $C \in P$ and $\lambda \in \mathbb{R}$, $\lambda \geq 0$.

It would be very interesting to obtain more specific results on the structure of $P$, but the task is not trivial. To make it simpler we will investigate the structure of the dual cone $P^*$ given as
\begin{align*}
P^* = \{& (A, B) \in B_h(\Ha) \times B_h(\Ha): \\
&\< C, (A, B) \> \geq 0, \; \forall C \in P \}.
\end{align*}
Notice that $(A, B) \in B_h(\Ha) \times B_h(\Ha)$ is simply a block-diagonal matrix having blocks $A$ and $B$. Also every $C \in P$ is a block diagonal matrix, let $C = (C_1, C_2)$, then
\begin{equation*}
\< (C_1, C_2), (A, B) \> = \Tr(C_1 A) + \Tr(C_2 B).
\end{equation*}
Let $C \in P$, then by definition there exist a channel $\Phi: \dens_\Ha \to \dens_{\Ha \otimes \Ha}$ such that
\begin{equation*}
C = ( \Tr_2 (C(\Phi)), \Tr_1 ( C(\Phi)) ).
\end{equation*}
Let $(A, B) \in P^*$, then we have
\begin{align*}
\< C, (A, B) \> &= \Tr \big( \Tr_2 (C(\Phi)) A + \Tr_1 ( C(\Phi)) B \big) \\
&= \Tr ( C(\Phi) (\tilde{A} + \I \otimes B ) ) \geq 0,
\end{align*}
where $\tilde{A}$ is the operator such that $\Tr ( \Tr_2 (C(\Phi)) A ) = \Tr ( C(\Phi) \tilde{A} )$. If $A = A_1 \otimes A_2$, then $\tilde{A} = A_1 \otimes \I \otimes A_2$. In general one can write $A$ as a sum of factorized operators and express $\tilde{A}$ in such way, because the map $A \mapsto \tilde{A}$ is linear.

The result is that $\tilde{A} + \I \otimes B$ must correspond to a positive function on quantum channels, hence we must have $\tilde{A} + \I \otimes B \geq 0$, see \cite{Jencova-extremalGenMeas, Ziman-ppovm}. We have proved the following:
\begin{prop} \label{prop:quantcompat-dualCone}
The channels $\Phi_1: \dens_\Ha \to \dens_\Ha$, $\Phi_2: \dens_\Ha \to \dens_\Ha$ are compatible if and only if
\begin{equation*}
\Tr( C(\Phi_1) A ) + \Tr( C(\Phi_2) B ) \geq 0
\end{equation*}
for all $A, B \in B_h(\Ha \otimes \Ha)$ such that
\begin{equation*}
\tilde{A} + \I \otimes B \geq 0.
\end{equation*}
\end{prop}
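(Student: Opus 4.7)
The plan is to deduce the claimed equivalence by dualizing the cone-membership characterization of compatibility already obtained for quantum channels. Most of the ingredients are assembled in the paragraphs preceding the proposition; the proof should tie them together with a single appeal to the bipolar theorem.

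\textbf{Forward direction.} I would begin with the easy ``only if'' implication, which is essentially a restatement of the calculation in the paragraph just above the proposition. Assuming $\Phi_1,\Phi_2$ are compatible with joint channel $\Phi$, Prop.~\ref{prop:quantcompat-Choi} gives $C(\Phi_1)=\Tr_2 C(\Phi)$ and $C(\Phi_2)=\Tr_1 C(\Phi)$. Using the defining property of $\tilde A$ (so that $\Tr(\Tr_2 C(\Phi)\,A)=\Tr(C(\Phi)\tilde A)$) together with the partial trace identity $\Tr(\Tr_1 C(\Phi)\,B)=\Tr(C(\Phi)(\I\otimes B))$, I can collapse the sum into
\begin{equation*}
\Tr(C(\Phi_1)A)+\Tr(C(\Phi_2)B)=\Tr\bigl(C(\Phi)(\tilde A+\I\otimes B)\bigr)\ge 0,
\end{equation*}
where positivity follows because $C(\Phi)\ge 0$ and $\tilde A+\I\otimes B\ge 0$ by hypothesis.

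\textbf{Reverse direction.} Here I would use duality. Let $P$ be the set of pairs $(\Tr_2 C(\Phi),\Tr_1 C(\Phi))$ introduced above the proposition, and let $\cone(P)$ denote the convex cone it generates. By Prop.~\ref{prop:quantcompat-Choi}, compatibility of $\Phi_1,\Phi_2$ is equivalent to $(C(\Phi_1),C(\Phi_2))\in P$. The dual cone $\cone(P)^*$ lives in $B_h(\Ha\otimes\Ha)\times B_h(\Ha\otimes\Ha)$ and the paragraph above the proposition identifies it: one inclusion was shown there (if $(A,B)\in\cone(P)^*$ then $\tilde A+\I\otimes B$ must be a positive functional on Choi matrices, hence $\tilde A+\I\otimes B\ge 0$ by the cited external characterization of positive functions on channels), while the reverse inclusion is immediate since $C(\Phi)\ge 0$. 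After verifying that $\cone(P)$ is closed and convex (closedness from compactness of the set of channels rescaled to fixed trace, convexity from convexity of channels), the bipolar theorem gives $\cone(P)^{**}=\cone(P)$. Thus if the inequality in the proposition holds for every admissible $(A,B)$, then $(C(\Phi_1),C(\Phi_2))\in\cone(P)$. Finally I would promote this to membership in $P$ itself: both $C(\Phi_1)$ and $C(\Phi_2)$ satisfy $\Tr_1 C(\Phi_i)=\I$, and any preimage $C\in\cone(P)$ with $\Tr_2 C=C(\Phi_1)$ inherits $\Tr_1\Tr_2 C=\I$, fixing the overall scale so that the underlying Choi matrix is that of a genuine (trace-preserving) channel $\Phi$. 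Compatibility then follows from Prop.~\ref{prop:quantcompat-Choi}.

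\textbf{Anticipated obstacle.} The delicate step is the identification of $\cone(P)^*$ with the set of $(A,B)$ such that $\tilde A+\I\otimes B\ge 0$. The nontrivial inclusion requires that if $\Tr(C(\tilde A+\I\otimes B))\ge 0$ for every Choi matrix $C$, then $\tilde A+\I\otimes B$ is actually positive semidefinite — this is not automatic because Choi matrices are constrained by $\Tr_1 C=\I$ and therefore do not span the whole PSD cone, and I would rely on the external results invoked in the text (\cite{Jencova-extremalGenMeas, Ziman-ppovm}) to close this gap. The other subtlety, passing from $\cone(P)$ back to $P$, is handled by the trace-preservation bookkeeping sketched above.
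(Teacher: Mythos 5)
Your proposal follows essentially the same route as the paper: the forward direction is the same collapse identity $\Tr(C(\Phi_1)A)+\Tr(C(\Phi_2)B)=\Tr\bigl(C(\Phi)(\tilde A+\I\otimes B)\bigr)$, and the reverse direction rests, exactly as in the text preceding the proposition, on identifying the dual cone of $P$ with the pairs $(A,B)$ satisfying $\tilde A+\I\otimes B\geq 0$ via the same external references \cite{Jencova-extremalGenMeas, Ziman-ppovm}, combined with Prop.~\ref{prop:quantcompat-Choi}. The only difference is that you spell out the bipolar/closedness argument and the trace-preservation bookkeeping that promotes membership in $\cone(P)$ to membership in $P$, steps the paper leaves implicit, so your write-up is if anything more explicit than the paper's own proof.
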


This allows us to formulate the semi-definite program \cite{BoydVandenberghe-convex} for the compatibility of quantum channels as follows:
\begin{prop}
Given channels $\Phi_1: \dens_\Ha \to \dens_\Ha$, $\Phi_2: \dens_\Ha \to \dens_\Ha$, the semi-definite program for the compatibility of quantum channels is
\begin{align*}
\inf_{A, B} \Tr( C(\Phi_1) A ) + \Tr( C(\Phi_2) B ) \\
\tilde{A} + \I \otimes B \geq 0,
\end{align*}
where $\tilde{A}$ is given as above.

If the reached infimum is negative, then the channels are incompatible, if the reached infimum is $0$ then the channels are compatible.
\end{prop}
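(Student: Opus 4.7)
\emph{Proof proposal.} The plan is to read the statement as essentially a reformulation of Proposition~\ref{prop:quantcompat-dualCone} in the language of convex optimization. By that proposition, the channels $\Phi_1$, $\Phi_2$ are compatible if and only if the linear functional
\begin{equation*}
L(A,B) = \Tr(C(\Phi_1)A) + \Tr(C(\Phi_2)B)
\end{equation*}
is non-negative on the feasible set $F = \{(A,B): \tilde{A} + \I \otimes B \geq 0\}$, where as before $A \mapsto \tilde{A}$ is the linear map determined by $A_1 \otimes A_2 \mapsto A_1 \otimes \I \otimes A_2$ on factorized operators and extended by linearity.

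The first step is to observe that $F$ is a convex cone in $B_h(\Ha\otimes\Ha) \times B_h(\Ha\otimes\Ha)$: since $A \mapsto \tilde{A}$ is linear and the positive semi-definite cone is itself a convex cone closed under non-negative scaling, if $(A,B) \in F$ then $(\lambda A, \lambda B) \in F$ for every $\lambda \geq 0$, and sums of feasible pairs remain feasible. Moreover $(0,0) \in F$ with $L(0,0) = 0$, so the infimum of $L$ over $F$ is always at most $0$.

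The second step is the desired dichotomy, which follows from the cone structure. If $L \geq 0$ on all of $F$, which by Proposition~\ref{prop:quantcompat-dualCone} is equivalent to compatibility of $\Phi_1, \Phi_2$, then the infimum equals $0$ and is attained at $(0,0)$. Otherwise, there exists some $(A_0,B_0) \in F$ with $L(A_0,B_0) < 0$; scaling by $\lambda \to \infty$ keeps $(\lambda A_0, \lambda B_0)$ feasible while $L(\lambda A_0, \lambda B_0) \to -\infty$, so the infimum is $-\infty$ and in particular strictly negative. Combining these two cases, the infimum is $0$ exactly when the channels are compatible, and strictly negative (indeed $-\infty$) exactly when they are incompatible.

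Finally, to justify calling this a semi-definite program I would record that the objective is linear in the block-diagonal variable $(A,B)$ and that the constraint $\tilde A + \I \otimes B \geq 0$ is a linear matrix inequality in $(A,B)$, because the map $(A,B) \mapsto \tilde A + \I \otimes B$ is linear. No nontrivial obstacle is expected here; the entire content is packaging Proposition~\ref{prop:quantcompat-dualCone} as an optimization problem and using the cone structure of $F$ to rule out any infimum strictly between $-\infty$ and $0$. The only care needed is to be explicit about the extension of the tilde operation to non-factorized $A$ so that the resulting LMI is unambiguous.
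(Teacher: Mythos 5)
Your proposal is correct and follows essentially the same route as the paper: both reduce the claim to Proposition~\ref{prop:quantcompat-dualCone} and note that choosing $A = B = 0$ makes the infimum at most $0$. Your explicit remarks about the conic structure of the feasible set and the scaling argument showing the infimum is $-\infty$ in the incompatible case merely spell out what the paper leaves implicit.
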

\begin{proof}
The result follows from Prop. \ref{prop:quantcompat-dualCone}. One may see that the infimum is at most $0$ because one may always chose $A = B = 0$.
\end{proof}

\section{Prelude to steering and Bell nonlocality} \label{sec:prelude}
We will propose a possible test for the compatibility of channels that will not work, but it will motivate our definitions of steering and Bell nonlocality.

Let $K_A, K_{B_1}, K_{B_2}$ be state spaces and let $\Phi_1: K_A \to K_{B_1}$, $\Phi_2: K_A \to K_{B_2}$ be channels. The channels $\Phi_1$, $\Phi_2$ are compatible if Eq. \eqref{eq:compat-Jcond} is satisfied for some channel $\Phi: K_A \to K_{B_1} \treal K_{B_2}$. This is the same as saying the channels $\Phi_1$, $\Phi_2$ are compatible if for all $x \in K_A$ we have
\begin{equation}
(\Phi_1(x), \Phi_2(x)) = ( ((id \otimes 1) \circ \Phi)(x), ((1 \otimes id) \circ \Phi)(x) ). \label{eq:prelude-compat}
\end{equation}
If the channels $\Phi_1$ and $\Phi_2$ are compatible, then for every $x \in K_A$ there must exist a state $y \in K_{B_1} \treal K_{B_2}$ such that
\begin{align}
\Phi_1(x) &= (id \otimes 1)(y), \label{eq:prelude-x1} \\
\Phi_2(x) &= (1 \otimes id)(y). \label{eq:prelude-x2}
\end{align}
Would it be a reasonable test for the compatibility of the channels $\Phi_1$ and $\Phi_2$ if we considered the state $x \in K_A$ fixed and we would test whether, for the fixed state $x$, there exists $y \in K_{B_1} \treal K_{B_2}$ such that Eg. \eqref{eq:prelude-x1}, \eqref{eq:prelude-x2} are satisfied? It would not, because for a fixed $x \in K_A$ one always has $\Phi_1 (x) \otimes \Phi_2 (x) \in K_{B_1} \treal K_{B_2}$ that satisfies Eg. \eqref{eq:prelude-x1}, \eqref{eq:prelude-x2}.

Still, throwing away this line of thinking would not be a good choice, because going further, on may ask: if there would be another system $K_C$, such that $K_C \treal K_A$ is defined, then what if we would use the entanglement between the systems $K_A$ and $K_C$ to obtain a better condition for the compatibility of the channels $\Phi_1$, $\Phi_2$ using the very same line of thinking? As we will see, this approach leads to the notions of steering and Bell nonlocality.

\section{Steering} \label{sec:steering}
Steering is one of the puzzling phenomena we find in quantum theory but not in classical theory. It is usually described as a two party protocol, that allows one side to alter the state of the other in a way that would not be possible in classical theory by performing a measurement and announcing the outcome. Although originally discovered by Schr\"{o}dinger \cite{Schrodinger-steering}, steering was formalised in \cite{WisemanJonesDoherty-nonlocal}. Recently there was introduced a new formalism for steering in \cite{Jencova-compatibility}.

So far it was always only considered that during steering one party performs a measurement. Since a measurement is a special case of a channel, one may ask whether it is possible to define steering by channels. We will use our formalism for compatibility of channels to introduce steering by channels by continuing the line of thoughts presented in Sec. \ref{sec:prelude}. We will have to formulate steering in a little different way than it usually is formulated for measurements, but we will show that for measurements we will obtain the known results.

Let $K_A, K_{B_1}, K_{B_2}, K_C$ be finite-dimensional state spaces, such that $K_C \treal K_A$ is defined and let
\begin{align*}
\Phi_1&: K_A \to K_{B_1}, \\
\Phi_2&: K_A \to K_{B_2},
\end{align*}
be channels. We can construct channels
\begin{align*}
id \otimes \Phi_1 &: A(K_C)^{*+} \treal A(K_A)^{*+} \to A(K_C)^{*+} \tmax A(K_{B_1})^{*+}, \\
id \otimes \Phi_2 &: A(K_C)^{*+} \treal A(K_A)^{*+} \to A(K_C)^{*+} \tmax A(K_{B_2})^{*+}.
\end{align*}
Moreover we can construct the conditional channel
\begin{align*}
id \otimes (\Phi_1, \Phi_2)&: A(K_C)^{*+} \treal A(K_A)^{*+} \to \\
&\to A(K_C)^{*+} \tmax A(K_{B_1} \times K_{B_2})^{*+}.
\end{align*}
These channels play a central role in steering and we will keep this notation throughout this section. First, we will introduce a handy name for the output state of $id \otimes (\Phi_1, \Phi_2)$.

\begin{defin}
Let $\psi \in K_C \treal K_A$ be a bipartite state, then we call $(id \otimes (\Phi_1, \Phi_2))(\psi)$ a bipartite conditional state.
\end{defin}

Steering may be seen as a three party protocol that tests the compatibility of channels. The parties in question will be named Alice, Bob and Charlie. Alice and Charlie share a bipartite state $\psi \in K_C \treal K_A$ and Alice has the channels $\Phi_1$ and $\Phi_2$ at her disposal, that would send her part of the state $\psi$ to Bob. Since Alice can choose between the channels $\Phi_1$ and $\Phi_2$, she will be, in our formalism, applying the conditional channel $(\Phi_1, \Phi_2)$ and the resulting state will be a bipartite state from $A(K_C)^{*+} \treal A(K_{B_1} \times K_{B_2})^{*+}$. The structure of the resulting bipartite conditional state $(id \otimes (\Phi_1, \Phi_2) )(\psi)$ will not only depend on the input state $\psi$, but also on the compatibility of the channels $\Phi_1$ and $\Phi_2$. Let us assume that the channels $\Phi_1$ and $\Phi_2$ are compatible, then there is a channel $\Phi: K_A \to K_{B_1} \treal K_{B_2}$ such that $(\Phi_1, \Phi_2) = J(\Phi)$ and we have
\begin{align*}
( id \otimes (\Phi_1, \Phi_2) ) (\psi) &= (id \otimes J(\Phi) ) (\psi) \\
&= (id \otimes J') ( (id \otimes \Phi) (\psi) )
\end{align*}
where $J': A(K_{B_1} \treal K_{B_2})^{*} \to A(K_{B_1} \times K_{B_2})^{*}$, $J'(\psi) = ( (id \otimes 1)(\psi), (1 \otimes id)(\psi))$. The calculation shows that if the channels $\Phi_1$, $\Phi_2$ are compatible, then we must have
\begin{equation*}
( id \otimes (\Phi_1, \Phi_2) )(\psi) \in (id \otimes J') ( K_C \treal K_{B_1} \treal K_{B_2} )
\end{equation*}
which does not have to hold in general if the channels are not compatible. This shows that we can define steering of a state by channels as an entanglement assisted incompatibility test.

\begin{defin} \label{def:steering-steering}
The bipartite state $\psi \in A(K_C)^{*+} \treal A(K_A)^{*+}$ is steerable by channels $\Phi_1: A(K_A)^{*+} \to A(K_{B_1})^{*+}$, $\Phi_2: A(K_A)^{*+} \to A(K_{B_2})^{*+}$ if
\begin{equation*}
(id \otimes (\Phi_1, \Phi_2))(\psi) \notin (id \otimes J') ( K_C \treal K_{B_1} \treal K_{B_2} )
\end{equation*}
\end{defin}

Now we present the standard result about the connection between compatibility of the channels and steering. The result follows from our definition immediately.

\begin{coro}
The bipartite state $\psi \in A(K_C)^{*+} \treal A(K_A)^{*+}$ is not steerable by channels $\Phi_1: A(K_A)^{*+} \to A(K_B)^{*+}$, $\Phi_2: A(K_A)^{*+} \to A(K_B)^{*+}$ if the channels $\Phi_1$ and $\Phi_2$ are compatible.
\end{coro}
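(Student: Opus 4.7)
The plan is to read the corollary as a direct contrapositive formalization of the computation already carried out in the paragraph preceding Definition~\ref{def:steering-steering}. I will assume the channels $\Phi_1$ and $\Phi_2$ are compatible and deduce the negation of the steerability condition.

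First, by Proposition~\ref{prop:compat-condition} (or equivalently by Corollary~\ref{coro:compat-iff}) the hypothesis of compatibility gives a joint channel $\Phi : K_A \to K_{B_1} \treal K_{B_2}$ with $(\Phi_1,\Phi_2) = J(\Phi)$. Next, I would observe that the map $J$ acting on the conditional channel and the map $J'$ acting on bipartite states have exactly the same structural shape, namely partial evaluation of the unit effect on one factor and then on the other, so applying $J$ after $\Phi$ is the same as applying $\Phi$ first and then $J'$. This is precisely the identity
\begin{equation*}
(id \otimes (\Phi_1,\Phi_2))(\psi) = (id \otimes J(\Phi))(\psi) = (id \otimes J')\bigl((id \otimes \Phi)(\psi)\bigr),
\end{equation*}
which is displayed explicitly in the preamble of the section.

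Finally I would note that since $\psi \in K_C \treal K_A$ and $\Phi$ is a channel $K_A \to K_{B_1} \treal K_{B_2}$, the state $(id \otimes \Phi)(\psi)$ lies in $K_C \treal K_{B_1} \treal K_{B_2}$ (under the standing associativity assumption on $\treal$ stated in Subsec.~\ref{subsec:GPT-tensor}). Therefore $(id \otimes (\Phi_1,\Phi_2))(\psi)$ belongs to $(id \otimes J')(K_C \treal K_{B_1} \treal K_{B_2})$, which is exactly the negation of the condition in Definition~\ref{def:steering-steering}. Hence $\psi$ is not steerable by $\Phi_1,\Phi_2$.

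There is no real obstacle here; the content of the corollary is entirely preloaded in the discussion that motivated the definition. The only mild subtlety is making sure that the equality $J \circ (\,\cdot\,) = J' \circ (\,\cdot\,)$ used in the middle step is valid for the conditional channel viewed as an element of the appropriate tensor space, but this is immediate from the fact that both $J$ and $J'$ are defined by post-composing with $id \otimes 1$ and $1 \otimes id$ on the $K_{B_1} \treal K_{B_2}$ factor, and that $id \otimes \Phi$ does not touch that factor until $\Phi$ is applied.
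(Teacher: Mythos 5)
Your proposal is correct and follows essentially the same route as the paper: compatibility supplies the joint channel $\Phi$ with $(\Phi_1,\Phi_2)=J(\Phi)$, and the factorization $(id\otimes(\Phi_1,\Phi_2))(\psi)=(id\otimes J')\bigl((id\otimes\Phi)(\psi)\bigr)$ already displayed before Definition~\ref{def:steering-steering} puts the bipartite conditional state inside $(id\otimes J')(K_C\treal K_{B_1}\treal K_{B_2})$, negating steerability. The paper's own proof is exactly this appeal to the preceding computation, so nothing is missing.
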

\begin{proof}
If the channels $\Phi_1$, $\Phi_2$ are compatible, then we have $(\Phi_1, \Phi_2) = J(\Phi)$ for some $\Phi: K_A \to K_{B_1} \treal K_{B_2}$ and for every $\psi \in K_C \treal K_A$ we have
\begin{equation*}
(id \otimes (\Phi_1, \Phi_2))(\psi) \in (id \otimes J') ( K_C \treal (K_{B_1} \treal K_{B_2}) ).
\end{equation*}
\end{proof}

\begin{prop}
The bipartite state $\psi \in A(K_C)^{*+} \treal A(K_A)^{*+}$ is not steerable by channels $\Phi_1: A(K_A)^{*+} \to A(K_B)^{*+}$, $\Phi_2: A(K_A)^{*+} \to A(K_B)^{*+}$ if $\psi \in A(K_C)^{*+} \tmin A(K_A)^{*+}$, i.e. if $\psi$ is separable.
\end{prop}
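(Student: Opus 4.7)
The plan is to exhibit an explicit element of $K_C \treal K_{B_1} \treal K_{B_2}$ whose image under $id \otimes J'$ equals $(id \otimes (\Phi_1, \Phi_2))(\psi)$, using the separable decomposition of $\psi$. First I would write $\psi = \sum_{i=1}^n \lambda_i (x_i \otimes y_i)$ with $x_i \in K_C$, $y_i \in K_A$, $\lambda_i \geq 0$ and $\sum_i \lambda_i = 1$, which is possible by the definition of the minimal tensor product. Applying the conditional channel termwise gives
\begin{equation*}
(id \otimes (\Phi_1, \Phi_2))(\psi) = \sum_{i=1}^n \lambda_i \, x_i \otimes (\Phi_1(y_i), \Phi_2(y_i)).
\end{equation*}

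Next I would build a candidate tripartite state $\tilde{\psi} \in K_C \treal K_{B_1} \treal K_{B_2}$. Since $\Phi_1(y_i) \in K_{B_1}$ and $\Phi_2(y_i) \in K_{B_2}$ are bona fide states (channels map states to states), the product $\Phi_1(y_i) \otimes \Phi_2(y_i)$ lies in $K_{B_1} \tmin K_{B_2} \subseteq K_{B_1} \treal K_{B_2}$. Hence
\begin{equation*}
\tilde{\psi} = \sum_{i=1}^n \lambda_i \, x_i \otimes \Phi_1(y_i) \otimes \Phi_2(y_i)
\end{equation*}
belongs to $K_C \tmin K_{B_1} \tmin K_{B_2}$, and therefore also to $K_C \treal K_{B_1} \treal K_{B_2}$.

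Finally I would verify that $(id \otimes J')(\tilde{\psi}) = (id \otimes (\Phi_1, \Phi_2))(\psi)$. For each summand, $J'(\Phi_1(y_i) \otimes \Phi_2(y_i)) = ((id \otimes 1)(\Phi_1(y_i) \otimes \Phi_2(y_i)), (1 \otimes id)(\Phi_1(y_i) \otimes \Phi_2(y_i)))$, which evaluates to $(\langle \Phi_2(y_i), 1\rangle \Phi_1(y_i), \langle \Phi_1(y_i), 1\rangle \Phi_2(y_i)) = (\Phi_1(y_i), \Phi_2(y_i))$ because $\Phi_1(y_i)$ and $\Phi_2(y_i)$ are normalized states. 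Linearity then delivers the desired equality, so $(id \otimes (\Phi_1, \Phi_2))(\psi) \in (id \otimes J')(K_C \treal K_{B_1} \treal K_{B_2})$ and by Definition \ref{def:steering-steering} the state $\psi$ is not steerable by $\Phi_1, \Phi_2$.

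There is no real obstacle here; the argument is essentially bookkeeping once one notices that product states are trivially mapped by $J'$ back to the pair of their marginals, so each separable term can be witnessed independently by its own product state and then glued back together by convexity. The only point needing mild care is remembering that $\Phi_1(y_i), \Phi_2(y_i)$ are normalized (so that the factors $\langle \cdot, 1\rangle$ become $1$), which is exactly the distinction emphasised in subsection \ref{subsec:GPT-direct} between $A(K_{B_1} \times K_{B_2})^{*+}$ and the looser $A(K_{B_1})^{*+} \times A(K_{B_2})^{*+}$.
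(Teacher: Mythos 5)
Your proposal is correct and follows essentially the same route as the paper: reduce via the separable decomposition and linearity of $id \otimes (\Phi_1, \Phi_2)$ and $id \otimes J'$, and witness each product term by $x_i \otimes \Phi_1(y_i) \otimes \Phi_2(y_i)$, which lies in $K_C \tmin K_{B_1} \tmin K_{B_2} \subseteq K_C \treal K_{B_1} \treal K_{B_2}$. Your version merely spells out the bookkeeping (gluing the terms into one tripartite state and checking the normalization factors in $J'$) that the paper leaves implicit.
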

\begin{proof}
Every separable state is by definition a convex combination of product states, i.e. of states of the form $x_C \otimes x_A$, where $x_A \in K_A$, $x_C \in K_C$. Since the maps $id \otimes (\Phi_1, \Phi_2)$ and $id \otimes J'$ are linear it is sufficient to prove that for every product state $x_C \otimes x_A \in K_C \tmin K_A$ we have $(id \otimes (\Phi_1, \Phi_2))(x_C \otimes x_A) \in (id \otimes J')(K_C \treal K_{B_1} \treal K_{B_2})$. It follows by our construction in Sec. \ref{sec:prelude} that product states are not steerable by any channels as one can always take $x_C \otimes \Phi_1(x_A) \otimes \Phi_2(x_A)$. Remember that during steering, we fix not only the channels, but also the bipartite state, so the presented construction is valid.
\end{proof}

\section{Steering by measurements} \label{sec:meassteering}
We will show that the definition of steering given by Def. \ref{def:steering-steering} follows the standard definition of steering \cite{WisemanJonesDoherty-nonlocal} in the formalism introduced in \cite{Jencova-compatibility}, when we replace measurements by channels.

\begin{prop}
Let $S_1$, $S_2$ be simplexes and let $m_1: K_A \to S_1$, $m_2: K_A \to S_2$ be measurements, then a state $\psi \in K_C \treal K_A$ is steerable by $m_1$, $m_2$ if and only if
\begin{equation*}
(id \otimes (m_1, m_2))(\psi) \notin K_C \tmin (S_1 \times S_2 ).
\end{equation*}
\end{prop}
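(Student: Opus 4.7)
The plan is to unfold Def. \ref{def:steering-steering} and show that, when $K_{B_1} = S_1$ and $K_{B_2} = S_2$ are simplexes, the set $(id \otimes J')(K_C \treal S_1 \treal S_2)$ against which steering is tested collapses precisely to $K_C \tmin (S_1 \times S_2)$. Once this set equality is established, the proposition follows by taking the negation of membership on both sides.

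First I would use the simplex-tensor proposition proved earlier (and its remark that a tensor product of simplexes is itself a simplex) twice. Since $S_1 \treal S_2 = S_1 \tmin S_2$ is a simplex, another application gives $K_C \treal (S_1 \treal S_2) = K_C \tmin S_1 \tmin S_2$. Hence every real tensor product in the expression reduces to a minimal one, so it suffices to compute $(id \otimes J')(K_C \tmin S_1 \tmin S_2)$.

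Next I would evaluate $J'$ directly on vertices. If $\delta_{\omega_i}^{(1)}$ and $\delta_{\omega_j}^{(2)}$ are the vertices of $S_1$ and $S_2$ respectively, then the vertices of $S_1 \tmin S_2$ are $\delta_{\omega_i}^{(1)} \otimes \delta_{\omega_j}^{(2)}$, and $J'(\delta_{\omega_i}^{(1)} \otimes \delta_{\omega_j}^{(2)}) = (\langle \delta_{\omega_j}^{(2)}, 1\rangle \delta_{\omega_i}^{(1)},\, \langle \delta_{\omega_i}^{(1)}, 1\rangle \delta_{\omega_j}^{(2)}) = (\delta_{\omega_i}^{(1)}, \delta_{\omega_j}^{(2)})$, which are exactly the vertices of $S_1 \times S_2$. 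By linearity of $J'$ and convexity, $J'(S_1 \tmin S_2) = S_1 \times S_2$.

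Finally I would lift this to the tensor product with $K_C$. For the inclusion $(id \otimes J')(K_C \tmin S_1 \tmin S_2) \subseteq K_C \tmin (S_1 \times S_2)$, write a generic element as a convex combination of product states $x_C \otimes \sigma$ with $\sigma \in S_1 \tmin S_2$; then $(id \otimes J')(x_C \otimes \sigma) = x_C \otimes J'(\sigma) \in K_C \tmin (S_1 \times S_2)$ by the previous step. For the reverse inclusion, any extreme-type element $x_C \otimes (z_1, z_2)$ of $K_C \tmin (S_1 \times S_2)$ is the image of $x_C \otimes z_1 \otimes z_2 \in K_C \tmin S_1 \tmin S_2$ under $id \otimes J'$. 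The main obstacle is purely bookkeeping: one must carefully keep straight the three distinct products $\treal$, $\tmin$, and $\times$ and recognize that, thanks to the simplex-tensor identity, $J'$ is really just the map that sends the joint distribution $(\lambda_{ij})$ on $S_1 \tmin S_2$ to its pair of marginals on $S_1 \times S_2$.
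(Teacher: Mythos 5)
Your proposal is correct and follows essentially the same route as the paper: the paper's proof is the one-line observation that $K_C \treal S_1 \treal S_2 = K_C \tmin S_1 \tmin S_2$ (via the simplex-tensor proposition and the remark that a tensor product of simplexes is a simplex), with the identity $(id \otimes J')(K_C \tmin S_1 \tmin S_2) = K_C \tmin (S_1 \times S_2)$ left implicit. Your explicit computation of $J'$ on vertices and the lift to product states simply fills in that implicit step, in the same spirit as the paper's earlier result on $J$ and minimal tensor products.
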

\begin{proof}
The result follows from the fact that $K_C \treal S_1 \treal S_2 = K_C \tmin S_1 \tmin S_2$.
\end{proof}

To obtain the standard definition of steering, one only needs to note that if $\xi \in K_C \tmin (S_1 \times S_2 )$, then there are $x_i \in K_C$,  $s_i \in S_1 \times S_2$ and $0 \leq \lambda_i \leq 1$ for $i \in \{1, \ldots, n\}$ such that
\begin{equation}
\xi = \sum_{i=1}^n \lambda_i x_i \otimes s_i, \label{eq:meassteering-basic}
\end{equation}
where the interpretation of $s_i$ is that it is a conditional probability, conditioned by the choice of the measurement. At this point it is straightforward to see that Eq. \eqref{eq:meassteering-basic} corresponds to \cite[Eq. (5)]{WisemanJonesDoherty-nonlocal}.

\section{Steering by quantum channels} \label{sec:quantsteering}
Steering plays an important role in quantum theory. It has found so far applications in quantum cryptography \cite{Branciard-1SDKKQD} as an intermediate step between quantum key distribution and device-independent quantum key distribution.

We will prove several results and present a simple example of steering by quantum channels. Given the standard, operational, interpretation of steering by measurements the example may seem strange, but rather expected.

Let $\Phi_1: \dens_\Ha \to \dens_\Ha$, $\Phi_2: \dens_\Ha \to \dens_\Ha$ be channels and let $|\psi^+\> = (\dH)^{-\frac{1}{2}} \sum_{i=1}^{\dH} |i i\>$ be the maximally entangled vector. We will show that the maximally entangled state $|\psi^+\>\<\psi^+|$ is steerable by the channels $\Phi_1$, $\Phi_2$ whenever they are incompatible.

The proof is rather simple as the bipartite conditional state we obtain is $(id \otimes (\Phi_1, \Phi_2) )(|\psi^+ \>\< \psi^+| )$. If the channels $\Phi_1$, $\Phi_2$ are compatible then the state $|\psi^+\>\<\psi^+|$ is not steerable by compatible channels. Now let us assume that there is a state in $\rho \in \dens_{\Ha \otimes \Ha \otimes \Ha}$ such that we have
\begin{equation}
(id \otimes (\Phi_1, \Phi_2)) (|\psi^+ \>\< \psi^+|) = (id \otimes J')(\rho), \label{eq:quantsteering-MEstate}
\end{equation}
i.e. that the state state $|\psi^+\>\<\psi^+|$ is not steerable by the channels $\Phi_1$, $\Phi_2$. Eq. \eqref{eq:quantsteering-MEstate} implies that we must have
\begin{equation*}
(id \otimes \Phi_1) (|\psi^+ \>\< \psi^+|) = \Tr_3 ( \rho ),
\end{equation*}
that, after taking trace over the second Hilbert space, gives
\begin{equation}
\dfrac{1}{\dH} \I = \Tr_{23} ( \rho ). \label{eq:quantsteering-normalization}
\end{equation}
Now the picture becomes clear: $(id \otimes \Phi_1) (|\psi^+ \>\< \psi^+|)$ is isomorphic to the Choi matrix $C(\Phi_1)$ and Eq. \eqref{eq:quantsteering-normalization} implies that the state $\rho$ must be isomorphic to a Choi matrix of some channel $\Phi$. This together with Prop. \ref{prop:quantcompat-Choi} means that Eq. \eqref{eq:quantsteering-MEstate} holds if and only if the channels are compatible. Thus we have proved:
\begin{prop} \label{prop:quantsteering-MEstate}
The maximally entangled state $|\psi^+\>\<\psi^+|$ is steerable by channels $\Phi_1: \dens_\Ha \to \dens_\Ha$, $\Phi_2: \dens_\Ha \to \dens_\Ha$ if and only if they are incompatible.
\end{prop}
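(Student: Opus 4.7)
The plan is to prove the two directions separately, leveraging the isomorphism between Choi matrices and bipartite states of the form $(id \otimes \Phi)(|\psi^+\>\<\psi^+|)$, together with the characterization of compatibility via Choi matrices established in Proposition~\ref{prop:quantcompat-Choi}.

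The easy direction (compatibility implies non-steerability of $|\psi^+\>\<\psi^+|$) is immediate from the corollary already proved in Sec.~\ref{sec:steering}: compatible channels cannot steer \emph{any} bipartite state, so in particular they cannot steer the maximally entangled one. So the entire content lies in the converse. Suppose that $|\psi^+\>\<\psi^+|$ is not steerable by $\Phi_1,\Phi_2$. By Definition~\ref{def:steering-steering}, there must exist a tripartite state $\rho \in \dens_{\Ha \otimes \Ha \otimes \Ha}$ such that
\begin{equation*}
(id \otimes (\Phi_1,\Phi_2))(|\psi^+\>\<\psi^+|) = (id \otimes J')(\rho),
\end{equation*}
where $J'$ takes a tripartite state to the conditional pair of its two relevant marginals. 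I would then unfold this equality componentwise, extracting
\begin{align*}
(id \otimes \Phi_1)(|\psi^+\>\<\psi^+|) &= \Tr_3(\rho), \\
(id \otimes \Phi_2)(|\psi^+\>\<\psi^+|) &= \Tr_2(\rho).
\end{align*}

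Next I would exploit the normalization constraint. Taking the partial trace of the first equation over the $\Ha_B$ system gives $\Tr_{23}(\rho) = \Tr_2 \big( (id \otimes \Phi_1)(|\psi^+\>\<\psi^+|) \big) = \tfrac{1}{\dH}\I$, since $\Phi_1$ is trace-preserving and the reduced state of $|\psi^+\>\<\psi^+|$ on the first factor is maximally mixed. Combined with $\rho \geq 0$, this is exactly the Choi matrix condition: the operator $C := \dH \cdot \rho$ satisfies $C \geq 0$ and $\Tr_1(C^{T_1}) = \I$ in the appropriate labeling, so it is the Choi matrix of some channel $\Phi : \dens_\Ha \to \dens_{\Ha \otimes \Ha}$. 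Rewriting the two marginal identities as $C(\Phi_1) = \Tr_2(C(\Phi))$ and $C(\Phi_2) = \Tr_1(C(\Phi))$ and applying Proposition~\ref{prop:quantcompat-Choi} yields compatibility of $\Phi_1, \Phi_2$.

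The one subtle point — the step I expect to require the most care — is correctly tracking the role of the first Hilbert space (Charlie's system) when translating between tripartite states and Choi matrices on $\Ha \otimes \Ha$. The identification $(id \otimes \Phi)(|\psi^+\>\<\psi^+|) \cong \tfrac{1}{\dH} C(\Phi)$ is only an isomorphism after one fixes conventions for indices and transposes; one must verify that the trace-preservation condition on Charlie's side indeed matches the condition $\Tr_1(C(\Phi)) = \I$ defining Choi matrices, and that the two partial-trace equations above line up with those in Proposition~\ref{prop:quantcompat-Choi} under this identification. Once the bookkeeping is set straight, both directions follow cleanly from the corollary of Sec.~\ref{sec:steering} and Proposition~\ref{prop:quantcompat-Choi}, and no further analytic work is needed.
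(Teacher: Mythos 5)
Your proposal follows essentially the same route as the paper's own proof: the easy direction via the corollary that compatible channels cannot steer any state, and the converse by unfolding the non-steerability condition into the two marginal equations, using trace preservation to get $\Tr_{23}(\rho) = \tfrac{1}{\dH}\I$, identifying $\rho$ (up to normalization and index conventions) with the Choi matrix of a channel $\Phi:\dens_\Ha \to \dens_{\Ha\otimes\Ha}$, and concluding by Proposition~\ref{prop:quantcompat-Choi}. The bookkeeping subtlety you flag about matching conventions in the isomorphism $(id\otimes\Phi)(|\psi^+\>\<\psi^+|) \cong \tfrac{1}{\dH}C(\Phi)$ is real but harmless, and the paper handles it the same way, by invoking the isomorphism directly.
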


We will investigate steering by unitary channels. We will see a phenomenon that is impossible to happen for steering by measurements - it is possible to steer a state when the two channels we are testing for incompatibility are two copies of the same channel. Let $U, V$ be unitary matrices, i.e. $U U^* = V V^* = \I$, where $U^*$ denotes the conjugate transpose matrix to $U$ and let $\Phi_U$, $\Phi_V$ be the corresponding unitary channels, that is for $\rho \in \dens_\Ha$ we have
\begin{align*}
\Phi_U ( \rho ) &= U \rho U^*, \\
\Phi_V ( \rho ) &= V \rho V^*.
\end{align*}
Note that we have $\Phi_\I = id$, i.e. the unitary channels given by an identity matrix is the identity channel.

\begin{prop}
The bipartite state $\rho \in \dens_{\Ha \otimes \Ha}$ is steerable by the unitary channels $\Phi_U$, $\Phi_V$ if and only if it is steerable by two copies of the identity channel $id$.
\end{prop}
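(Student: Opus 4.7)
The plan is to unfold \textbf{Def.} \ref{def:steering-steering} on both sides and then use the fact that a unitary channel $\Phi_U$ is invertible with inverse $\Phi_{U^*}$, which is itself a channel. Concretely, $\rho$ fails to be steerable by $\Phi_U, \Phi_V$ iff there is a tripartite state $\sigma \in \dens_{\Ha \otimes \Ha \otimes \Ha}$ with
\begin{align*}
(id \otimes \Phi_U)(\rho) &= \Tr_3(\sigma), \\
(id \otimes \Phi_V)(\rho) &= \Tr_2(\sigma),
\end{align*}
while $\rho$ fails to be steerable by $id, id$ iff there is a tripartite state $\sigma'$ with $\Tr_3(\sigma') = \Tr_2(\sigma') = \rho$. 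The goal is therefore to set up a bijection between the sets of admissible $\sigma$ and $\sigma'$.

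First I would verify the forward direction: given a witness $\sigma'$ for non-steerability by $id, id$, define
\begin{equation*}
\sigma = (id \otimes \Phi_U \otimes \Phi_V)(\sigma').
\end{equation*}
Since $id \otimes \Phi_U \otimes \Phi_V$ is a channel, $\sigma \in \dens_{\Ha \otimes \Ha \otimes \Ha}$. The key identity is that the partial trace commutes with channels acting on the traced-out factor only when that factor is traced; more precisely, $\Tr_3 \circ (id \otimes \Phi_U \otimes \Phi_V) = (id \otimes \Phi_U) \circ \Tr_3$ because $\Phi_V$ is trace-preserving, and analogously for $\Tr_2$. Applying these identities to $\sigma$ yields exactly $(id \otimes \Phi_U)(\rho)$ and $(id \otimes \Phi_V)(\rho)$, as required. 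Hence non-steerability by $id, id$ implies non-steerability by $\Phi_U, \Phi_V$.

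For the converse, I would use that $\Phi_U^{-1} = \Phi_{U^*}$ is also a unitary channel. Given a witness $\sigma$ for non-steerability by $\Phi_U, \Phi_V$, set
\begin{equation*}
\sigma' = (id \otimes \Phi_{U^*} \otimes \Phi_{V^*})(\sigma),
\end{equation*}
which again lies in $\dens_{\Ha \otimes \Ha \otimes \Ha}$. The same commutation of partial trace with trace-preserving channels on untouched factors, combined with $\Phi_{U^*} \circ \Phi_U = id$, gives $\Tr_3(\sigma') = (id \otimes \Phi_{U^*})(id \otimes \Phi_U)(\rho) = \rho$, and similarly $\Tr_2(\sigma') = \rho$. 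This establishes the reverse implication and hence, by contrapositive on both sides, the claimed equivalence.

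There is no substantial obstacle here; the only point requiring mild care is justifying the partial-trace identity $\Tr_k \circ (id \otimes \Phi_U \otimes \Phi_V) = (\text{appropriate channel}) \circ \Tr_k$, which follows immediately from trace preservation of $\Phi_U$ and $\Phi_V$ and linearity. Note also that the argument is not specific to unitary channels: it would go through verbatim for any pair of invertible channels whose inverses are themselves channels, so the proposition is really a statement about reversible dynamics.
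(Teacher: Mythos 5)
Your proof is correct and follows essentially the same route as the paper: unfold the definition into the existence of a tripartite state with prescribed marginals, then transport witnesses back and forth with $id \otimes \Phi_U \otimes \Phi_V$ and $id \otimes \Phi_{U^*} \otimes \Phi_{V^*}$, using trace preservation to commute the channels past the partial traces. The paper writes out only the direction using $\Phi_{U^*} \otimes \Phi_{V^*}$ and notes the other is analogous, which is exactly what you spelled out.
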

\begin{proof}
The state $\rho \in \dens_{\Ha \otimes \Ha}$ is steerable by the channels $\Phi_U$, $\Phi_V$ if and only if there is a state $\sigma \in \dens_{\Ha \otimes \Ha \otimes \Ha}$ such that
\begin{align*}
\Tr_3 ( \sigma ) &= (id \otimes \Phi_U)(\rho), \\
\Tr_2 ( \sigma ) &= (id \otimes \Phi_V)(\rho).
\end{align*}
If such state $\sigma$ exists, then for $\tilde{\sigma} = (id \otimes \Phi_{U^*} \otimes \Phi_{V^*})(\sigma)$ we have
\begin{align*}
\Tr_3 ( \tilde{\sigma} ) &= \rho, \\
\Tr_2 ( \tilde{\sigma} ) &= \rho,
\end{align*}
i.e. the state $\rho$ is not steerable by two copies of $id$.

The same holds other way around by almost the same construction; if the state $\rho$ is not steerable by two copies of $id$ then it is not steerable by any unitary channels $\Phi_U$, $\Phi_V$.
\end{proof}
Note that similar result would hold if only one of the channels would be unitary, but then only that one unitary channel would be replaced by the identity map $id$.

Clearly if the state $\rho$ would be separable, then it would not be steerable by any channel. The converse does not hold, even if the state $\rho$ is entangled it still may not be steerable by any channels. We will provide a useful condition for the steerability of a given state $\rho \in \dens_{\Ha \otimes \Ha}$ that will help us to show that even if the state $\rho$ is entangled, it does not have to be steerable by any pair of channels $\Phi_1: \dens_\Ha \to \dens_\Ha$, $\Phi_2: \dens_\Ha \to \dens_\Ha$.

\begin{prop} \label{prop:quantsteering-concatenation}
The state $\rho \in \dens_{\Ha \otimes \Ha}$ is steerable by the channels $\Phi_1: \dens_\Ha \to \dens_\Ha$, $\Phi_2: \dens_\Ha \to \dens_\Ha$ only if it is steerable by two copies of the identity channel $id: \dens_\Ha \to \dens_\Ha$.
\end{prop}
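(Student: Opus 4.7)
The plan is to prove the contrapositive: assuming $\rho$ is not steerable by two copies of $id$, I construct an explicit ``hidden state'' witnessing that $\rho$ is not steerable by an arbitrary pair $(\Phi_1,\Phi_2)$. The key observation is that having a tripartite extension is preserved under applying channels on the last two marginals, because channels are completely positive and trace preserving.

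Concretely, suppose $\rho\in\dens_{\Ha\otimes\Ha}$ is not steerable by $(id,id)$. By Definition \ref{def:steering-steering} applied in the quantum setting (as already used in the proof just above for the unitary case), this means there exists $\sigma\in\dens_{\Ha\otimes\Ha\otimes\Ha}$ with
\begin{align*}
\Tr_3(\sigma) &= (id\otimes id)(\rho) = \rho, \\
\Tr_2(\sigma) &= (id\otimes id)(\rho) = \rho.
\end{align*}
Now I would define
\begin{equation*}
\tau = (id\otimes \Phi_1\otimes \Phi_2)(\sigma).
\end{equation*}
Since $id\otimes \Phi_1\otimes \Phi_2$ is completely positive and trace preserving, $\tau\in\dens_{\Ha\otimes\Ha\otimes\Ha}$. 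Using that $\Phi_2$ is trace preserving, $\Tr_3$ commutes with $id\otimes\Phi_1\otimes\Phi_2$ in the sense that $\Tr_3\circ(id\otimes\Phi_1\otimes\Phi_2) = (id\otimes\Phi_1)\circ\Tr_3$, so
\begin{equation*}
\Tr_3(\tau) = (id\otimes\Phi_1)(\Tr_3(\sigma)) = (id\otimes\Phi_1)(\rho).
\end{equation*}
Symmetrically, using that $\Phi_1$ is trace preserving,
\begin{equation*}
\Tr_2(\tau) = (id\otimes\Phi_2)(\Tr_2(\sigma)) = (id\otimes\Phi_2)(\rho).
\end{equation*}
Thus $\tau$ is exactly the tripartite state required to witness that $\rho$ is not steerable by $(\Phi_1,\Phi_2)$, which by the contrapositive gives the claim.

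There is really no substantial obstacle here; the proof is essentially dual to the construction used in the preceding unitary proposition, but only ``one direction'' goes through because generic channels are not invertible. The only subtle point is to make sure that the required identity $\Tr_k\circ(id\otimes\Phi_1\otimes\Phi_2) = (id\otimes\Phi_j)\circ\Tr_k$ for $(j,k)\in\{(1,3),(2,2)\}$ is used correctly; this follows directly from the fact that $\Phi_1,\Phi_2$ are trace preserving, and it is the single place where the hypothesis on the $\Phi_i$ is used. This also clarifies why the converse fails in general: reconstructing a hidden state for $(id,id)$ from one for $(\Phi_1,\Phi_2)$ would require inverting the channels, which is typically impossible.
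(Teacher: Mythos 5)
Your proposal is correct and follows essentially the same route as the paper: prove the contrapositive by taking the tripartite state $\sigma$ with both marginals equal to $\rho$ and pushing it through $id\otimes\Phi_1\otimes\Phi_2$, using trace preservation to identify the resulting marginals as $(id\otimes\Phi_1)(\rho)$ and $(id\otimes\Phi_2)(\rho)$. Your added remark on why the partial traces commute with the channels on the remaining factors is just a more explicit statement of the step the paper leaves implicit.
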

\begin{proof}
Assume that the state $\rho \in \dens_{\Ha \otimes \Ha}$ is not steerable by two copies of the identity channel $id: \dens_\Ha \to \dens_\Ha$, then there exists a state $\sigma \in \dens_{\Ha \otimes \Ha \otimes \Ha}$ such that
\begin{align*}
\Tr_3 ( \sigma ) &= \rho, \\
\Tr_2 ( \sigma ) &= \rho.
\end{align*}
Let $\Phi_1: \dens_\Ha \to \dens_\Ha$, $\Phi_2: \dens_\Ha \to \dens_\Ha$ be any two channels and denote
\begin{equation*}
\tilde{\sigma} = (id \otimes \Phi_1 \otimes \Phi_2) (\sigma),
\end{equation*}
then we have
\begin{align*}
\Tr_3 ( \tilde{\sigma} ) &= (id \otimes \Phi_1) (\rho), \\
\Tr_2 ( \tilde{\sigma} ) &= (id \otimes \Phi_2) (\rho),
\end{align*}
so the state $\rho$ is not steerable by the channels $\Phi_1$, $\Phi_2$.
\end{proof}
Note that one may get other conditions for steering by replacing only one of the channels by the identity map $id$.

One may generalize this result to the general probabilistic theory but it may be rather restrictive and not as general as one would wish. One may also use the idea of the proof of Prop. \ref{prop:quantsteering-concatenation} together with the result of Prop. \ref{prop:quantsteering-MEstate} to obtain the results on compatibility of channels that are concatenations of other channels, similar to the results obtained in \cite{HeinosaariMiyadera-compOfChan}.

We will present an example of an entangled state that is not steerable by any pair of channels.
\begin{exm} \label{exm:quantsteering-W}
Let $\dH = 2$ with the standard basis $|0\>$, $|1\>$ and let $|W\> \in \Ha \otimes \Ha \otimes \Ha$ be given as
\begin{equation*}
|W\> = \dfrac{1}{\sqrt{3}} ( |001\> + |010\> + |100\>).
\end{equation*}
The projector $|W\>\<W| \in \dens_{\Ha \otimes \Ha \otimes \Ha}$ is known as W state. We have
\begin{equation*}
\rho_W = \Tr_2 ( |W\>\<W| ) = \Tr_3 ( |W\>\<W| ) \in \dens_{\Ha \otimes \Ha},
\end{equation*}
that shows that the state $\rho_W$ is not steerable by a pair of the identity channels $id: \dens_\Ha \to \dens_\Ha$, which as a result of Prop. \ref{prop:quantsteering-concatenation} means that it is not steerable by any channels $\Phi_1: \dens_\Ha \to \dens_\Ha$, $\Phi_2: \dens_\Ha \to \dens_\Ha$. Moreover it is known that the state $\rho_W$ is entangled \cite[Example 6.70]{HeinosaariZiman-MLQT}.

Since it will be useful in later calculations we will show that the state $|W\>\<W|$ is the only state from $\dens_{\Ha \otimes \Ha \otimes \Ha}$ such that $\rho_W = \Tr_2 ( |W\>\<W| ) = \Tr_3 ( |W\>\<W| )$. Let $|\varphi\> = \frac{1}{\sqrt{2}} ( |01\> + |10\>)$, then we have
\begin{equation*}
\rho_W = \dfrac{1}{3} |00\>\<00| + \dfrac{2}{3} |\varphi\>\<\varphi|.
\end{equation*}
Let $\sigma \in \dens_{\Ha \otimes \Ha \otimes \Ha}$ denote the state such that $\rho_W = \Tr_2 ( \sigma ) = \Tr_3 ( \sigma )$. We have $\rho_W |11\> = 0$ that implies $\Tr(\sigma |11\>\<11| \otimes \I ) = \Tr(\sigma |1\>\<1| \otimes \I \otimes |1\>\<1| ) = 0$ that implies $\<111|\sigma|111\> = \<110| \sigma|110\> = \<101| \sigma|101\> = 0$ as $\sigma \geq 0$. We will show that this implies $\sigma|111\> = \sigma|110\> = \sigma|101\> = 0$.

Let $A \in B_h(\Ha)$, $A \geq 0$ and let $|\psi\> \in \Ha$. Let $\Vert \psi \Vert = \sqrt{\<\psi| \psi\>}$ denote the norm given by inner product. Assume that we have $\<\psi| A |\psi\> = 0$, then
\begin{equation*}
\Vert \sqrt{A} \psi \Vert^2 = \<\sqrt{A}\psi | \sqrt{A}\psi \> = \<\psi| A |\psi\> = 0
\end{equation*}
and in conclusion we have $\sqrt{A}|\psi\> = 0$ and 
\begin{equation*}
A |\psi\> = \sqrt{A} ( \sqrt{A} |\psi\>) = 0.
\end{equation*}

Finally let us denote $|\varphi^\perp \> = \frac{1}{\sqrt{2}}( |01\> - |10\> )$. We have $\rho_W |\varphi^\perp\> = 0$ that implies $\Tr( \sigma |\varphi^\perp\>\<\varphi^\perp| \otimes \I) = 0$ which yields $\sigma |\varphi^\perp 0 \> = \sigma |\varphi^\perp 1 \> = 0$. We still use the shorthand $|\varphi^\perp 0 \> = |\varphi^\perp\> \otimes |0\>$.

The eight vectors $|000\>$, $|001\>$, $|\varphi 0\>$, $|\varphi 1\>$, $|\varphi^\perp 0\>$, $|\varphi^\perp 1\>$, $|110\>$, $|111\>$ form an orthonormal basis of $\Ha \otimes \Ha \otimes \Ha$. We have already showed that we must have
\begin{equation*}
\sigma |\varphi^\perp 0 \> = \sigma |\varphi^\perp 1 \> = \sigma |110\> = \sigma | 111 \> = 0
\end{equation*}
so in general we must have
\begin{align*}
\sigma &= a_{00} |000\>\<000| + a_{01} |001\>\<001| + a_{\varphi0} |\varphi 0\>\< \varphi 0| \\
&+ a_{\varphi1} |\varphi1\>\<\varphi1| + b_1 |000\>\<001| + \bar{b}_1 |001\>\<000| \\
&+ b_2 |000\>\< \varphi 0| + \bar{b}_2 |\varphi 0\>\<000| + b_3 |000\>\<\varphi 1| \\
&+ \bar{b}_3 | \varphi 1\>\< 000| + b_4 |001\>\<\varphi 0| + \bar{b}_4 |\varphi 0\>\<001| \\
&+ b_5 |001\>\<\varphi 1| + \bar{b}_5 | \varphi 1\>\<001| + b_6 |\varphi 0 \>\< \varphi 1| \\
&+ \bar{b}_6 |\varphi 1\>\< \varphi 0|.
\end{align*}

Using the above expression for $\sigma$ we get
\begin{align*}
\Tr_2 (\sigma) &= a_{00} |00\>\<00| + a_{01} |01\>\<01| + \dfrac{a_{\varphi 0}}{2} \I \otimes |0\>\<0| \\
&+ \dfrac{a_{\varphi 1}}{2} \I \otimes |1 \>\<1| + b_1 |00\>\<01| + \bar{b}_1 |01\>\<00| \\
&+ \dfrac{b_2}{\sqrt{2}} |00\>\<10| + \dfrac{\bar{b}2}{\sqrt{2}} |10\>\<00| + \dfrac{b_3}{\sqrt{2}} |00\>\<11| \\
&+ \dfrac{\bar{b}_3}{\sqrt{2}} |11\>\<00| + \dfrac{b_4}{\sqrt{2}} |01\>\<10| + \dfrac{\bar{b}_4}{\sqrt{2}} |10\>\<01| \\
&+ \dfrac{b_5}{\sqrt{2}} |01\>\<11| + \dfrac{\bar{b}_5}{\sqrt{2}} |11\>\<01| + \dfrac{b_6}{2} \I \otimes |0\>\<1| \\
&+ \dfrac{\bar{b}_6}{2} \I \otimes |1\>\<0|,
\end{align*}
that implies $a_{0 0} = a_{\varphi 1} = 0$, $a_{\varphi 0} = \frac{2}{3}$, $a_{01} = \frac{1}{3}$, $b_1 = b_2 = b_3 = b_5 = b_6 = 0$ and $b_4 = \frac{\sqrt{2}}{3}$. In conclusion we have
\begin{align*}
\sigma =& \dfrac{1}{3} ( |001\>\<001| + 2 |\varphi 0\>\<\varphi 0| \\
&+ \sqrt{2} |001\>\<\varphi 0| + \sqrt{2} |\varphi 0\>\< 001| ) \\
=& |W\>\<W|.
\end{align*}
\end{exm}

\section{Bell nonlocality} \label{sec:nonlocal}
Bell nonlocality is, similarly to steering, a phenomenom that we do not find in classical theory, but is often used in quantum theory. Bell nonlocality \cite{Bell-ineq} was formulated as a response to the well-known EPR paradox \cite{EinsteinPodolskyRosen-paradox}. Although in the original formulation, the operational idea was different than the one we will present, we will see that Bell nonlocality may be understood as an incompatibility test, in the same way as steering.

Let us assume that we have four parties: Alice, Bob, Charlie and Dan. Alice has two channels $\Phi_1^A: K_A \to K_{B_1}$ and $\Phi_2^A: K_A \to K_{B_2}$ that she can use to send a state to Bob and Charlie has two channels $\Phi_1^C: K_C \to K_{D_1}$ and $\Phi_2^C: K_C \to K_{D_2}$ that he can use to send a state to Dan. Assume that $K_C \treal K_A$ is defined and let $\psi \in K_C \treal K_A$ be a bipartite state shared by Alice and Charlie. The idea that we use to define Bell nonlocality is very simple: if we were able to use $(id \otimes (\Phi_1^A, \Phi_2^A))(\psi)$ and $((\Phi_1^C, \Phi_2^C) \otimes id)(\psi)$ as non-trivial incompatibility test, we may as well investigate whether $((\Phi_1^C, \Phi_2^C) \otimes (\Phi_1^A, \Phi_2^A))(\psi)$ provides an incompatibility test in the same manner.

\begin{defin}
Let $\psi \in K_C \treal K_A$ and let
\begin{align*}
\Phi_1^A &: K_A \to K_{B_1}, \\
\Phi_2^A &: K_A \to K_{B_2}, \\
\Phi_1^C &: K_C \to K_{D_1}, \\
\Phi_2^C &: K_C \to K_{D_2}
\end{align*}
be channels. We call the state $((\Phi_1^C, \Phi_2^C) \otimes (\Phi_1^A, \Phi_2^A))(\psi)$ bipartite biconditional state.
\end{defin}

Assume that the channels $\Phi_1^A$ and $\Phi_2^A$ are compatible, so that we have $(\Phi_1^A, \Phi_2^A) = J(\Phi^A)$ for some channel $\Phi^A: K_A \to K_{B_1} \treal K_{B_2}$ and also that the channels $\Phi_1^C$ and $\Phi_2^C$ are compatible, so there is a channel $\Phi^C: K_C \to K_{D_1} \treal K_{D_2}$ such that $(\Phi_1^A, \Phi_2^A) = J(\Phi^A)$. Let $\psi \in K_C \treal K_A$, then we have
\begin{equation*}
((\Phi_1^C, \Phi_2^C) \otimes (\Phi_1^A, \Phi_2^A))(\psi) = (J' \otimes J') ( (\Phi_C \otimes \Phi_A) (\psi) ),
\end{equation*}
where the maps $J'$ are defined as before, with the exception that we denote them the same even though they map different spaces.

We present a definition of Bell nonlocality using the same line of thinking as we used in Def. \ref{def:steering-steering}. For simplicity we will denote
\begin{equation*}
Q_{DC} = (J' \otimes J') ( K_{D_1} \treal K_{D_2} \treal K_{C_1} \treal K_{C_2} ).
\end{equation*}
\begin{defin} \label{def:nonlocal-nonlocal}
Let $\psi \in K_C \treal K_A$ be a bipartite state and let $\Phi_1^A: K_A \to K_{B_1}$, $\Phi_2^A: K_A \to K_{B_2}$, $\Phi_1^C: K_A \to K_{C_1}$ and $\Phi_2^C: K_A \to K_{D_2}$ be channels. We say that the bipartite biconditional state $((\Phi_1^C, \Phi_2^C) \otimes (\Phi_1^A, \Phi_2^A))(\psi)$ is Bell nonlocal if
\begin{equation*}
((\Phi_1^C, \Phi_2^C) \otimes (\Phi_1^A, \Phi_2^A))(\psi) \notin Q_{DC}.
\end{equation*}
Otherwise we call the bipartite biconditional state Bell local.
\end{defin}

The following result follows immediatelly from Def. \ref{def:nonlocal-nonlocal}.
\begin{coro}
Let $\psi \in K_C \treal K_A$ be a bipartite state and let $\Phi_1^A: K_A \to K_{B_1}$, $\Phi_2^A: K_A \to K_{B_2}$, $\Phi_1^C: K_A \to K_{C_1}$ and $\Phi_2^C: K_A \to K_{D_2}$ be channels. The bipartite biconditional state $((\Phi_1^C, \Phi_2^C) \otimes (\Phi_1^A, \Phi_2^A))(\psi)$ is Bell nonlocal only if the channels $\Phi_1^A$, $\Phi_2^A$ and $\Phi_1^C$, $\Phi_2^C$ are incompatible.
\end{coro}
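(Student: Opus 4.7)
My plan is to prove the contrapositive: if both pairs of channels $\{\Phi_1^A, \Phi_2^A\}$ and $\{\Phi_1^C, \Phi_2^C\}$ are compatible, then the bipartite biconditional state lies in $Q_{DC}$, i.e.\ it is Bell local. The assertion then follows by negation.

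So I would begin by assuming compatibility of both pairs and invoking Prop.~\ref{prop:compat-condition} to obtain joint channels $\Phi^A: K_A \to K_{B_1} \treal K_{B_2}$ and $\Phi^C: K_C \to K_{D_1} \treal K_{D_2}$ satisfying $(\Phi_1^A, \Phi_2^A) = J(\Phi^A)$ and $(\Phi_1^C, \Phi_2^C) = J(\Phi^C)$. The key observation, already recorded in the paragraph preceding Def.~\ref{def:nonlocal-nonlocal}, is the factorization
\begin{equation*}
((\Phi_1^C, \Phi_2^C) \otimes (\Phi_1^A, \Phi_2^A))(\psi) = (J' \otimes J')\bigl( (\Phi^C \otimes \Phi^A)(\psi) \bigr),
\end{equation*}
which is a consequence of the linearity of $J$ and the definition of the marginalization maps, together with the fact that the tensor-product channel $\Phi^C \otimes \Phi^A$ acts coordinatewise. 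I would spell this out by applying $id \otimes (\Phi_1^A, \Phi_2^A)$ and $(\Phi_1^C, \Phi_2^C) \otimes id$ successively and using the relation $(\Phi_1, \Phi_2) = J'\circ \Phi$ whenever a joint channel exists.

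Next, I would argue that $(\Phi^C \otimes \Phi^A)(\psi)$ is a well-defined element of $K_{D_1} \treal K_{D_2} \treal K_{B_1} \treal K_{B_2}$, since $\psi \in K_C \treal K_A$ and $\Phi^C, \Phi^A$ are channels with the stated codomains (associativity of $\treal$ is one of the standing assumptions of the paper). Consequently the right-hand side of the displayed equation is in $(J' \otimes J')(K_{D_1} \treal K_{D_2} \treal K_{B_1} \treal K_{B_2}) = Q_{DC}$, so the biconditional state is Bell local by Def.~\ref{def:nonlocal-nonlocal}, contradicting Bell nonlocality.

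I do not expect genuine obstacles here, since all the algebraic work has essentially been done in Sec.~\ref{sec:compat} and in the paragraph immediately before Def.~\ref{def:nonlocal-nonlocal}; the proof is a direct unpacking of definitions combined with Prop.~\ref{prop:compat-condition}. The one subtle point worth flagging is making sure that the map $\Phi^C \otimes \Phi^A$ is a bona fide channel from $K_C \treal K_A$ into $(K_{D_1} \treal K_{D_2}) \treal (K_{B_1} \treal K_{B_2})$ rather than merely into the maximal tensor product; this is precisely the ambient assumption in the paper that all required real tensor products are defined and well-behaved under tensor products of channels, so I would invoke that convention rather than attempt to prove it.
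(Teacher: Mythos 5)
Your proposal is correct and follows essentially the same route as the paper: the paper's own justification is exactly the factorization $((\Phi_1^C, \Phi_2^C) \otimes (\Phi_1^A, \Phi_2^A))(\psi) = (J' \otimes J')((\Phi^C \otimes \Phi^A)(\psi))$ derived in the paragraph preceding the definition of Bell nonlocality under the assumption that both pairs admit joint channels, after which the corollary is stated as an immediate consequence. Your contrapositive phrasing and the remark about $\Phi^C \otimes \Phi^A$ landing in the real tensor product (by the paper's standing convention) simply make explicit what the paper leaves implicit.
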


We will show that entanglement plays a key role in Bell nonlocality.
\begin{prop}
Let $\psi \in K_C \tmin K_A$ be a separable bipartite state and let $\Phi_1^A: K_A \to K_{B_1}$, $\Phi_2^A: K_A \to K_{B_2}$, $\Phi_1^C: K_A \to K_{C_1}$ and $\Phi_2^C: K_A \to K_{D_2}$ be channels. The bipartite biconditional state $((\Phi_1^C, \Phi_2^C) \otimes (\Phi_1^A, \Phi_2^A))(\psi)$ is Bell local.
\end{prop}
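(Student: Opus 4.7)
The plan is to reduce to the case of a product state by convexity, then construct an explicit preimage under $J' \otimes J'$. The key observation is that $Q_{DC}$ is the image of a convex set ($K_{D_1} \treal K_{D_2} \treal K_{B_1} \treal K_{B_2}$) under a linear map, hence convex; and the maps $(\Phi_1^C, \Phi_2^C) \otimes (\Phi_1^A, \Phi_2^A)$ are linear in the input state. Therefore, once I verify that every product state $x_C \otimes x_A$ with $x_C \in K_C$, $x_A \in K_A$ gives a bipartite biconditional state lying in $Q_{DC}$, the conclusion extends immediately to every convex combination of product states, which by definition exhausts $K_C \tmin K_A$.

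For a product state $x_C \otimes x_A$, the natural candidate preimage in $K_{D_1} \treal K_{D_2} \treal K_{B_1} \treal K_{B_2}$ is the fourfold product
\begin{equation*}
\Phi_1^C(x_C) \otimes \Phi_2^C(x_C) \otimes \Phi_1^A(x_A) \otimes \Phi_2^A(x_A),
\end{equation*}
which lies in the minimal (and hence in the real) tensor product of the four state spaces because each factor is a legitimate state in its respective space. I would then apply $J' \otimes J'$ to this element and use the fact that $\langle \Phi_i^A(x_A), 1 \rangle = \langle \Phi_i^C(x_C), 1 \rangle = 1$ (because channels preserve normalization) to compute, via the defining formula $J'(\psi) = ((id \otimes 1)(\psi), (1 \otimes id)(\psi))$, that
\begin{equation*}
(J' \otimes J')\bigl(\Phi_1^C(x_C) \otimes \Phi_2^C(x_C) \otimes \Phi_1^A(x_A) \otimes \Phi_2^A(x_A)\bigr) = (\Phi_1^C(x_C), \Phi_2^C(x_C)) \otimes (\Phi_1^A(x_A), \Phi_2^A(x_A)),
\end{equation*}
which is exactly $((\Phi_1^C, \Phi_2^C) \otimes (\Phi_1^A, \Phi_2^A))(x_C \otimes x_A)$.

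Finally, writing an arbitrary separable state as $\psi = \sum_i \lambda_i \, x_C^{(i)} \otimes x_A^{(i)}$ with $\lambda_i \geq 0$ and $\sum_i \lambda_i = 1$, linearity of the biconditional channel map together with convexity of $Q_{DC}$ yields the result. I do not expect any serious obstacle: the argument parallels the separable-state case of the analogous steering proposition in Sec.~\ref{sec:steering}, and the only thing to watch carefully is that $J'$ is applied to a \emph{normalized} element on each side, which is precisely what guarantees that the trivial marginals $(id \otimes 1)$ and $(1 \otimes id)$ reproduce the factors without extra scalar factors.
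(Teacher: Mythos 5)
Your proposal is correct and follows essentially the same route as the paper: reduce to product states by linearity, take the fourfold product $\Phi_1^C(x_C) \otimes \Phi_2^C(x_C) \otimes \Phi_1^A(x_A) \otimes \Phi_2^A(x_A)$ as the preimage, and check that $J' \otimes J'$ maps it to the bipartite biconditional state. The only difference is that you spell out the normalization bookkeeping explicitly, which the paper leaves implicit.
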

\begin{proof}
It is again sufficient to consider $\psi = x_C \otimes x_A$ for $x_A \in K_A$, $x_C \in K_C$ due to the linearity of the maps $(\Phi_1^A, \Phi_2^A)$ and $(\Phi_1^C, \Phi_2^C)$. Consider the state $\varphi \in K_{D_1} \treal K_{D_2} \treal K_{C_1} \treal K_{C_2}$ given as
\begin{equation*}
\varphi = \Phi_1^C(x_C) \otimes \Phi_2^C(x_C) \otimes \Phi_1^A(x_A) \otimes \Phi_2^A(x_A),
\end{equation*}
then we have
\begin{equation*}
((\Phi_1^C, \Phi_2^C) \otimes (\Phi_1^A, \Phi_2^A))(\psi) = (J' \otimes J') (\varphi).
\end{equation*}
\end{proof}

\section{Bell nonlocality of measurements} \label{sec:measnonlocal}
We will again show that the Def. \ref{def:nonlocal-nonlocal} follows the standard definition of Bell nonlocality \cite{WisemanJonesDoherty-nonlocal} in the formalism of \cite{Jencova-compatibility}.

\begin{prop}
Let $S_1^A$, $S_2^A$, $S_1^C$ and $S_2^C$, be simplexes and let $m_1^A: K_A \to S_1^A$, $m_2^A: K_A \to S_2^A$, $m_1^C: K_C \to S_1^C$, $m_2^C: K_C \to S_2^C$ be measurements. Let $\psi \in K_C \treal K_A$, then the bipartite biconditional state $((m_1^C, m_2^C) \otimes (m_1^A, m_2^A))(\psi)$ is Bell nonlocal if
\begin{equation*}
((m_1^C, m_2^C) \otimes (m_1^A, m_2^A))(\psi) \notin (S_1^C \times S_2^C) \tmin (S_1^A \times S_2^A).
\end{equation*}
\end{prop}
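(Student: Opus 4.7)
The plan is to exploit the fact that all four target spaces are simplexes in order to collapse every $\treal$ appearing inside the definition of $Q_{DC}$ into a $\tmin$, and then to show that the map $J'$ transports the minimal tensor product of two simplexes onto their direct product. The desired implication is the inclusion $Q_{DC} \subseteq (S_1^C \times S_2^C) \tmin (S_1^A \times S_2^A)$, since by Def.~\ref{def:nonlocal-nonlocal} any bipartite biconditional state lying outside the right-hand side must then lie outside $Q_{DC}$ and hence be Bell nonlocal.

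First, I would observe that tensor products of simplexes are simplexes, so repeated application of the earlier proposition stating $K \tmax S = K \tmin S$ for a simplex $S$ and any state space $K$ (extended by the remark giving $K \tmax S_1 \tmax S_2 = K \tmin S_1 \tmin S_2$) yields
\begin{equation*}
S_1^C \treal S_2^C \treal S_1^A \treal S_2^A = (S_1^C \tmin S_2^C) \tmin (S_1^A \tmin S_2^A).
\end{equation*}

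Second, I would establish a state-space analog of Prop.~\ref{prop:compat-JminimalProd} for $J'$: for any two state spaces $K_1$, $K_2$, the map $J': A(K_1 \treal K_2)^* \to A(K_1 \times K_2)^*$ sends $K_1 \tmin K_2$ onto $K_1 \times K_2$. The argument mirrors that of Prop.~\ref{prop:compat-JminimalProd} with the $A(K_A)^+$ slot removed: a generic element of $K_1 \tmin K_2$ is a convex combination $\sum_i \lambda_i\, x_i \otimes y_i$ of product states, and $J'(x_i \otimes y_i) = (x_i, y_i)$ lies in $K_1 \times K_2$ by construction; conversely every point of $K_1 \times K_2$ is reached this way. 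Applied to Charlie's and Alice's sides we get $J'(S_1^C \tmin S_2^C) = S_1^C \times S_2^C$ and $J'(S_1^A \tmin S_2^A) = S_1^A \times S_2^A$.

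Third, since the minimal tensor product is by definition the convex hull of product states, any element of $(S_1^C \tmin S_2^C) \tmin (S_1^A \tmin S_2^A)$ is a convex combination of products $\xi \otimes \eta$ with $\xi \in S_1^C \tmin S_2^C$ and $\eta \in S_1^A \tmin S_2^A$. By linearity of $J' \otimes J'$ its image equals a convex combination of $J'(\xi) \otimes J'(\eta) \in (S_1^C \times S_2^C) \tmin (S_1^A \times S_2^A)$. Combining the three steps gives $Q_{DC} \subseteq (S_1^C \times S_2^C) \tmin (S_1^A \times S_2^A)$, which is the required implication.

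The only subtle point is the state-space version of Prop.~\ref{prop:compat-JminimalProd}, and even that is a direct restriction of the channel argument; the rest is essentially an unfolding of definitions, carried through the simplex collapse lemma.
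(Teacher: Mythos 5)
Your proposal is correct and follows essentially the same route as the paper: the paper's proof is the one-line ``direct calculation'' $Q_{DC} = (J' \otimes J')(S_1^C \tmin S_2^C \tmin S_1^A \tmin S_2^A) = (S_1^C \times S_2^C) \tmin (S_1^A \times S_2^A)$, and your three steps (collapsing $\treal$ to $\tmin$ via the simplex lemma, noting $J'(x \otimes y) = (x,y)$ on product states, and extending by linearity and convexity) are exactly the details behind that calculation, of which only the inclusion is needed for the stated implication.
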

\begin{proof}
By direct calculation we have
\begin{align*}
Q_{CD} &= (J' \otimes J') ( S_1^C \tmin S_2^C \tmin S_1^A \tmin S_2^A ) \\
&= ( S_1^C \times S_2^C) \tmin ( S_1^A \times S_2^A).
\end{align*}
\end{proof}

One may again use the interpretation that both $S_1^C \times S_2^C$ and $S_1^A \times S_2^A$ are spaces of conditional measurement probabilities, so if we have $\psi \in ( S_1^C \times S_2^C) \tmin ( S_1^A \times S_2^A)$ then we must have $0 \leq \lambda_i \leq 1$, for $i \in \{1, \ldots, n\}$, $\sum_{i=1}^n \lambda_i = 1$, such that
\begin{equation*}
\psi = \sum_{i=1}^n \lambda_i s_i^C \otimes s_i^A
\end{equation*}
where in standard formulations both $s_i^C \in S_1^C \times S_2^C$ and $s_i^A \in _1^A \times S_2^A$ are represented by probabilities, i.e. by numbers, so the tensor product between them is omitted.

We will provide proof of the standard and well-known result about connection of steering and Bell nonlocality of measurements.
\begin{prop} \label{prop:measnonlocal-steering4nonlocal}
Let $S_1^A$, $S_2^A$, $S_1^C$ and $S_2^C$, be simplexes and let $m_1^A: K_A \to S_1^A$, $m_2^A: K_A \to S_2^A$, $m_1^C: K_C \to S_1^C$, $m_2^C: K_C \to S_2^C$ be measurements. Let $\psi \in K_C \treal K_A$. If
\begin{equation*}
(id \otimes (m_1^A, m_2^A))(\psi) \in K_C \tmin (S_1^A \times S_2^A),
\end{equation*}
i.e. if the bipartite state is not steerable by measurements $m_1^A$, $m_2^A$, then
\begin{equation*}
((m_1^C, m_2^C) \otimes (m_1^A, m_2^A))(\psi) \in (S_1^C \times S_2^C) \tmin (S_1^A \times S_2^A).
\end{equation*}
\end{prop}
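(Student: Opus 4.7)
The plan is to unfold the hypothesis of non-steerability into a separable decomposition and then apply Charlie's conditional channel to each term. The only technical point to check is that composing the two conditional channels acts the way one expects on a convex decomposition.

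First, I would use the hypothesis $(id \otimes (m_1^A, m_2^A))(\psi) \in K_C \tmin (S_1^A \times S_2^A)$ to write
\begin{equation*}
(id \otimes (m_1^A, m_2^A))(\psi) = \sum_{i=1}^n \lambda_i \, x_C^i \otimes s_i^A,
\end{equation*}
for some $x_C^i \in K_C$, $s_i^A \in S_1^A \times S_2^A$, and coefficients $\lambda_i \geq 0$ with $\sum_i \lambda_i = 1$. Next, I would observe that the bipartite biconditional state factors as
\begin{equation*}
((m_1^C, m_2^C) \otimes (m_1^A, m_2^A))(\psi) = ((m_1^C, m_2^C) \otimes id) \bigl( (id \otimes (m_1^A, m_2^A))(\psi) \bigr),
\end{equation*}
which follows from linearity of the conditional channels (Prop.~\ref{prop:compat-Jlinear} applied on both sides) and the standard tensor-product identity $(A \otimes B) = (A \otimes id) \circ (id \otimes B)$ on elementary tensors.

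Then I would apply $((m_1^C, m_2^C) \otimes id)$ to the separable decomposition above, obtaining
\begin{equation*}
((m_1^C, m_2^C) \otimes (m_1^A, m_2^A))(\psi) = \sum_{i=1}^n \lambda_i \, (m_1^C, m_2^C)(x_C^i) \otimes s_i^A.
\end{equation*}
Since $m_1^C$ maps $K_C$ into $S_1^C$ and $m_2^C$ maps $K_C$ into $S_2^C$, the conditional channel $(m_1^C, m_2^C)$ sends every $x_C^i$ into $S_1^C \times S_2^C$. The right-hand side is therefore a convex combination of product states of the required form, and hence lies in $(S_1^C \times S_2^C) \tmin (S_1^A \times S_2^A)$.

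The step I expect to require the most care is the factorization of the joint map $(m_1^C, m_2^C) \otimes (m_1^A, m_2^A)$ as a composition of two one-sided conditional channels: one must verify that, once one tracks the identification of $A(S_1 \times S_2)^*$ as a subset of $A(S_1)^* \times A(S_2)^*$ from Subsec.~\ref{subsec:GPT-direct}, the action of the conditional channel on a product state $x_C \otimes x_A$ indeed factors as $(m_1^C, m_2^C)(x_C) \otimes (m_1^A, m_2^A)(x_A)$. Everything else is then routine convex bookkeeping and does not require the simplicial structure of the $S_j^C$ beyond the fact that the measurement channels take their values there.
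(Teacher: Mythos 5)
Your proposal is correct and follows essentially the same route as the paper's proof: decompose the non-steerable conditional state as a convex combination of product states, push $(m_1^C, m_2^C)$ through the decomposition, and note that each $(m_1^C, m_2^C)(x_C^i)$ lands in $S_1^C \times S_2^C$. The factorization step you flag as delicate is used implicitly in the paper as well, so making it explicit is a harmless refinement rather than a divergence.
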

\begin{proof}
Let
\begin{equation*}
(id \otimes (m_1^A, m_2^A))(\psi) \in K_C \tmin (S_1^A \times S_2^A),
\end{equation*}
then for $n \in \mathbb{N}$, $i \in \{1, \ldots, n\}$, there are $0 \leq \lambda_i \leq 1$, $x_i \in K_C$ and $s_i \in S_1^A \times S_2^A$, $\sum_{i=1}^n \lambda_i = 1$, such that we have
\begin{equation*}
(id \otimes (m_1^A, m_2^A))(\psi) = \sum_{i=1}^n \lambda_i x_i \otimes s_i.
\end{equation*}
We get
\begin{equation*}
((m_1^C, m_2^C) \otimes (m_1^A, m_2^A))(\psi) = \sum_{i=1}^n \lambda_i (m_1^C, m_2^C)(x_i) \otimes s_i
\end{equation*}
and since we have $(m_1^C, m_2^C)(x_i) = (m_1^C(x_i), m_2^C(x_i)) \in S_1^C \times S_2^C$ we have
\begin{equation*}
((m_1^C, m_2^C) \otimes (m_1^A, m_2^A))(\psi) \in (S_1^C \times S_2^C) \tmin (S_1^A \times S_2^A).
\end{equation*}
\end{proof}
Note that the same result would also hold for steering by the measurements $m_1^C$, $m_2^C$.

One may think that steering is somehow half of Bell nonlocality, or that it is some middle step towards Bell nonlocality as even our constructions in Sec. \ref{sec:steering} and \ref{sec:nonlocal} would point to such a result. We will show that this is not true in general, as we will provide a counter-example using quantum channels in example \ref{exm:quantnonlocal-nonlocalnotsteer}.

\section{Bell nonlocality of quantum channels} \label{sec:quantnonlocal}
Bell nonlocality of quantum measurements is a deeply studied topic in quantum theory, with several applications in various device-independent protocols \cite{Ekert-crypto, AcinBrunnerGisinMassarPironio-DIQKD, BruknerZukowskiPanZeilinger-BellComCompl, HwangSuBae-MDIQKD}, randomness generation and randomness expansion \cite{ColbeckKent-random, BischofKampermannBruss-random} and others, for a recent review on Bell nonlocality see \cite{BrunnerCavalcantiPironioScarani-nonlocal}.

Bell nonlocality of quantum channels follows very similar rules to steering by quantum channels. We will derive results specific for quantum theory in the same manners as in Sec. \ref{sec:quantsteering}.

\begin{prop} \label{prop:quantnonlocal-necessary}
Let $\rho \in \dens_{\Ha \otimes \Ha}$ and let $\Phi_1^1: \dens_\Ha \to \dens_\Ha$, $\Phi_2^1: \dens_\Ha \to \dens_\Ha$, $\Phi_1^2: \dens_\Ha \to \dens_\Ha$, $\Phi_2^2: \dens_\Ha \to \dens_\Ha$ be channels. The bipartite biconditional state $((\Phi_1^1, \Phi_2^1) \otimes (\Phi_1^2, \Phi_2^2))(\rho)$ is Bell nonlocal only if the bipartite biconditional state $((id, id) \otimes (id, id))(\rho)$ is Bell nonlocal.
\end{prop}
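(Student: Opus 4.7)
The plan is to prove the contrapositive: if the ``reference'' bipartite biconditional state $((id, id) \otimes (id, id))(\rho)$ is Bell local, then so is $((\Phi_1^1, \Phi_2^1) \otimes (\Phi_1^2, \Phi_2^2))(\rho)$. This mirrors the strategy used in Prop.~\ref{prop:quantsteering-concatenation} for steering, only now the auxiliary state must have four, rather than three, tensor factors.

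By Def.~\ref{def:nonlocal-nonlocal}, Bell locality of $((id, id) \otimes (id, id))(\rho)$ means there exists $\sigma \in \dens_{\Ha \otimes \Ha \otimes \Ha \otimes \Ha}$ with
\begin{equation*}
(J' \otimes J')(\sigma) = ((id, id) \otimes (id, id))(\rho),
\end{equation*}
where the first copy of $J'$ acts on tensor factors $1,2$ and the second copy on factors $3,4$. I would take this $\sigma$ and define
\begin{equation*}
\tilde{\sigma} = (\Phi_1^1 \otimes \Phi_2^1 \otimes \Phi_1^2 \otimes \Phi_2^2)(\sigma),
\end{equation*}
which again lies in $\dens_{\Ha \otimes \Ha \otimes \Ha \otimes \Ha}$ since the four channels tensor together to a CPTP map. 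The claim is that $\tilde\sigma$ witnesses Bell locality of $((\Phi_1^1, \Phi_2^1) \otimes (\Phi_1^2, \Phi_2^2))(\rho)$.

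The key technical step is the commutation relation $J' \circ (\Phi_a \otimes \Phi_b) = (\Phi_a, \Phi_b) \circ J'$ where on the right the pair of channels acts componentwise on the conditional state. This is immediate from the definitions: for any bipartite $\tau$, the channel $\Phi_b$ preserves normalization, so $(id \otimes 1)((\Phi_a \otimes \Phi_b)(\tau)) = \Phi_a((id \otimes 1)(\tau))$, and analogously for the other marginal. Applying this relation on both the first and second pair of tensor factors, I get
\begin{align*}
(J' \otimes J')(\tilde\sigma) &= ((\Phi_1^1, \Phi_2^1) \otimes (\Phi_1^2, \Phi_2^2)) \circ (J' \otimes J')(\sigma) \\
&= ((\Phi_1^1, \Phi_2^1) \otimes (\Phi_1^2, \Phi_2^2))(((id, id) \otimes (id, id))(\rho)) \\
&= ((\Phi_1^1, \Phi_2^1) \otimes (\Phi_1^2, \Phi_2^2))(\rho),
\end{align*}
so $((\Phi_1^1, \Phi_2^1) \otimes (\Phi_1^2, \Phi_2^2))(\rho) \in Q_{DC}$, i.e.\ it is Bell local.

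There is no real obstacle here — the proof is essentially bookkeeping once the commutation of $J'$ with product channels is stated. The only subtlety is making sure one keeps straight which tensor factors of $\sigma$ belong to the ``Charlie'' pair and which to the ``Alice'' pair when writing $\tilde\sigma$, and verifying that the componentwise interpretation of $(\Phi_1^1, \Phi_2^1)$ and $(\Phi_1^2, \Phi_2^2)$ in $A(K_{B_1} \times K_{B_2})^{*+}$ agrees with how $J'$ unpacks the output of the product channel.
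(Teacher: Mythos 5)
Your proof is correct and follows essentially the same route as the paper: prove the contrapositive by taking a witness $\sigma$ for the locality of $((id,id)\otimes(id,id))(\rho)$ and pushing it through $\Phi_1^1 \otimes \Phi_2^1 \otimes \Phi_1^2 \otimes \Phi_2^2$ to obtain the witness $\tilde{\sigma}$. The paper merely spells out the verification as the four partial-trace conditions $\Tr_{24}(\tilde{\sigma}) = (\Phi_1^1 \otimes \Phi_1^2)(\rho)$, etc., which is exactly the componentwise content of your commutation relation between $J'$ and product channels.
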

\begin{proof}
If the bipartite biconditional state $((id, id) \otimes (id, id))(\rho)$ is Bell local, then there exist $\sigma \in \dens_{\Ha \otimes \Ha \otimes \Ha \otimes \Ha}$ such that
\begin{align*}
\Tr_{24} (\sigma) &= \rho, \\
\Tr_{23} (\sigma) &= \rho, \\
\Tr_{14} (\sigma) &= \rho, \\
\Tr_{13} (\sigma) &= \rho.
\end{align*}
Let
\begin{equation*}
\tilde{\sigma} = (\Phi_1^1 \otimes \Phi_2^1 \otimes \Phi_1^2 \otimes \Phi_2^2) (\sigma)
\end{equation*}
then
\begin{align*}
\Tr_{24} (\tilde{\sigma}) &= (\Phi_1^1 \otimes \Phi_1^2 ) (\rho), \\
\Tr_{23} (\tilde{\sigma}) &= (\Phi_1^1 \otimes \Phi_2^2 ) (\rho), \\
\Tr_{14} (\tilde{\sigma}) &= (\Phi_2^1 \otimes \Phi_1^2 ) (\rho), \\
\Tr_{13} (\tilde{\sigma}) &= (\Phi_2^1 \otimes \Phi_2^2 ) (\rho).
\end{align*}
\end{proof}
Note that again we do not have to replace all of the channels by the identity channels $id$, but we may replace only some.

\begin{prop} \label{prop:quantnonlocal-unitaryIff}
Let $\rho \in \dens_{\Ha \otimes \Ha}$ and let $\Phi_1^1: \dens_\Ha \to \dens_\Ha$, $\Phi_2^1: \dens_\Ha \to \dens_\Ha$, $\Phi_1^2: \dens_\Ha \to \dens_\Ha$, $\Phi_2^2: \dens_\Ha \to \dens_\Ha$ be channels, moreover let $\Phi_1^1 = \Phi_U$ be a unitary channel given by the unitary matrix $U$, then the bipartite biconditional state $((\Phi_U, \Phi_2^1) \otimes (\Phi_1^2, \Phi_2^2))(\rho)$ is Bell nonlocal if and only if the bipartite biconditional state $((id, \Phi_2^1) \otimes (\Phi_1^2, \Phi_2^2))(\rho)$ is Bell nonlocal.
\end{prop}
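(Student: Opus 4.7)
The plan is to mirror the approach used for Proposition \ref{prop:quantnonlocal-necessary}, exploiting the fact that a unitary channel has an inverse $\Phi_{U^*}$ that is itself a channel, satisfying $\Phi_{U^*} \circ \Phi_U = id$. This lets us ``undo'' the unitary on a single subsystem of any candidate witness state.

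First I would unfold what Bell locality of the two biconditional states means concretely. Just as in the proof of Proposition \ref{prop:quantnonlocal-necessary}, the bipartite biconditional state $((\Phi_U, \Phi_2^1) \otimes (\Phi_1^2, \Phi_2^2))(\rho)$ lies in $Q_{DC}$ if and only if there exists $\sigma \in \dens_{\Ha \otimes \Ha \otimes \Ha \otimes \Ha}$ with
\begin{align*}
\Tr_{24}(\sigma) &= (\Phi_U \otimes \Phi_1^2)(\rho), & \Tr_{23}(\sigma) &= (\Phi_U \otimes \Phi_2^2)(\rho), \\
\Tr_{14}(\sigma) &= (\Phi_2^1 \otimes \Phi_1^2)(\rho), & \Tr_{13}(\sigma) &= (\Phi_2^1 \otimes \Phi_2^2)(\rho),
\end{align*}
and analogously for $((id, \Phi_2^1) \otimes (\Phi_1^2, \Phi_2^2))(\rho)$ with $id$ in place of $\Phi_U$ in the first two equations. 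I will prove the equivalent contrapositive of the stated proposition, namely that Bell locality of one biconditional state is equivalent to Bell locality of the other.

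For the forward direction, assume the first biconditional state is Bell local with witness $\sigma$ as above, and define
\begin{equation*}
\tilde{\sigma} = (\Phi_{U^*} \otimes id \otimes id \otimes id)(\sigma).
\end{equation*}
Since $\Phi_{U^*}$ is itself a quantum channel, $\tilde{\sigma} \in \dens_{\Ha \otimes \Ha \otimes \Ha \otimes \Ha}$. For the marginals over subsystems 2 and 4, and over 2 and 3, the partial trace commutes with the local action of $\Phi_{U^*}$ on subsystem 1, so $\Phi_{U^*}$ composes with $\Phi_U$ to give the identity, yielding $(id \otimes \Phi_1^2)(\rho)$ and $(id \otimes \Phi_2^2)(\rho)$ respectively. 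For the two marginals where subsystem 1 is traced out, the $\Phi_{U^*}$ disappears (since $\Phi_{U^*}$ is trace-preserving), leaving $\Tr_{14}(\tilde\sigma) = \Tr_{14}(\sigma)$ and $\Tr_{13}(\tilde\sigma) = \Tr_{13}(\sigma)$, which already have the desired form. This shows the second biconditional state is Bell local.

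The reverse direction is symmetric: starting from a witness $\tilde{\sigma}$ for Bell locality of $((id, \Phi_2^1) \otimes (\Phi_1^2, \Phi_2^2))(\rho)$, apply $\Phi_U$ on subsystem 1 to obtain a witness for the unitary version. Both implications combined give the ``only if'' / ``if'' equivalence at the level of Bell locality, hence the claimed equivalence at the level of Bell nonlocality. There is no real obstacle here; the only thing to be careful about is checking that partial traces commute correctly with the local unitary and that $\Phi_{U^*}$ is indeed a valid channel, both of which are standard. The same scheme evidently extends to replacing any unitary channel among the four by the identity, a remark analogous to the one following Proposition \ref{prop:quantnonlocal-necessary}.
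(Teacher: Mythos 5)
Your proposal is correct and follows essentially the same route as the paper: characterize Bell locality via a four-partite witness state with the four prescribed marginals, apply $\Phi_{U^*}\otimes id\otimes id\otimes id$ to undo the unitary on the first subsystem, and note the reverse direction by symmetry (which the paper leaves implicit). No gaps.
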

\begin{proof}
Using the very same idea as before, if the  bipartite biconditional state $((\Phi_U, \Phi_2^1) \otimes (\Phi_1^2, \Phi_2^2))(\rho)$ is Bell local, then there is $\sigma \in \dens_{\Ha \otimes \Ha \otimes \Ha \otimes \Ha}$ such that
\begin{align*}
\Tr_{24} (\sigma) &= (\Phi_U \otimes \Phi_1^2 ) (\rho), \\
\Tr_{23} (\sigma) &= (\Phi_U \otimes \Phi_2^2 ) (\rho), \\
\Tr_{14} (\sigma) &= (\Phi_2^1 \otimes \Phi_1^2 ) (\rho), \\
\Tr_{13} (\sigma) &= (\Phi_2^1 \otimes \Phi_2^2 ) (\rho).
\end{align*}
Let
\begin{equation*}
\tilde{\sigma} = (\Phi_{U^*} \otimes id \otimes id \otimes id) (\sigma)
\end{equation*}
then we get
\begin{align*}
\Tr_{24} (\tilde{\sigma}) &= (id \otimes \Phi_1^2 ) (\rho), \\
\Tr_{23} (\tilde{\sigma}) &= (id \otimes \Phi_2^2 ) (\rho), \\
\Tr_{14} (\tilde{\sigma}) &= (\Phi_2^1 \otimes \Phi_1^2 ) (\rho), \\
\Tr_{13} (\tilde{\sigma}) &= (\Phi_2^1 \otimes \Phi_2^2 ) (\rho).
\end{align*}
\end{proof}
One may obtain similar results if some other of the channels $\Phi_1^1, \Phi_2^1, \Phi_1^2, \Phi_2^2$ is unitary as well as if more or even all of them are unitary.

The most iconic and most studied aspect of Bell nonlocality are the Bell inequalities. We are going to present a version of CHSH inequality for quantum channels. Assume that $\dH = 2$ and let $|0\>, |1\>$ denote any orthonormal basis of $\Ha$. We will use the shorthand $|00\> = |0\> \otimes |0\>$. Let $i, j \in \{1, 2\}$ and let
\begin{align*}
E(\Phi_i^1, \Phi_j^2) &= \<00| (\Phi_i^1 \otimes \Phi_j^2 )(\rho) |00\> \\
&- \<01| (\Phi_i^1 \otimes \Phi_j^2 )(\rho) |01\> \\
&- \<10| (\Phi_i^1 \otimes \Phi_j^2 )(\rho) |10\> \\
&+ \<11| (\Phi_i^1 \otimes \Phi_j^2 )(\rho) |11\> \\
&= \Tr ( (\Phi_i^1 \otimes \Phi_j^2 )(\rho) A )
\end{align*}
where
\begin{equation*}
A = |00\>\<00| - |01\>\<01| - |10\>\<10| + |11\>\<11|.
\end{equation*}
The quantity $E(\Phi_i^1, \Phi_j^2)$ is to be interpreted as the correlation between the marginals $\Tr_1((\Phi_i^1 \otimes \Phi_j^2 )(\rho))$ and $\Tr_2((\Phi_i^1 \otimes \Phi_j^2 )(\rho))$. Since we have $-\I \leq A \leq \I$ it is straightforward that we have $-1 \leq E(\Phi_i^1, \Phi_j^2) \leq 1$. Define a quantity
\begin{equation*}
X_\rho = E(\Phi_1^1, \Phi_1^2) + E(\Phi_1^1, \Phi_2^2) + E(\Phi_2^1, \Phi_1^2) - E(\Phi_2^1, \Phi_2^2),
\end{equation*}
we will show that $X_\rho$ corresponds to the quantity used in CHSH inequality. It is straightforward to see that $-4 \leq X_\rho \leq 4$ is the algebraic bound on $X_\rho$.

\begin{prop} \label{prop:quantnonlocal-ineq}
If the biconditional bipartite state $((\Phi_1^1, \Phi_2^1) \otimes (\Phi_1^2, \Phi_2^2) )(\rho)$ is Bell local, then we have $-2 \leq X_\rho \leq 2$.
\end{prop}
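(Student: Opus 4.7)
The plan is to exploit the fact that Bell locality of the biconditional state is equivalent, in quantum theory, to the existence of a single four-partite density operator whose four pairwise marginals reproduce the four pieces $(\Phi_i^1\otimes\Phi_j^2)(\rho)$; once that joint state $\sigma$ is in hand, the four correlators can be rewritten as expectation values in $\sigma$ of four commuting $\pm 1$-valued observables, and the algebraic CHSH identity finishes the job.

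First, I would unfold the definition of Bell locality in the quantum case. By Def.~\ref{def:nonlocal-nonlocal} and the identification of $Q_{DC}$ with $(J'\otimes J')$ applied to $\dens_{\Ha\otimes\Ha\otimes\Ha\otimes\Ha}$, Bell locality means there exists $\sigma\in\dens_{\Ha^{\otimes 4}}$ such that
\begin{align*}
\Tr_{24}(\sigma) &= (\Phi_1^1\otimes\Phi_1^2)(\rho), \\
\Tr_{23}(\sigma) &= (\Phi_1^1\otimes\Phi_2^2)(\rho), \\
\Tr_{14}(\sigma) &= (\Phi_2^1\otimes\Phi_1^2)(\rho), \\
\Tr_{13}(\sigma) &= (\Phi_2^1\otimes\Phi_2^2)(\rho).
\end{align*}

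Second, I would observe that the operator $A=|00\>\<00|-|01\>\<01|-|10\>\<10|+|11\>\<11|$ factorizes as $A=Z\otimes Z$ with $Z=|0\>\<0|-|1\>\<1|$. Labelling the four tensor factors of $\Ha^{\otimes 4}$ by $1,2,3,4$, where $1,2$ correspond to the outputs of $\Phi_1^1,\Phi_2^1$ and $3,4$ to those of $\Phi_1^2,\Phi_2^2$, the partial-trace identities above let me rewrite each correlator as
\begin{equation*}
E(\Phi_i^1,\Phi_j^2) = \Tr(\sigma\, Z_{i}\,Z_{j+2}),
\end{equation*}
where $Z_k$ denotes $Z$ on the $k$-th factor tensored with identities elsewhere. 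Consequently
\begin{equation*}
X_\rho = \Tr\bigl(\sigma\, B\bigr), \qquad B = Z_1(Z_3+Z_4)+Z_2(Z_3-Z_4).
\end{equation*}

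Third, I would bound the norm of $B$. Since $Z_1,Z_2,Z_3,Z_4$ are mutually commuting self-adjoint operators with spectrum $\{-1,+1\}$, they admit a common eigenbasis $\{|z_1z_2z_3z_4\>:z_k=\pm1\}$ on which $B$ acts as the scalar $z_1(z_3+z_4)+z_2(z_3-z_4)$. For any choice of signs, exactly one of $z_3+z_4$ and $z_3-z_4$ is zero and the other equals $\pm 2$, so this scalar belongs to $\{-2,+2\}$. Hence $-2\,\I\leq B\leq 2\,\I$, and the fact that $\sigma$ is a state gives $-2\leq \Tr(\sigma B)\leq 2$, i.e.\ $-2\leq X_\rho\leq 2$.

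I do not anticipate a genuine obstacle: the only conceptual step is recognising that Bell locality in the sense of Def.~\ref{def:nonlocal-nonlocal} supplies precisely the joint state $\sigma$ which plays the role of a ``local hidden-state'' model on the output side, and after that the argument is the standard operator form of CHSH. The one place I would be careful is writing the marginal identities cleanly enough that the observables attached to the four correlators really are commuting $\pm 1$-valued operators on disjoint tensor factors, since that is what makes the spectral bound $\|B\|\leq 2$ work.
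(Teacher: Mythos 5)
Your proposal is correct and follows essentially the same route as the paper's proof: Bell locality is unpacked into a single four-partite state $\sigma$ whose partial traces $\Tr_{24},\Tr_{23},\Tr_{14},\Tr_{13}$ reproduce the four bipartite outputs, each correlator is rewritten as an expectation value in $\sigma$, and the resulting operator is bounded by $2$. The only cosmetic difference is that you package the final bound as the spectral statement $-2\I \leq Z_1(Z_3+Z_4)+Z_2(Z_3-Z_4) \leq 2\I$, whereas the paper expands the same operator explicitly as twice a sum of computational-basis projectors with coefficients $\pm 1$; the two are identical in content.
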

\begin{proof}
If the biconditional bipartite state $((\Phi_1^1, \Phi_2^1) \otimes (\Phi_1^2, \Phi_2^2) )(\rho)$ is Bell local then there is $\sigma \in \dens_{\Ha \otimes \Ha \otimes \Ha \otimes \Ha}$ such that
\begin{align*}
\Tr_{24} (\sigma) &= (\Phi_1^1 \otimes \Phi_1^2 ) (\rho), \\
\Tr_{23} (\sigma) &= (\Phi_1^1 \otimes \Phi_2^2 ) (\rho), \\
\Tr_{14} (\sigma) &= (\Phi_2^1 \otimes \Phi_1^2 ) (\rho), \\
\Tr_{13} (\sigma) &= (\Phi_2^1 \otimes \Phi_2^2 ) (\rho).
\end{align*}
This yields
\begin{align*}
E(\Phi_1^1, \Phi_1^2) &= \Tr ( (\Phi_1^1 \otimes \Phi_1^2 ) (\rho) A ) =  \Tr ( \Tr_{24} (\sigma) A ) \\
&= \Tr ( \sigma ( |0\>\<0| \otimes \I \otimes |0\>\<0| \otimes \I \\
&- |0\>\<0| \otimes \I \otimes |1\>\<1| \otimes \I \\
&- |1\>\<1| \otimes \I \otimes |0\>\<0| \otimes \I \\
&+ |1\>\<1| \otimes \I \otimes |1\>\<1| \otimes \I ) ).
\end{align*}
In the same manner we get
\begin{align*}
E(\Phi_1^1, \Phi_2^2) = \Tr ( \sigma ( &|0\>\<0| \otimes \I \otimes \I \otimes |0\>\<0| \\
- &|0\>\<0| \otimes \I \otimes \I \otimes |1\>\<1| \\
- &|1\>\<1| \otimes \I \otimes \I \otimes |0\>\<0| \\
+ &|1\>\<1| \otimes \I \otimes \I \otimes |1\>\<1| ) ),
\end{align*}
\begin{align*}
E(\Phi_2^1, \Phi_1^2) = \Tr ( \sigma (  &\I \otimes |0\>\<0| \otimes |0\>\<0| \otimes \I \\
- &\I \otimes |0\>\<0| \otimes |1\>\<1| \otimes \I \\
- &\I \otimes |1\>\<1| \otimes |0\>\<0| \otimes \I \\
+ &\I \otimes |1\>\<1| \otimes |1\>\<1| \otimes \I ) )
\end{align*}
and
\begin{align*}
E(\Phi_2^1, \Phi_2^2) = \Tr ( \sigma ( &\I \otimes |0\>\<0| \otimes \I \otimes |0\>\<0| \\
- &\I \otimes |0\>\<0| \otimes \I \otimes |1\>\<1| \\
- &\I \otimes |1\>\<1| \otimes \I \otimes |0\>\<0| \\
+ &\I \otimes |1\>\<1| \otimes \I \otimes |1\>\<1| ) ).
\end{align*}
Together we get
\begin{align*}
X_\rho = 2 \Tr( \sigma (& |0000\>\<0000| + |0001\>\<0001| \\
- &|0010\>\<0010| - |0011\>\<0011| \\
+ &|0100\>\<0100| - |0101\>\<0101| \\
+ &|0110\>\<0110| - |0111\>\<0111| \\
- &|1000\>\<1000| + |1001\>\<1001| \\
- &|1010\>\<1010| + |1011\>\<1011| \\
- &|1100\>\<1100| - |1101\>\<1101| \\
+ &|1110\>\<1110| + |1111\>\<1111| ) )
\end{align*}
that implies $-2 \leq X_\rho \leq 2$.
\end{proof}

At this point one may ask whether there exists an equivalent of Tsirelson bound \cite{Cirelson-bound} for the inequality given by Prop. \ref{prop:quantnonlocal-ineq}, or what is the maximum violation of the aforementioned inequality. We will show that the Tsirelson bound $2\sqrt{2}$ is both reachable and maximum violation by quantum channels.
\begin{prop}
For any state $\rho \in \dens_{\Ha \otimes \Ha}$ and any four channels $\Phi_1^1: \dens_\Ha \to \dens_\Ha$, $\Phi_2^1: \dens_\Ha \to \dens_\Ha$, $\Phi_1^2: \dens_\Ha \to \dens_\Ha$, $\Phi_2^2: \dens_\Ha \to \dens_\Ha$ we have
\begin{equation*}
X_\rho \leq 2 \sqrt{2}.
\end{equation*}
\end{prop}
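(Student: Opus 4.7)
The plan is to reduce the inequality to an operator-norm estimate of the ``channel CHSH operator'' $T$ for which $X_\rho = \Tr(\rho\, T)$, and then to invoke the standard Tsirelson-style argument. First, observe that the matrix $A$ factorises as $A = Z \otimes Z$ with $Z = |0\>\<0| - |1\>\<1|$. Passing to Heisenberg duals gives
\begin{equation*}
E(\Phi_i^1, \Phi_j^2) = \Tr(\rho\,(A_i \otimes B_j)),
\qquad A_i := (\Phi_i^1)^*(Z),\; B_j := (\Phi_j^2)^*(Z),
\end{equation*}
so that $X_\rho = \Tr(\rho\, T)$ with $T := A_1 \otimes (B_1+B_2) + A_2 \otimes (B_1-B_2)$. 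Since $\|\rho\|_1 = 1$, it suffices to prove $\|T\| \leq 2\sqrt{2}$.

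The second step controls the ``observables'' $A_i, B_j$. Each $\Phi_i^k$ is CPTP, so its Heisenberg dual is completely positive and unital. The operators $\I \pm Z$ are positive, hence $\I \pm A_i = (\Phi_i^1)^*(\I \pm Z) \geq 0$ and likewise $\I \pm B_j \geq 0$. Thus $-\I \leq A_i \leq \I$ and $-\I \leq B_j \leq \I$; in particular $A_i^2 \leq \I$ and $B_j^2 \leq \I$.

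The third step is a short computation. For any unit vector $|\phi\>$, the elementary inequality $\|u+v\|^2 \leq 2\|u\|^2 + 2\|v\|^2$ gives
\begin{equation*}
\|T|\phi\>\|^2 \leq 2\|A_1 \otimes (B_1+B_2)|\phi\>\|^2 + 2\|A_2 \otimes (B_1-B_2)|\phi\>\|^2.
\end{equation*}
Using the order inequality $A_i^2 \otimes C^2 \leq \I \otimes C^2$, valid for any self-adjoint $C$ since $\I - A_i^2 \geq 0$, the right-hand side is bounded by
\begin{equation*}
2\<\phi|\,\I \otimes \bigl[(B_1+B_2)^2 + (B_1-B_2)^2\bigr]\,|\phi\>
= 4\<\phi|\,\I \otimes (B_1^2 + B_2^2)\,|\phi\> \leq 8,
\end{equation*}
using the parallelogram identity and $B_j^2 \leq \I$. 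Hence $\|T\| \leq 2\sqrt{2}$, giving $X_\rho \leq |\Tr(\rho T)| \leq 2\sqrt{2}$.

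The only slightly subtle point is invoking the order-preserving property of the dual channel at $Z$, which is not itself positive; this is handled by applying the dual to the positive effects $\I \pm Z$ and using unitality. Everything else parallels the textbook Tsirelson bound, the only adjustment being that the ``observables'' here satisfy only $A_i^2 \leq \I$ rather than $A_i^2 = \I$, which is exactly what the Cauchy--Schwarz step above accommodates.
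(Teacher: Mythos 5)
Your proposal is correct, and it follows a genuinely different route in its second half. The paper also passes to the Heisenberg picture, but it does so at the level of effects: it sets $M_i^1 = \Phi_i^{1*}(|0\>\<0|)$, $M_j^2 = \Phi_j^{2*}(|0\>\<0|)$, observes that $E(\Phi_i^1,\Phi_j^2)$ equals the ordinary two-outcome-measurement correlation $E(M_i^1,M_j^2)$, and then simply cites the known Tsirelson bound for measurement correlations. You instead use the factorization $A = Z\otimes Z$ with $Z = 2|0\>\<0|-\I$, pull $Z$ back through the dual channels to get self-adjoint contractions $A_i,B_j$ (your handling of the non-positive $Z$ via $\I\pm Z\geq 0$ and unitality is exactly right), assemble the CHSH operator $T = A_1\otimes(B_1+B_2)+A_2\otimes(B_1-B_2)$, and prove $\Vert T\Vert\leq 2\sqrt{2}$ from scratch using $A_i^2\leq\I$, the operator inequality $A_i^2\otimes C^2\leq\I\otimes C^2$, and the parallelogram identity; the bound $X_\rho=\Tr(\rho T)\leq\Vert T\Vert$ then finishes. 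Your argument buys self-containedness (no appeal to the literature, and it transparently accommodates observables that are only contractions rather than $\pm1$-valued, which is precisely the situation produced by dual channels), at the cost of a page of operator estimates; the paper's reduction is shorter and makes explicit the conceptual point that channel correlations of this form are a special case of effect correlations, so no violation beyond the measurement Tsirelson bound is possible.
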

\begin{proof}
We define the adjoint channel $\Phi_1^{1*}$ to channel $\Phi_1^1$ as the the linear map $\Phi_1^{1*}: B_h(\Ha) \to B_h(\Ha)$ such that for all $\sigma \in \dens_\Ha$ and $E \in B_h(\Ha)$, $0 \leq E \leq \I$ we have
\begin{equation*}
\Tr( \Phi_1^1 (\sigma) E) = \Tr( \sigma \Phi_1^{1*}(E) ).
\end{equation*}
Since $\Phi_1^1$ is a channel we have $0 \leq \Phi_1^{1*}(E) \leq \I$ and $\Phi_1^{1*}(\I) = \I$. This approach of mapping effects instead of states is called the Heisenberg picture.

Let $i, j \in \{1, 2\}$, then we have
\begin{equation*}
\Tr( (\Phi^1_i \otimes \Phi^2_j)(\rho) |00\>\<00|) = \Tr( \rho \Phi_i^{1*}(|0\>\<0|) \otimes \Phi_j^{2*}(|0\>\<0|)).
\end{equation*}
Denoting
\begin{align*}
M_i^1 &= \Phi_i^{1*}(|0\>\<0|)  \\
M_j^2 &= \Phi_j^{1*}(|0\>\<0|)  \\
\end{align*}
we see that we have
\begin{align*}
E(\Phi_i^1, \Phi_j^2) &= \Tr( \rho M_i^1 \otimes M_j^2) - \Tr( \rho (\I-M_i^1) \otimes M_j^2) \\
&- \Tr( \rho M_i^1 \otimes (\I-M_j^2)) \\
&+ \Tr( \rho (\I-M_i^1) \otimes (\I-M_j^2)) \\
&= E(M_i^1, M_j^2),
\end{align*}
where $E(M_i^1, M_j^2)$ is a correlation for the two-outcome measurements given by the effects $M_i^1$ and $M_j^2$. It is well known result \cite{Cirelson-bound} that we always have
\begin{align*}
E(M_1^1, M_1^2) + E(M_1^1, M_2^2)& \\
+ E(M_2^1, M_1^2) - E(M_2^1, M_2^2)& \leq 2 \sqrt{2}.
\end{align*}
\end{proof}

It is very intuitive that the Tsirelson bound, reachable by measurements, will be also reachable by channels. To prove this, let $M, N \in B_h(\Ha)$, $0 \leq M \leq \I$, $0 \leq N \leq \I$ and define channels $\Phi_M: B_h(\Ha) \to B_h(\Ha)$, $\Phi_N: B_h(\Ha) \to B_h(\Ha)$ such that for $\sigma \in \dens_\Ha$ we have
\begin{align*}
\Phi_M(\sigma) &= \Tr(\sigma M) |0\>\<0| +  \Tr(\sigma (\I-M)) |1\>\<1|, \\
\Phi_N(\sigma) &= \Tr(\sigma N) |0\>\<0| +  \Tr(\sigma (\I-N)) |1\>\<1|.
\end{align*}
It is easy to verify that the maps $\Phi_M$, $\Phi_N$ are quantum channels and that they are also measurements as they map the state space $\dens_\Ha$ to the simplex $\conv\{ |0\>\<0|, |1\>\<1| \}$. Let $\rho \in \dens_{\Ha \otimes \Ha}$, then we have
\begin{align*}
\Tr( (\Phi_M \otimes \Phi_N) (\rho) A ) &= \Tr( \rho (M \otimes N - (\I - M) \otimes N) \\
&- \Tr( \rho (M \otimes (\I - N) ) ) \\
&+ \Tr( \rho ((\I - M) \otimes (\I - N) ) ) \\
&= E(M, N).
\end{align*}
This proves that any set of correlations and any violation of CHSH inequality reachable by measurements is also reachable by quantum channels as a violation of the bound given by Prop. \ref{prop:quantnonlocal-ineq}.

To generalize the proposed inequality one may replace the projectors $|0\>\<0|$ and $|1\>\<1|$ by any pair of effects $M, N \in B_h(\Ha)$, $0 \leq M \leq \I$, $0 \leq N \leq \I$ and have
\begin{equation*}
A = M \otimes N - (\I - M) \otimes N - M \otimes (\I - N) + (\I - M) \otimes (\I - N).
\end{equation*}

From now on we will consider a special case. Keep $\dH = 2$ and let
\begin{equation*}
|\psi^+\>\<\psi^+| = \dfrac{1}{2} ( |00\>\<00| + |11\>\<00| + |00\>\<11| + |11\>\<11| )
\end{equation*}
be the maximally entangled state, let $U_1$, $U_2$, $V_1$, $V_2$ be unitary matrices and let $\Phi_1^1 = \Phi_{U_1}$, $\Phi_2^1 = \Phi_{U_2}$, $\Phi_1^2 = \Phi_{V_1}$, $\Phi_2^2 = \Phi_{V_2}$ be unitary channels given by the respective unitary matrices. We will consider the bipartite biconditional state $((\Phi_{U_1}, \Phi_{U_2}) \otimes (\Phi_{V_1}, \Phi_{V_2}))(|\psi^+\>\<\psi^+|)$ and we will show that the correlations for the given bipartite biconditional state are of a particular nice form. We have
\begin{equation*}
(\Phi_{U_i} \otimes \Phi_{V_j})(|\psi^+\>\<\psi^+|) = (id \otimes \Phi_{V_j U_i^T})(|\psi^+\>\<\psi^+|)
\end{equation*}
where $i, j \in \{1, 2\}$ and for $U^T$ denotes the transpose of the matrix $U$. For the correlation we have
\begin{align}
E(\Phi_{U_i}, \Phi_{V_j}) &= \Tr( (id \otimes \Phi_{V_j U_i^T})(|\psi^+\>\<\psi^+|) A ) \nonumber \\
&= \dfrac{1}{2} ( |\<0| V_j U_i^T |0\>|^2 + |\<1| V_j U_i^T |1\>|^2 \nonumber \\
&- |\<0| V_j U_i^T |1\>|^2 - |\<1| V_j U_i^T |0\>|^2 ). \label{eq:quantnonlocal-unitaryMEstate}
\end{align}

\begin{figure}
\includegraphics[width=\linewidth]{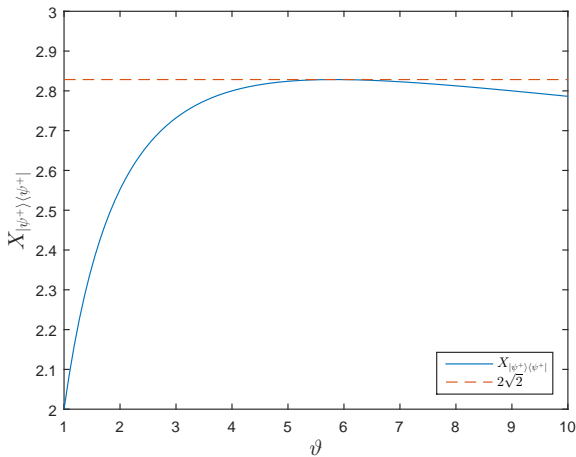}
\caption{The blue solid line is $X_{|\psi^+\>\<\psi^+|}$ as a function of the parameter $\vartheta \in [1, 10]$ when we consider the bipartite biconditional state $(\Phi_{U_1}, \Phi_{U_2}) \otimes (\Phi_{V_1}, \Phi_{V_2})(|\psi^+\>\<\psi^+|)$ from example \ref{exm:quantnonlocal-unit-num}. The red dashed line corresponds to the Tsirelson bound $2 \sqrt{2}$.} \label{fig:quantnonlocal-unitary}
\end{figure}

We will provide an example of a violation of the bound given by Prop. \ref{prop:quantnonlocal-ineq} by incompatible unitary channels.
\begin{exm} \label{exm:quantnonlocal-unit-num}
Let $\dH = 2$ and let $\vartheta \in \mathbb{R}$ be a parameter. Let $U_1$, $U_2$, $V_1$, $V_2$ be unitary matrices given as
\begin{align*}
U_1 &= \dfrac{1}{\sqrt{2}}
\begin{pmatrix}
1 && 1 \\
1 && -1
\end{pmatrix}, \\
U_2 &=
\begin{pmatrix}
1 && 0 \\
0 && 1
\end{pmatrix}, \\
V_1 &= \dfrac{1}{\sqrt{1 + \vartheta}}
\begin{pmatrix}
\sqrt{\vartheta} && 1 \\
1 && -\sqrt{\vartheta}
\end{pmatrix}, \\
V_2 &= \dfrac{1}{\sqrt{1 + \vartheta}}
\begin{pmatrix}
1 && \sqrt{\vartheta} \\
\sqrt{\vartheta} && -1
\end{pmatrix}.
\end{align*}
Consider the bipartite biconditional state $(\Phi_{U_1}, \Phi_{U_2}) \otimes (\Phi_{V_1}, \Phi_{V_2})(|\psi^+\>\<\psi^+|)$. Using Eq. \eqref{eq:quantnonlocal-unitaryMEstate} we can obtain $X_{|\psi^+\>\<\psi^+|}$ as a function of $\vartheta$. The function is plotted in Fig. \ref{fig:quantnonlocal-unitary}, where it is shown that for certain values of $\vartheta$ the bipartite biconditional state violates the bound given by Prop. \ref{prop:quantnonlocal-ineq}.

It is also easy to see that the bipartite biconditional state $((id, id) \otimes (id, id))(|\psi^+\>\<\psi^+|)$ does not violate the bound given by Prop. \ref{prop:quantnonlocal-ineq}, because all of the correlations are the same, yet according to Prop. \ref{prop:quantnonlocal-unitaryIff} we know that it must be a Bell nonlocal bipartite biconditional state. This shows that not all Bell nonlocal bipartite biconditional states violate the inequality given by Prop. \ref{prop:quantnonlocal-ineq}.
\end{exm}

One may wonder whether there is or is not a connection between steering and Bell nonlocality. As we have already showed in Prop. \ref{prop:measnonlocal-steering4nonlocal}, for measurements Bell nonlocality implies steering. We will show that for channels the same does not hold.
\begin{exm} \label{exm:quantnonlocal-nonlocalnotsteer}
Let $\dH = 2$. Let $\rho_W \in \dens_{\Ha \otimes \Ha}$ be given as in example \ref{exm:quantsteering-W} as a partial trace over the state $|W\>\<W|$. We already know that the state $\rho_W$ is not steerable by any pair of channels. Consider the bipartite biconditional state $((id, id) \otimes (id, id))(\rho_W)$, if it is Bell local, then there must be a state $\sigma \in \dens_{\Ha \otimes \Ha \otimes \Ha \otimes \Ha}$ such that
\begin{equation*}
\Tr_{13}(\sigma) = \Tr_{14} (\sigma) = \Tr_{23}(\sigma) = \Tr_{24} (\sigma) = \rho_W.
\end{equation*}
Observe that $\Tr_1(\sigma) \in \dens_{\Ha \otimes \Ha \otimes \Ha}$ is such that $\Tr_3 (\Tr_1(\sigma)) = \Tr_4 (\Tr_1(\sigma)) = \rho_W$ which implies that, according to our calculations in example \ref{exm:quantsteering-W}, we must have
\begin{equation*}
\Tr_1(\sigma) = |W\>\<W|.
\end{equation*}
According to \cite[Lemma 3]{BarnumBarretLeiferWilce-noBroadcast} this implies that there is a state $\rho \in \dens_\Ha$ such that $\sigma = \rho \otimes |W\>\<W|$. This implies that we have $\Tr_{23} (\sigma) = \rho \otimes \dfrac{1}{3} (2 |0\>\<0| + |1\>\<1|)$ which is clearly a separable state. This is a contradiction as we should have had $\Tr_{23} (\sigma) = \rho_W$, which is an entangled state.
\end{exm}

\section{Conclusions} \label{sec:conclusions}
We have introduced the general definition of compatibility of channels in general probabilistic theory through the idea of conditional channels. We have also shown that a naive idea for a compatibility test leads to a simple and straightforward formulation of steering and Bell nonlocality. These formulations of steering and Bell nonlocality are overall new even when we consider only measurements instead of channels. Throughout the paper we have shown that all of our definitions and result are in correspondence with the known result for measurements and we have also provided several examples and results about the introduced concepts in quantum theory.

The paper has opened several new questions and areas of research. For example, a possible area of research would be to look at the structure of conditional states and conditional channels and to try to connect them to Bayesian theory.

Concerning the compatibility of channels, one may formulate different notions of degree of (in)compatibility or of robustness of compatibility in general probabilistic theory and look at their properties, in a similar way as it was already done in quantum theory \cite{Haapasalo-robustness}. For quantum channels one may wonder which types of channels are compatible. This would generalize the no broadcasting theorem \cite{BarnumHowardBarretLeifer-noBroadcast, BarnumBarretLeiferWilce-noBroadcast} which states that two unitary channels can not be compatible.

One may also consider our formulations of steering and Bell nonlocality as a case of the problem of finding a multipartite state with given marginals. Such problems were studied in recent years \cite{Klyachko-marginals, WyderkaHuberGuhne-marginals}, but not in the form that would be applicable to the problems of steering and Bell nonlocality as incompatibility tests. This opens questions whether one may characterize the structure of the cone $Q_{CD}$ and of other cones of interest in quantum theory. From a geometrical viewpoint this question is closely tied to the question of existence of other Bell inequalities for channels than the one we presented. Existence and exact form of the generalized Bell inequalities is also a very interesting possible area of research.

We may also consider the use of steering and Bell nonlocality of channels in the context of quantum information theory and quantum communication. Both steering and Bell nonlocality of measurement were used to formulate new quantum protocols and it is of great interest whether exploiting the steering and Bell nonlocality of channels may lead to even better or more useful applications.

One may also try and clarify the lack of connection between steering and Bell nonlocality of channels. As we have showed in example \ref{exm:quantnonlocal-nonlocalnotsteer}, even if two channels can not steer a state, when applied to both parts of the state the resulting biconditional bipartite state may be Bell nonlocal. This may even have interesting applications in quantum theory of information as so far steering has been considered to lead to one-side device-independent protocols that were seen as a middle step between the original protocol and device-independent protocol.

It may also be interesting to consider the resource theories of channel incompatibility, of steering by channels and of Bell nonlocality of channels. Several similar resource theories were already constructed, see \cite{HorodeckiOppenheim-resource} for a review.

\begin{acknowledgments}
The author is thankful to Anna Jen\v{c}ov\'{a}, Michal Sedl\'{a}k, M\'{a}rio Ziman, Daniel Reitzner and Tom Bullock for interesting conversations on the topic of compatibility. This research was supported by grant VEGA 2/0069/16 and by the grant of the Slovak Research and Development Agency under contract APVV-16-0073. The author acknowledges that this research was done during a PhD study at Faculty of Mathematics, Physics and Informatics of the Comenius University in Bratislava.
\end{acknowledgments}

\bibliography{citations}

\end{document}